\documentclass{article}

\usepackage[utf8]{inputenc}

\title{Fast and Robust Information Spreading in the Noisy $\pull$ Model}

\author{
	Niccolò D'Archivio\thanks{INRIA, COATI, Université Côte d'Azur, Sophia-Antipolis.} 
	\and
	Amos Korman\thanks{Department of Computer Science, University of Haifa, Haifa, Israel.}
	\and
	Emanuele Natale\textsuperscript{*}
	\and
	Robin Vacus\thanks{BIDSA, Bocconi University, Milan, Italy.}
}


\date{}


\usepackage[margin=1in]{geometry}
\usepackage{amsmath,amsthm,amsfonts}		
\usepackage{graphicx}
\usepackage{mathtools}
\usepackage{dsfont}
\usepackage{bbold}
\usepackage[usenames,dvipsnames]{xcolor}
\usepackage{enumitem}
\setlist[itemize]{itemsep=3pt, topsep=1pt, parsep=0pt, partopsep=1pt}
\setlist[enumerate]{itemsep=3pt, topsep=1pt, parsep=0pt, partopsep=1pt}
\setlist[description]{itemsep=3pt, topsep=1pt, parsep=0pt, partopsep=1pt}
\usepackage{hyperref}						
\hypersetup{
	hidelinks,
    colorlinks = true,
    linkcolor = Red,
    citecolor = Red
}
\usepackage[ruled,vlined]{algorithm2e}
\newcommand{\alginput}[1]{{\bf Input}: #1 \\ \vspace{3mm}}

\usepackage[nameinlink]{cleveref}			

\newlist{alist}{enumerate}{1}
\setlist[alist,1]{label=(A\arabic*)}

\newtheorem{theorem}{Theorem}
\newtheorem{lemma}[theorem]{Lemma}
\newtheorem*{lemma*}{Lemma}
\newtheorem{claim}[theorem]{Claim}
\newtheorem{proposition}[theorem]{Proposition}
\newtheorem{corollary}[theorem]{Corollary}

\theoremstyle{definition}
\newtheorem*{remark}{Remark}
\newtheorem{definition}[theorem]{Definition}

\crefname{claim}{Claim}{Claims}
\crefname{lemma}{Lemma}{Lemmas}
\crefname{observation}{Observation}{Observations}
\crefname{equation}{Eq.}{Eqs.}

\newcommand{\pa}[1]{\left( #1 \right)}
\DeclareMathOperator{\bbN}{\mathbb{N}}
\DeclareMathOperator{\bbR}{\mathbb{R}}
\DeclareMathOperator{\bbE}{\mathbb{E}}

\global\long\def\pull{\mathcal{PULL}}
\global\long\def\push{\mathcal{PUSH}}
\renewcommand{\Pr}{\mathbb{P}}

\newcommand{\binomial}{\mathrm{Binomial}}

\newcommand{\argmax}{\mathrm{argmax}}

\newcommand{\Y}{\tilde{Y}}

\definecolor{myOrange}{rgb}{0.7,0.3,0}

\usepackage{xspace}
\newcommand{\SF}{\hyperref[sec:SF]{SF}\xspace}
\newcommand{\SSF}{\hyperref[sec:SSF]{SSF}\xspace}

\newcommand{\func}[1]{\left[ \begin{array}{c} #1 \end{array} \right]}
\newcommand{\Rad}{\mathrm{Rad}}

\usepackage{float}
\usepackage{makecell} 
\usepackage{dsfont} 


\begin{document}

\maketitle

\begin{abstract}
Understanding how information can efficiently spread in distributed systems under stochastic and noisy communication conditions is a fundamental question in both biological research and artificial system design. When the communication pattern is stable, allowing agents to control whom they interact with, noise in communication can often be mitigated through redundancy or more sophisticated coding techniques. In contrast, previous work has shown that noisy communication has fundamentally different consequences on well-mixed systems. Specifically, Boczkowski et al.~(2018) considered the noisy $\pull(h)$ model, in which in each (parallel) round, each agent passively receives observations of the messages held by $h$ randomly chosen agents, where each message can be viewed as any other message in the alphabet $\Sigma$ with probability~$\delta$. The authors proved that in this model, the basic task of propagating a  bit value from a single source to the whole population requires $\Omega(\frac{n\delta}{h(1-\delta|\Sigma|)^2})$ rounds. For example, for one-to-one interactions ($h=1$) and constant~$\delta$, this lower bound is exponentially higher than the time required to reliably spread information over a stable complete network, thus exemplifying how the loss of structure in communication can undermine the system's ability to effectively counteract noise. 

The current work shows that the aforementioned lower bound is almost tight. In particular, in the extreme case where each agent observes all other agents in each round, which can be related to scenarios where each agent senses the average tendency of the system, information spreading can reliably be achieved in $\mathcal{O}(\log n)$ time, assuming constant noise. We present two simple and highly efficient protocols, thus suggesting their applicability to real-life scenarios. Notably, they also work in the presence of multiple conflicting sources and efficiently converge to their plurality opinion. The first protocol we present uses 1-bit messages but relies on a simultaneous wake-up assumption. By increasing the message size to 2 bits and removing the speedup in the information spreading time that may result from having multiple sources, we also present a simple and highly efficient self-stabilizing protocol that avoids the simultaneous wake-up requirement.

Overall, our results demonstrate how, under stochastic communication, increasing the sample size can compensate for the lack of communication structure by linearly accelerating information spreading time.
\end{abstract}     

\noindent{\bf Keywords.} 
    Information Spreading; 
    Noisy Communication; 
    Zealot Consensus; 
    Self-Stabilization; Cooperative Transport; Crazy Ants; 
    Natural Algorithms

\section{Introduction}
\subsection{Context and Motivation}\label{sec:motivation}
This work falls within the research area of ``natural algorithms'', investigating biologically inspired settings through an algorithmic perspective \cite{kleinberg,ANTS,karp2011understanding,DBLP:journals/jcb/ZhangZL22}. By taking a higher-level, abstract approach, this research has the potential to provide fundamental computational insights into the underlying biological processes, which may be challenging to derive using more conventional computational techniques such as differential equations and simulations \cite{guinard2021intermittent,fonio2016locally,afek2011biological,boczkowski2018limits,DBLP:conf/innovations/LynchMP17}. 
Such abstraction can help uncover computational constraints shaped by environmental factors and, alternatively, identify adaptable, efficient problem-solving strategies relevant across various biological constraints \cite{emek2013stone,guinard2021intermittent,afek2011biological,DBLP:conf/innovations/CharikarGGS21}. At the same time, this perspective can inspire applications that harness these insights for artificial systems.

\paragraph{A motivating scenario: Cooperative transport by ``crazy ants''.} This study is motivated by the fascinating phenomenon of cooperative transport exhibited by {\em Paratrechina longicornis} ants, commonly known as ``crazy ants''. 
This process involves a group of ants working together to physically transport a large food load to their nest \cite{gelblum2015ant, gelblum2016emergent, gelblum2020ant, fonio2016locally}. Researchers typically assume that each ant carrying the load senses the combined force exerted by all the carrying ants through the object being transported. After obtaining noisy measurements of this sum, each ant decides whether, and to what extent, it should align its force with this measurement.
It is also believed that the ants physically transporting the load have limited information about the direction to the nest and that occasionally, a new knowledgeable ant enters the carrying group and helps navigate the load toward the nest \cite{gelblum2015ant,gelblum2016emergent}. However, the directional signal from a knowledgeable ant might be diluted in the noise generated by the many other ants holding the load.  This raises a natural question: How can a single informed ant convey directional information to the rest of the group when communication is limited to the inherently noisy perception of the load's movement? This question was explored in \cite{gelblum2015ant}, where the authors used statistical mechanics methods to show that, in principle, if the ants balance their response to the load’s movement with their individual directional preferences in a specific way, a single ant can, eventually (i.e., in a steady state), influence the entire group and steer the load in its preferred direction. However, another key question remains: Can such a process occur quickly? 

Since it is known that disseminating information through pairwise noisy interactions requires time that is linear in the group size~\cite{boczkowski2018limits}, a positive answer would depend on whether sensing the average opinion could lead to fundamentally different scenarios, where convergence is significantly sped up compared to pairwise interactions. This motivated us to study convergence time as a function of sample size.

\subsection{Background}  
 
Disseminating information under randomized or unpredictable communication is a fundamental building block in multiple multi-agent systems, and has thus been extensively studied in various communities, including computer science
\cite{
aspnes2009introduction,
karp2000randomized,
DBLP:journals/dc/FraigniaudN19,
Breathe,
clementiConsensusVsBroadcast2020,
becchettiRoleMemoryRobust2023},
physics \cite{vicsek_review,reynolds}, and biology \cite{boczkowski2018limits,couzin2005effective,razin,couzin2009collective}. 
In particular, in the biological world, the nature of computations performed by multi-agent systems under noisy communication varies across different organisms. Some systems effectively share information despite communication noise \cite{cavagna2010scale, marras2012information}, while others avoid gathering social information, thereby missing out on its potential benefits \cite{giraldeau2002potential}. The key features that reduce unreliability in stochastic multi-agent systems are generally not well understood, nor are the circumstances in which these strategies might fail. Advancing our understanding in this area could deepen our insight into the constraints that shape the evolution of cooperative biological systems.

Following this approach, we study the basic information spreading problem (also called broadcast, or rumor spreading) in a well-mixed population, where messages are noisy~\cite{Breathe,DBLP:journals/dc/FraigniaudN19,clementiConsensusVsBroadcast2020,boczkowski2018limits,korman2014confidence}. 
More specifically, we consider a population of $n$ agents, each having an opinion, which, for simplicity, is assumed to be binary, i.e., either 0 or 1. One opinion corresponds to the ground truth and is called {\em correct}. 
A few agents, called {\em sources}, know which opinion is correct and are moreover aware of being sources \cite{couzin2005effective,razin}.
The goal of all agents is to have their opinions coincide with the correct opinion as quickly as possible. We also consider the more general case of having conflicting sources, that is, scenarios with sources that do not necessarily agree on which opinion is correct. In this case, we require that all agents converge on the majority opinion among sources. 

Similarly to \cite{boczkowski2018limits,clementiConsensusVsBroadcast2020}, we consider the noisy $\pull(h)$ model of communication, in which each agent receives noisy observations from $h$ randomly chosen agents in each  (parallel) round. Each agent $u$ holds a (public) {\em message} $\sigma$ in some alphabet $\Sigma$ and whenever another agent observes $u$, it receives a noisy version $\sigma'$ of $\sigma$, where noise is independent among observations.  
It is convenient to think of the noise such that the message observed $\sigma'$ has a bias to be the original message $\sigma$, but there is also some (e.g., constant) probability $\delta$  that each other possible message in the alphabet $\Sigma$ is received as $\sigma'$.

The authors in \cite{boczkowski2018limits} showed that for a constant  $\delta$ and a constant number of sources (all supporting the correct opinion), information spreading requires $\Omega\left(\frac{n}{h}\right)$ rounds, no matter which strategy is employed by the agents. For a constant sample size $h$, such as in the case of pairwise communication (where $h=1$), this lower bound is exponentially greater than the time needed to reliably disseminate information in a stable, complete network, highlighting how the loss of communication structure can severely weaken the system's ability to effectively counteract noise.  

However, the lower bound formula also leaves open the possibility of achieving a speedup as sample size increases. Indeed, at first glance, there is reason to believe that increasing the sample size could dramatically improve the convergence time because agents obtain more information about the system's configuration in each round. 
On the other hand, these extra observations come from a single configuration (at round~$t$), and it is perhaps required to monitor the evolution of the system over many rounds in order to determine the opinion of the source, as in \cite{becchettiRoleMemoryRobust2023}. 

Under this intuition, it is not even clear whether letting each agent observe all other agents in each round (albeit through noisy observations) could break the $\Omega(n)$ lower bound.
Indeed, in this case, a logarithmic number of rounds would suffice for an agent $u$ to extract the correct opinion if it could reliably detect which of the samples it received in each round originated from the source. However, this detection capability is not readily available to $u$.
For instance, the sources cannot safely signal the fact that they are sources using a designated bit, since this bit itself would be noisy.
Consequently, since the sources are few, most of the received samples would falsely appear to come from the source, even though they originate from non-sources. Conversely, without the ability to identify the source, non-source agents must somehow extract information from multiple samples, most of which are unreliable. Indeed, in any given round, all agents are likely to have roughly the same quality of information about the correct opinion. Enhancing this quality, particularly in the initial stages, when it is very poor, appears challenging in the presence of noisy interactions. These considerations suggest that understanding which opinion is correct in a logarithmic or even poly-logarithmic number of rounds, may not be a trivial task.     

\subsection{Problem Definition}\label{sec:problem}

Consider a set $I=\{1, \dots, n\}$ of agents, where each agent~$i$ holds a binary opinion~$Y^{(i)} \in \{0,1\}$.
We also consider a subset of the agents, referred to as {\em source agents}, where each source agent is initially given a {\em preference} in $\{0,1\}$ (which does not prevent it from later adopting a different opinion).
We let~$s_1$ (resp.~$s_0$) denote the number of source agents with preference~$1$ (resp.~$0$), and make the mild assumption that $s_0,s_1 \leq \frac{n}{4}$. We further define the {\em bias} as~$s := |s_1 - s_0|$, and require that $s\geq 1$. 
The preference supported by the strict majority of sources is referred to as the {\em correct} opinion. 
The goal is for all agents, including sources, to eventually adopt the correct opinion as fast as possible.

\paragraph{The Noisy $\pull(h)$ model.} Agents employ communication using an alphabet of messages, termed $\Sigma$, that may  differ from the opinion set $\{0,1\}$.
Time proceeds in discrete rounds, where in every round the following happens.
\begin{enumerate} 
    \item Each agent (source or non-source) chooses a message $\sigma\in \Sigma$ to display.
    
    \item Every agent samples~$h$ agents in~$I$, uniformly at random with replacement. In particular, Agent~$i$ may sample twice the same agent, and may also sample itself.
    
    \item Every agent receives noisy versions of the messages displayed by the sampled agents.
    Specifically, we consider a stochastic\footnote{A matrix is stochastic if its coefficients are non-negative, with each row summing to 1 (see \Cref{def:stochasticity}).} matrix~$N \in \bbR^{|\Sigma| \times |\Sigma|}$, called {\em noise matrix}, so that if Agent~$i$ samples Agent~$j$, displaying a message~$\sigma \in \Sigma$, then Agent~$i$ observes message~$\sigma'$ with probability~$N_{\sigma, \sigma'}$.   
    \item Every Agent $i$ may update its opinion $Y^{(i)}$, and its internal state (that may include counters, and other variables).
\end{enumerate}
The decisions of an agent may depend on all the messages that it received so far, and possibly on the agent's coin tosses (in the case that the agent's protocol is probabilistic).
We assume that the random events involved in sampling, noise, and agents' coin tosses are all independent (across both rounds and agents).
As a standard convention, we say that an event~$A$ happens with high probability (w.h.p.) if $\Pr(A) = 1 - \mathcal{O} (1/n^2)$.

\begin{definition}
    \label{def:noisetypes}
    For~$\delta \in [0,1/|\Sigma|]$ and a (stochastic) noise matrix~$N$, we say that
    \begin{itemize}
        \item $N$ is~$\delta$-lower bounded if~$N_{\sigma,\sigma'} \geq \delta$ for every~$\sigma,\sigma' \in \Sigma$.

        \item $N$ is~$\delta$-upper bounded if
        \begin{equation} \label{eq:N_def}
        \begin{cases}
            N_{\sigma,\sigma} \geq 1-(|\Sigma|-1) \delta & \text{for every } \sigma \in \Sigma, \\
            N_{\sigma,\sigma'} \leq \delta & \text{for every } \sigma \neq \sigma' \in \Sigma.
        \end{cases}
        \end{equation}

        \item $N$ is~$\delta$-uniform if there is equality in \Cref{eq:N_def}, i.e.,
        \begin{equation*} 
    	\begin{cases}
    		N_{\sigma, \sigma} = 1-\pa{ |\Sigma| -1 } \delta & \text{for every } \sigma \in \Sigma, \\
    		N_{\sigma, \sigma'} =\delta & \text{for every } \sigma \neq \sigma'.
    	\end{cases}
        \end{equation*}
    \end{itemize}
\end{definition}

\paragraph{Self-stabilizing setting.}

In order to capture the lack of ability to synchronize clocks, and in particular, the inability to know when the process starts, we consider a somewhat relaxed version of the classical notion of self-stabilization in distributed computing \cite{DBLP:journals/cacm/Dijkstra74,schneider1993self}. 
That is, we assume that at time $0$ (unknown to the agents) an adversary can manipulate and possibly corrupt the initial configuration, however, there are some constraints on such a manipulation.  
Specifically, we assume that the adversary first chooses the set of sources and their preferences. 
As before, the preference that is more frequent among sources is referred to as the {\em correct} opinion. 
Subsequently, the adversary chooses
the internal states of agents, for example, by including in their memory fake samples presumably gathered earlier, or by corrupting their counters or clocks if they have any. 
However, we assume that agents know both the number of agents $n$ and the noise matrix $N$, and whether or not they are sources with a given preference; this information is not corrupted by the adversary.

\begin{definition}
[Convergence]    \label{def:convergence}
    We say that a protocol solves the {\em noisy information spreading} problem in time~$\mathcal{T}$ w.h.p~if the system reaches consensus on the correct opinion in at most~$\mathcal{T}$ rounds w.h.p. Note that with this definition, even the sources whose preference is incorrect must converge on the correct opinion. We additionally say that a protocol solves the problem in time~$\mathcal{T}$ in a self-stabilizing manner if, starting with any initial configuration (in the sense described above), the system reaches consensus on the correct opinion in at most~$\mathcal{T}$ rounds and remains with it for a polynomial number of rounds, say $n^3$ rounds, w.h.p.
\end{definition}

\subsection{Our Results}


Let us first state the lower bound theorem in~\cite{boczkowski2018limits} 
which merely requires correctness with probability $2/3$. The authors in~\cite{boczkowski2018limits} assumed that all sources agree on the correct opinion, which means, in particular, that the bias $s$ is the same as the number of sources. 
 The main focus of 
\cite{boczkowski2018limits} was the case of pairwise interactions, that is, $h=1$, however their proof, that can be found in~\cite{boczkowski_supplementary_2018}, applies for general $h$. Note that in the case that~$\delta$ is bounded away from~$1/|\Sigma|$, the lower bound becomes informative only when~$s \leq \sqrt{n}$.

\begin{theorem}[Theorem 4 in~\cite{boczkowski_supplementary_2018}.]
    \label{thm:boczkowski2018limits}
    Fix a non-source agent~$u$ and an integer $h$. Any rumor spreading protocol in the noisy $\pull(h)$ model with alphabet $\Sigma$ and $\delta$-lower bounded noise requires $\Omega\! \left(\frac{n\delta}{s^2 h(1 - |\Sigma| \, \delta)^2}\right)$ rounds
    in order to guarantee that the opinion of $u$ is correct with probability at least $2/3$. This lower bound holds even assuming that all the system's parameters, including the number of agents $n$, the noise matrix~$N$, and the number of sources $s$, are known to the designer of the protocol.
\end{theorem}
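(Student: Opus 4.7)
My plan is to prove this lower bound via a standard information-theoretic hypothesis-testing argument: upper bound the amount of information that the fixed non-source agent~$u$ can extract per round via a KL divergence estimate, and sum the result over the $T$ rounds via the chain rule.

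First, I would set up a two-world reduction. In World~$A$, the strict majority of sources prefers opinion~$0$, while in World~$B$ the majority preference flips to opinion~$1$; all other randomness (sampling, noise, protocol coins) is coupled across the two worlds. Since any protocol must output the correct opinion at~$u$ with probability at least~$2/3$ in each world, Le Cam's two-point method forces the joint distributions of $u$'s observation transcript $(O_1,\dots,O_T)$ to be at total variation distance at least a positive constant, and hence, by Pinsker's inequality, at KL divergence at least a positive constant. By the chain rule, this transcript divergence decomposes as a sum over all $Th$ samples of the conditional single-sample KL divergences.

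Next, I would bound each single-sample divergence by the corresponding chi-squared divergence. A single observation at round~$t$ is drawn from the mixture $\bar p^W_t := (1/n) \sum_v N\, m^W_v(t)$, where $m^W_v(t)$ is the distribution of agent~$v$'s displayed message in world~$W \in \{A,B\}$. By joint convexity of~$\chi^2$ one has $\chi^2(\bar p^A_t \,\|\, \bar p^B_t) \leq (1/n^2) \sum_v \chi^2(N m^A_v(t) \,\|\, N m^B_v(t))$. The $\delta$-lower bound on $N$ forces $\bar p^B_t(\sigma) \geq \delta$ for every $\sigma$, which supplies the $\delta$ in the denominator of the target bound; the Dobrushin-type contraction of the noise kernel supplies the $(1-|\Sigma|\delta)^2$ factor. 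At round~$0$, only the $s$ source agents are distinguishable between the two worlds, so a direct estimate gives a per-sample bound of the right order in $s$, $n$, $\delta$, and $|\Sigma|\delta$.

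The key technical obstacle is that beyond round~$0$ the non-source agents' states also diverge between the two worlds, so the per-sample bound does not extend verbatim. To control this, I would introduce a potential $\Phi_t := \sum_v \chi^2(\mu^A_v(t) \,\|\, \mu^B_v(t))$ measuring the total distinguishability of the system state, initialize $\Phi_0 = O(s)$ since only the sources are initially distinguishable, and derive a recursive bound on $\Phi_{t+1}$ in terms of $\Phi_t$ using the same convexity-and-contraction estimates as for a single sample. Plugging the resulting control of $\Phi_t$ into the chain-rule sum and requiring the total KL to exceed the constant threshold from the first step would then yield the claimed lower bound on~$T$. The subtle point is that this potential-function argument must apply to \emph{every} admissible protocol -- including protocols in which non-source agents actively amplify the source's signal through bespoke update rules -- which is precisely what a lower bound against any designer with access to $n$,~$N$, and~$s$ demands.
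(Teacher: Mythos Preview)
This theorem is not proved in the present paper: it is quoted as ``Theorem~4 in~\cite{boczkowski_supplementary_2018}'' and the paper supplies no argument beyond the citation and a remark that the original proof, written for $h=1$, extends verbatim to general~$h$. There is therefore no in-paper proof against which to compare your proposal.

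On the merits of your sketch, there is a genuine gap at its core. First, joint convexity of~$\chi^2$ yields $\chi^2(\bar p^A_t\,\|\,\bar p^B_t)\le\frac{1}{n}\sum_v\chi^2(Nm^A_v\,\|\,Nm^B_v)$, not~$\frac{1}{n^2}$; the $s^2/n^2$ scaling you need at round~$0$ comes from the \emph{sparsity} of $\bar p^A_0-\bar p^B_0$ (only~$s$ summands are nonzero), not from convexity, and that sparsity evaporates once non-sources start to diverge. Second, and more seriously, the potential $\Phi_t=\sum_v\chi^2(\mu^A_v(t)\,\|\,\mu^B_v(t))$ tracks \emph{marginal} state divergences, and a recursion on it cannot be controlled: each of the~$n$ agents absorbs $h$ samples whose per-sample divergence is $\Theta(\Phi_t/n)$ after your convexity step, so the naive bound is $\Phi_{t+1}\lesssim(1+h)\Phi_t$, which blows up geometrically long before~$T$ reaches the target. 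The argument in the cited reference (cf.\ the paper's footnote on the two-party reduction) sidesteps this by collapsing all non-source agents into a single party and tracking the information that the \emph{joint} non-source state carries about the source bit; communication among non-sources then becomes internal post-processing and, by the data-processing inequality, cannot increase that quantity. That collapse---or equivalently, working with the joint rather than the sum of marginals---is the missing idea in your plan.
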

We prove the following theorem which guarantees high probability correctness.
Similarly to \Cref{thm:boczkowski2018limits}, the theorem assumes that the designer of the protocol knows all the system's parameters.

\begin{theorem}
    \label{thm:upper}
    Consider $n$ agents with bias $s := |s_1 - s_0|\geq 1$ interacting according to the noisy $\pull(h)$ model with alphabet $\Sigma=\{0,1\}$ and $\delta$-upper bounded noise.
    The noisy information spreading problem  can be solved w.h.p in
    \begin{equation*}
      \mathcal{T}  := \mathcal{O}\! \pa{ \frac{1}{h} \pa{ \frac{n \delta}{\min\{s^2,n\} (1-2\delta)^2} 
    + \frac{\sqrt{n}}{s} + \frac{s_0+s_1}{s^2}} \cdot \log n + \log n}
    \end{equation*}
    number of rounds, while using~$\mathcal O\! \pa{\log \mathcal{T} + \log h}$ bits of memory per agent.
\end{theorem}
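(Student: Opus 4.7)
The plan is to analyse a protocol in which every agent displays its current opinion as a $1$-bit message: sources hold their preference throughout, while non-sources update their opinion based on the running, noise-debiased tally of the samples they have received. The memory budget $\mathcal{O}(\log \mathcal{T} + \log h)$ is exactly what is needed to store a round counter together with a bounded running tally, so the structural content of the theorem lies in the convergence time.

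Two coupled dynamics drive the analysis. First, under $\delta$-upper-bounded noise on $\{0,1\}$, a noisy sample of an agent in a population whose current $1$-fraction is $p_t$ equals $1$ with probability $\delta + p_t(1-2\delta)$, so subtracting $\delta$ and rescaling by $1/(1-2\delta)$ yields an unbiased per-sample estimator of $p_t$, and Bernstein concentration gives error $\mathcal{O}(\sqrt{\delta/(m(1-2\delta)^2)})$ after $m = \Theta(h\ell)$ samples collected over a window of $\ell$ rounds. Second, $p_t$ itself evolves: as non-sources react to the estimator, a positive-feedback loop grows the signed gap $\beta_t := p_t - 1/2$ (with sign set by the majority preference). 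The central technical lemma is a bootstrap statement: if $\beta_t > 0$, then with probability $1 - n^{-c}$ a window of appropriately calibrated length $\ell_t = \Theta(\delta \log n / (\beta_t^2 h (1-2\delta)^2))$ ends with $\beta_{t+1}$ at least a constant factor larger than $\beta_t$, until $\beta$ reaches a constant. Iterating from $\beta_0 = \Theta(s/n)$ up to $\beta = \Theta(1)$ and summing the geometrically decreasing window lengths produces the leading term $n \delta \log n / (s^2 h (1-2\delta)^2)$ in $\mathcal{T}$. The $\sqrt{n}/s$ correction arises at the crossover to the regime $s \geq \sqrt{n}$ (captured by the $\min\{s^2,n\}$ in the denominator), the $(s_0+s_1)/s^2$ correction absorbs the extra variance injected by the $s_0+s_1$ noisy source bits into the tally --- which matters most when $s_0$ and $s_1$ are close so that $s$ is small relative to the total source mass --- and the additive $\log n$ outside accounts for a final Chernoff-based phase that flips every remaining straggler once $\beta = \Theta(1)$.

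The main obstacle is proving the bootstrap lemma rigorously. Within a single window the non-sources' displayed bits are by construction frozen, so the estimator is unbiased and Bernstein applies directly; however, across consecutive windows there is subtle dependence, since the bit that agent $i$ displays at the start of window $t+1$ is a function of samples it took from agent $j$ during window $t$, and $j$ will itself be sampled by $i$ later on. I would break this correlation by coupling each window to a ``frozen'' process in which all non-source bits are held fixed for the full duration of the window, bounding the total-variation gap between the real process and the coupled one via a drift/martingale computation, and only then invoking Hoeffding on the coupled process. A second quantitative subtlety is making sure that the very first window detects the sign of $s_1 - s_0$ against the worst simultaneous-wake-up initial configuration; this is precisely where the full $n\delta / (s^2 (1-2\delta)^2)$ sample complexity gets consumed and where the $(s_0+s_1)/s^2$ slack becomes necessary.
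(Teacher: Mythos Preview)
Your proposal has a genuine gap at the very first step of the bootstrap, and it is not a subtlety that can be patched by ``consuming the full sample complexity in the first window.'' In your protocol every non-source displays its current opinion from round one. Whatever that initial opinion is---random, all-zero, adversarial---the population's $1$-fraction $p_0$ is then $\tfrac12 + \Theta(s/n)$ only \emph{in expectation}; the random fluctuation from the $n-s_0-s_1$ non-source bits is $\Theta(1/\sqrt{n})$. In the interesting regime $s \le \sqrt{n}$ this fluctuation swamps the signal, so the sign of $\beta_0$ is a coin flip and the bootstrap amplifies the wrong bit with constant probability. Sampling longer does not help: you are estimating $p_0$, not $s_1-s_0$, and $p_0$ carries no information about the source bias once the non-source bits dominate. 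Your sentence about the first window ``detecting the sign of $s_1-s_0$ against the worst initial configuration'' is exactly the step that fails---there is no mechanism in the protocol you describe that separates the source contribution from the non-source contribution to the displayed bits.

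The paper's fix is the core idea you are missing. Non-sources do \emph{not} display their opinion during the information-gathering stage; instead they display $0$ for $\lceil m/h\rceil$ rounds (Phase~0) and then $1$ for the same number of rounds (Phase~1), while each agent counts $1$-messages in Phase~0 and $0$-messages in Phase~1. The symmetry makes the non-source contribution and the noise contribution cancel when the two counters are compared, so the difference is driven solely by the sources. This yields a weak opinion that is correct with probability $\tfrac12+\Omega(\sqrt{\log n/n})$ and, crucially, is a function only of that agent's own samples, noise, and coins---hence the weak opinions are mutually independent across agents, with no coupling or total-variation argument needed. Only after this independent seeding does the majority-boosting phase (your ``positive-feedback loop'') run, and there the analysis is standard. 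A smaller point: your formula $\delta + p_t(1-2\delta)$ for the probability of observing a $1$ presumes $\delta$-uniform noise; for general $\delta$-upper-bounded noise the two off-diagonal entries of $N$ can differ, and the paper handles this by first applying an artificial noise $P$ so that $NP$ becomes uniform.
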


\begin{remark}~
\begin{itemize}
 \item 
 Note that the theorem guarantees convergence to the correct opinion even when the bias $s=1$. This is in contrast to multiple works in the area of population protocols that guarantee convergence to the majority opinion only when the bias towards it is $\Omega(\sqrt{n\log n})$, see e.g., \cite{DBLP:journals/dc/AngluinAE08,condon2017simplifying,DBLP:conf/podc/AmirABBHKL23}. 
 
\item When~$\delta > \frac{s}{4\sqrt{n}}$ and the number of source agents is bounded as~$s_0,s_1 \leq \sqrt{n}$, the upper bound in \Cref{thm:upper} becomes
    \begin{equation*}
        \mathcal{O}\! \pa{  \frac{n \delta}{s^2 h (1-2\delta)^2} 
    \cdot \log n + \log n}.
    \end{equation*}
    Thus, under these conditions and assuming the noise to be~$\delta$-uniform, the theorem matches the lower bound in  \Cref{thm:boczkowski2018limits} up to a logarithmic factor.
    \item
    The missing log factor is likely due to the fact that \Cref{thm:boczkowski2018limits} only requires agents to be correct with constant probability, while our notion of convergence demands high probability.
   Indeed, for the case where $\delta$ is constant, the authors in~\cite{clementiConsensusVsBroadcast2020} showed that the extra $\log n$ factor is necessary to achieve convergence w.h.p.~when~$h=1$ (Theorem 7 in~\cite{clementiConsensusVsBroadcast2020}), but in fact, their technique applies for general $h$ (see also \Cref{foot:3}). 
   \end{itemize} 
\end{remark}
\Cref{thm:upper} demonstrates how a larger sample size can linearly accelerate the information spreading time. In particular, in the extreme case where each agent observes all other agents in each parallel communication round, we show that information spreading can be reliably achieved in $\mathcal{O}(\log n)$ time, assuming $s$ and $\delta > 0$ are constants. In practical contexts, this shift from linear to logarithmic time can be the difference between impracticality and feasibility. This suggests that an increased sample size can effectively compensate for the lack of structure in noisy environments (see further discussion in \Cref{sec:conclusions}).

To prove \Cref{thm:upper}, we first reduced the general case of $\delta$-upper bounded noise to a uniform-noise scenario by letting agents add artificial noise to their observations. 
A high-level overview of such a strategy is given in \Cref{sec:BBS}.
The validity of the transformation follows from a general proof of the non-singularity of the noise matrix $N$ (provided in \Cref{cor:N_is_invertible}). 
Although the corresponding linear algebra question is very simple to define, we could not find the answer in the literature, and instead, we provide one in \Cref{sec:non-uniform}.

Relying on this reduction allows us to restrict attention to uniform noise. For this case, we first present a simple synchronous protocol that achieves the upper bound stated in \Cref{thm:upper}. 
This protocol is essentially composed of three phases. 
During the first two phases, source agents present their preferences at all rounds while each non-source agent overall presents the same number of 0's and 1's. Over time, this ``neutral'' behavior of non-sources allows the bias in the sources' preferences to somehow ``stand-out'' despite noise. 
Meanwhile, all agents (including sources) assemble samples from the population so that at the end of the two phases, each agent has gathered sufficiently many samples to obtain an opinion that is biased towards the correct opinion, by a very small, yet, non-negligible amount. This opinion is then presented by each agent throughout the third phase, which is dedicated to amplifying the slight bias in opinions using a majority rule.

The correctness of the protocol depends on the assumption that agents begin execution simultaneously, allowing them to transition through the phases in sync. 
At the cost of increasing the message size to 2 bits, and giving up for the acceleration that is caused by having a large bias $s$, we also present an efficient protocol that removes the simultaneous wake-up assumption to become 
self-stabilizing.
\begin{theorem}\label{thm:self}
Consider $n$ agents with bias $s := |s_1 - s_0|\geq 1$ interacting according to the noisy $\pull(h)$ model with alphabet $\Sigma=\{0,1\}^2$ and $\delta$-upper bounded noise.
    The noisy information spreading problem can be solved w.h.p in
    \begin{equation*}
      \mathcal{T}  := \mathcal{O} \left(\frac {\delta n\log n}{ h(1-4\delta)^2}+\frac{n}{h}\right)
    \end{equation*}
    number of rounds, in a self-stabilizing manner, while using~$\mathcal O\! \pa{\log \mathcal{T} + \log h}$ bits of memory per agent.
\end{theorem}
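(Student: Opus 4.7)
The plan is to build a self-stabilizing protocol on top of the three-phase scheme underlying \Cref{thm:upper}, using the second bit of the 4-symbol alphabet $\Sigma = \{0,1\}^2$ as a persistent ``role indicator'' that remains stable throughout the execution. Concretely, every source will at every round display $(\mathrm{preference}, 1)$, while every non-source will display $(\mathrm{opinion}, 0)$; noise then turns this into an estimation problem in which the statistic of interest is biased by the source contribution \emph{regardless} of the current opinion distribution of non-sources. The quantity $(1-4\delta)^2$ in the running time, as opposed to $(1-2\delta)^2$ in \Cref{thm:upper}, reflects the fact that under $\delta$-uniform noise over $4$ symbols a single bit is observed flipped with probability $2\delta$, giving a per-bit signal-to-noise gap of $1-4\delta$.

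The analysis will proceed by first reducing, as in \Cref{thm:upper}, the general $\delta$-upper bounded case to the $\delta$-uniform case via agents injecting additional artificial noise, invoking \Cref{cor:N_is_invertible} on the $4 \times 4$ noise matrix. Under uniform noise, a direct computation shows that the probability of observing $(1,1)$ equals $\delta + s_1(1-4\delta)/n$ and the probability of observing $(0,1)$ equals $\delta + s_0(1-4\delta)/n$, \emph{independently} of the current opinion distribution of non-sources, which cancels out thanks to the persistent $0$ they display in the second bit. The difference of these two observation counts, accumulated across rounds, is therefore an estimator whose sign matches the correct opinion once enough observations have been gathered. A Chernoff-type concentration argument followed by a union bound over the $n$ agents yields consensus within $\mathcal{T}$ rounds, and persistence of the source signal then ensures that consensus is maintained for polynomially many rounds. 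Each agent only needs to store $\mathcal{O}(\log\mathcal{T}+\log h)$ bits in this scheme, since it suffices to keep a bounded counter plus the current opinion bit.

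The main obstacle I anticipate is handling the adversarial initial state: the adversary may arbitrarily corrupt every non-source's memory, including any accumulated counts and current opinions. The key step is to argue that the contribution of this corrupted past to the estimator vanishes within $\mathcal{T}$ rounds. The natural way to do this is to use a sliding-window or exponentially-decaying estimator of effective length $\Theta(\mathcal{T})$, so that any stale, adversarially-chosen information is pushed out of the estimator by the fresh observations produced by the protocol; the analysis must then furnish concentration bounds that hold \emph{uniformly} over all possible adversarial initializations. Finally, the additive $\mathcal{O}(n/h)$ term in the bound corresponds to an unavoidable hitting-time overhead in the extreme case $s=1$: no non-source can possibly output the correct opinion before having sampled the unique source at least once, which requires $\Theta(n/h)$ rounds in expectation.
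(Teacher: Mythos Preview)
Your bit assignment is swapped relative to the paper (the paper uses the first bit as the source tag and the second as the content), but that is cosmetic. The probability computation is correct: with your labeling, $\Pr((1,1))-\Pr((0,1)) = (s_1-s_0)(1-4\delta)/n$, independent of non-source opinions. The real gap is in the concentration step that follows.

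With a per-sample bias of only $s(1-4\delta)/n$ and each summand bounded in $[-1,1]$, Hoeffding (or Bernstein, using the variance $\approx 2\delta$) requires
\[
    m \;\gtrsim\; \frac{n^2\,\delta\,\log n}{s^2(1-4\delta)^2}
\]
samples before the sign of your estimator is correct with probability $1-1/n^3$. For $s=1$ this is a factor $\Theta(n)$ larger than $\mathcal{T}\cdot h$, so your ``Chernoff-type concentration followed by a union bound'' does not close within the claimed budget. With only $m = \Theta\!\big(\delta n\log n/(1-4\delta)^2 + n\big)$ samples, the estimator gives each agent the correct answer merely with probability $\tfrac{1}{2}+\Theta(\sqrt{\log n/n})$, not w.h.p.

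The paper closes this gap with a second stage that your protocol cannot support. Non-sources display their \emph{weak-opinion} (the sign of the estimator you describe) in the content bit, and each agent's \emph{opinion} is then the majority of the content bits over \emph{all} received messages, source-tagged or not. Because weak-opinions are mutually independent and each is correct with probability $\tfrac{1}{2}+\Theta(\sqrt{\log n/n})$, this population-wide majority is correct w.h.p.\ after one further window of $m$ samples. In your design, non-sources display $(\text{opinion},0)$, so their content bit carries the role tag $0$ and is filtered out of every other agent's estimator; there is no channel through which the weak per-agent advantage can be aggregated. You need to change what non-sources display and add this boosting computation; once you do, the sliding-window idea (the paper simply empties the buffer every $\lceil m/h\rceil$ rounds) handles self-stabilization as you outline.
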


Recall from our definition of self-stabilization, that  we assume that agents know the number of agents $n$ and the noise matrix~$N$, and that such knowledge cannot be corrupted by the adversary. Moreover, after the adversary chooses the set of sources and their preferences, each agent knows whether it is a source or not, and this information cannot be corrupted by the adversary. 
In contrast to  \Cref{thm:upper} (the non-self-stabilizing setting), \Cref{thm:self} does not require agents to know the bias $s$. 

We present our protocols and the intuition behind them in \Cref{sec:BBS}. Both are simple, and mainly rely on basic operations, namely, averaging, counting, and taking majority.

\subsection{Related Works}
\label{sec:relatedworks}

In various natural scenarios, a group must reach a consensus on a particular value determined by the environment. In these cases, agents possess varying levels of knowledge about the target value, and the system must leverage the insights of the more informed individuals \cite{razin,sumpter2008consensus,ayalon2021sequential,couzin2009collective,couzin2005effective}. The problem of propagating information from one or more sources to an entire population has been extensively studied in distributed computing under various names such as rumor spreading, information dissemination, epidemics, gossip, and broadcast, see, e.g.,~\cite{demersEpidemicAlgorithmsReplicated1987,karp2000randomized,chierichetti2018rumor,kempe2003gossip}. The problem of converging to the most frequent opinion among sources is also known as ``zealot consensus", ``majority bit dissemination" and ``majority consensus"~\cite{mobiliaRoleZealotryVoter2007,
boczkowskiMinimizingMessageSize2017,damorePhaseTransitionNonlinear2022,berenbrinkUndecidedStateDynamics2024}. 
These tasks become particularly challenging when communication is limited and the system is vulnerable to faults.

The terms  $\pull(h)$ and  $\push(h)$ denote random meeting patterns, in which in each round, each agent $u$ samples $h$ agents uniformly at random, and may either extract information from them ($\pull$), or inform them with a message ($\push$)~\cite{demersEpidemicAlgorithmsReplicated1987,karp2000randomized}. 
These models capture ``well-mixed'' scenarios, in which agents have little to no control over who they interact with, and are reminiscent of random meeting patterns studied in the areas of {\em population protocols}  \cite{angluin2006computation,aspnes2009introduction,DBLP:conf/stoc/BerenbrinkGK20} and {\em opinion dynamics} \cite{becchetti2020consensus,DBLP:conf/podc/BerenbrinkCGMMR23}. 
A classical information spread algorithm that works in both models is based on copying the  
opinions held by the knowledgable agents upon interaction \cite{karp2000randomized}. This simple mechanism, however, is not self-stabilizing \cite{DBLP:journals/cacm/Dijkstra74,schneider1993self} and may fail if the internal states of non-source agents are set arbitrarily. Such conditions can occur when the system dynamically changes and the agents do not share a global notion of time. Consequently, one line of work has recently focused on obtaining efficient self-stabilizing information spreading in the $\pull(h)$ model, while specifically aiming to minimize communication and/or memory capacities  \cite{boczkowskiMinimizingMessageSize2017,DBLP:conf/soda/BastideGS21,becchettiRoleMemoryRobust2023,DBLP:conf/soda/BecchettiCPTVZ24,DBLP:conf/podc/KormanV22}. While most proposed self-stabilizing algorithms are unlikely to be found in nature, some may suggest elements that could be realistically plausible \cite{DBLP:conf/podc/KormanV22}.

In the absence of noise in communication, with some ``grain of salt'' \cite{karp2000randomized}, the $\push$ and $\pull$ models are generally considered similar\footnote{Potentially, by increasing the alphabet in messages, one could imagine that the $\push$ model could be simulated under the $\pull$ model by designating a bit in the message that signifies whether the agent is intending or not to reveal its message. However, if messages are noisy then this bit cannot be trusted.}. 
However, when the communication is noisy, there is an exponential separation between the two models. 
Indeed, the authors in  \cite{boczkowski2018limits} considered the noisy $\pull(h)$ model (see Section \ref{sec:problem}), and proved \Cref{thm:boczkowski2018limits}. In particular, this theorem implies that for a single source (i.e., $s=1)$, constant sample size $h$, and constant noise $\delta>0$, the information spreading time is $\Omega(n)$, even if correctness is guaranteed with only a constant probability per agent.
The authors in \cite{clementiConsensusVsBroadcast2020} 
showed that for constant parameters as above, if convergence is required w.h.p~ then the techniques in \cite{boczkowski2018limits} can be modified to yield an $\Omega(n\log n)$ lower bound\footnote{\label{foot:3} Specifically, the authors in \cite{clementiConsensusVsBroadcast2020} showed that a protocol that solves the bit dissemination problem in the $\pull(1)$ model can be reduced to a so-called $(m, x, \delta)$-Two-Party Protocol. This is a protocol that ensures that a party $A$ can reliably receive a bit from a party $B$, with a probability of success at least $1-x$, after exchanging $m$ messages that are affected by a $\delta$-uniform noise.
Thinking of $A$ as the party of non-source agents, and $B$ as the source, they modified the lower bound for the high probability regime for the case $h=1$, which was the main focus of that paper. However, since the number of messages received by party $A$ in the $\pull(h)$ model is simply the number of rounds times $h$, it is possible to extend their result to apply for general $h$.}. The authors in \cite{clementiConsensusVsBroadcast2020} also 
present a protocol that is unlikely to be found in nature but achieves information spreading in $O(n\log n)$ rounds w.h.p., matching the lower bound.
Standing in contrast to the linear lower bounds in \cite{boczkowski2018limits,clementiConsensusVsBroadcast2020} for the case $h=1$, the authors in \cite{Breathe} proved that in the noisy $\push(1)$ model, information spreading can be solved in logarithmic time in $n$. The reason behind this exponential separation is that the noisy $\push$ model hides a reliable component in the communication: When an agent receives a message, she cannot be sure of its original content, but she can nevertheless be sure that the sender of the message ``intended'' to send it. This reliability aspect is exploited in \cite{Breathe} to synchronize agents and control the propagation of information, to achieve fast and reliable information spreading. In this sense, communication under the noisy $\push$ model --- but where an agent cannot be certain not only of a message’s content but also of whether the message was genuinely “intended” for communication --- may more closely resemble the noisy $\pull$ model.
This applies, for instance, to communication through physical bumping between ants \cite{razin,boczkowski2018limits} where bumping could be both intentional or accidental.

Several scenarios in nature may be viewed as following noisy $\pull$-like communication patterns that involve relatively large sample sizes. Examples include flocks of birds, schools of fish, and bat groups, where individuals scan their surroundings to observe and respond to others' locations \cite{reynolds, vicsek_review, couzin2005effective}. Swarming behaviors in midges \cite{gorbonos2016long} and firefly synchronization \cite{sarfati2021self} are further cases where each individual (midge or firefly) responds to aggregated auditory or visual signals from numerous group members. In all these contexts, debates persist over the relevant communication models, particularly regarding the number and placement of observed agents \cite{vicsek_review, sarfati2021self}.
Additionally, during cooperative transport by longhorn ``crazy ants'' \cite{gelblum2015ant, gelblum2016emergent, korman2021sequential,gelblum2020ant, fonio2016locally}, carrying ants use the transported object to sense the cumulative force exerted by all participating ants. This setup allows ants to directly sense the overall directional tendency of the system, bypassing the spatial considerations in communication that appear to be more prominent in the previous examples.

%

\section{Protocols, Intuition, and Main Ingredients of Analysis} 
\label{sec:BBS}

In this section, we present our protocols and the intuition behind their construction.

In \Cref{sec:non-uniform}, we show how to reduce the problem to the case that the noise is~$\delta$-uniform, in order to facilitate several aspects of our algorithms and their analysis.
For this purpose, 
upon receiving a message~$\sigma \in \Sigma$, we let each agent apply an ``artificial'' noise to~$\sigma$ to obtain a new random message~$\sigma'$, and then run the original protocol on~$\sigma'$. 
The combination of the noise matrix $N$ and the artificial noise ends up being a noise matrix $T$ which is $\delta$-uniform for some suitable~$\delta > 0$.
A formal definition of this notion of reduction is given in \Cref{def:simulation}, while the fact that it can be achieved in our scenario is proved in \Cref{thm:reduction}. 
When~$|\Sigma|=2$, this reduction can be easily achieved; however, for a larger alphabet, its feasibility is less obvious.
Specifically, showing that uniform noise can be achieved for a general~$\Sigma$ under a noise matrix~$N$ involves showing that~$N$ can be inverted, computing a suitable~$\delta > 0$, and proving that the product of~$N^{-1}$ with a $\delta$-uniform matrix is stochastic, so that the strategy can be implemented. We establish the non-singularity of~$N$ in \Cref{sec:linear_algebra} (\Cref{cor:N_is_invertible}), and the other statements in \Cref{prop:non-uniform_case}.

In \Cref{sec:upper_bound}, we then consider the setting with uniform noise, presenting two related protocols, Source Filter (SF) and Self-stabilizing Source Filter (SSF). 
The former protocol requires agents to start the execution simultaneously at the same time, while the latter is self-stabilizing, at the cost of reduced efficiency and a larger communication alphabet.
Since both protocols rely on similar ideas, we present their analysis in a unified manner.

The basic approach behind the construction of our algorithms is the following.
\begin{itemize}
    \item On the one hand, each agent tries to extract information from her samples to form a first guess about the correct opinion.
    We refer to this guess as a {\em weak-opinion}, and show that it is correct with probability at least $1/2+\varepsilon$ for some sufficiently large~$\varepsilon >0$.
    This is informally explained in \Cref{sec:intuition} and formally proved in \Cref{lemma:first_phase}.
    \item
    On the other hand, all agents share their weak opinions with each other. Through majority-based operations, agents develop a refined guess that, with high probability, aligns with the correct opinion. This boosting mechanism is similar to those used in prior works involving the majority rule, such as~\cite{Breathe, DBLP:journals/dc/FraigniaudN19}.
\end{itemize}
For such a procedure to work, it is desirable that all weak-opinions are mutually independent, so that the majority of weak-opinions would be correct with high probability.
However, having non-source agents relay information between themselves can cause dependencies. Therefore, in our algorithms, weak-opinions are computed in a way that is oblivious of the information obtained (and the randomness involved) by other non-source agents. Hence, in a sense, the weak-opinions are based on ``first-hand'' information, that is, information that comes directly from the sources, despite the fact that sources are not easily detectable.
This implies that to form a weak-opinion, an agent must sample a source sufficiently many times, 
which means that their computation alone already requires many rounds.

\subsection{Source Filter (SF)} \label{sec:SF}

\newcommand{\counter}{\texttt{Counter}}

Here, we describe our fastest protocol, whose pseudo-code is given in \Cref{alg:SF}. 
The protocol uses a communication alphabet that corresponds to the opinion's set, namely,~$\Sigma := \{ 0,1 \}$.
The execution is divided into three phases (where the first two phases are symmetric).
The protocol relies on a simultaneous wake-up assumption, which is used to synchronize the clocks of agents.
Each of the first two phases lasts for~$\lceil m/h \rceil$ rounds, where
\begin{equation*}
    m = \Theta\pa{\frac{n \delta \log n}{\min\{s^2,n\}  (1-|\Sigma|\delta)^2} + \frac{\sqrt{n}\log n}{s} + \frac{(s_0+s_1)\log n}{s^2} + h \log n}.
\end{equation*}
Since every agent observes~$h$ samples per round, it collects at least~$m$ messages in each of the first two phases.
%
\begin{itemize}
    \item {\bf Phase 0.} All non-source agents display 0, while sources display their (original) preference. In addition, all agents count the number of 1-opinions that they observe during the phase (variable~$\counter_1$ in \Cref{alg:SF}).
    
    \item {\bf Phase 1.} All non-source agents display 1, and sources still display their preference. All agents count the number of 0-opinions that they observe during the phase (variable~$\counter_0$ in \Cref{alg:SF}).
\end{itemize}
At the end of  Phase 1, each agent~$i$ compares the number of~$1$-messages observed in Phase 0, with the number of~$0$-messages observed in Phase~$1$, and then sets its {\em weak-opinion}~$\Y^{(i)}$ to be the opinion corresponding to the greater number (breaking ties randomly).
In other words,
$
    \Y^{(i)} := \mathds{1}\{ \counter_1 > \counter_0 \}
$
(breaking ties randomly).
Note that the $0$-messages observed in Phase 0 and the $1$-messages observed in Phase 1 are ignored in this computation.

The third phase is devoted to boosting the bias in the population, based on a majority rule.
\begin{itemize}
    \item {\bf Majority Boosting phase.} In every round~$t$ during this phase, each agent displays its opinion~$Y_t^{(i)}$, which is initially equal to its weak-opinion (computed as above), and then updated from time to time by sampling the opinions of several other agents and taking the majority.
    More specifically, the Majority Boosting phase is divided into $10 \, \log n + 1$ sub-phases. At the end of each sub-phase, every agent~$i$ sets its opinion to be the majority over all messages it received during this sub-phase. All sub-phases except the last one have the same duration, which is chosen in order to allow each agent to gather at least~$w = \frac{100}{(1-2\delta)^2}$ messages per sub-phase. Finally, the last sub-phase lasts long enough for agents to gather at least~$m$ messages.
\end{itemize}
By design, the Majority Boosting phase lasts for~$\lceil w/h \rceil \cdot 10 \, \log n + \lceil m / h \rceil$ rounds, which
is in fact at most 
the duration of the first two phases
(\Cref{claim:duration_boosting}).
Therefore, the running time of Algorithm \SF is overall~$\mathcal{O}(m/h)$ rounds.
%

By the symmetric construction of the algorithm and since the noise is uniform, we expect the number of messages wrongly counted as 1s in Phase~$0$, to be the same as the number of messages wrongly counted as 0s in Phase~$1$.
Since there are more sources supporting the correct opinion, we also expect that they observe more correct messages originating from source agents.
Together, these two observations imply that $\Y^{(i)}$ is correct with a probability larger than~$1/2$.
By setting~$m$ to be large enough, we can get this probability to be at least~$1/2+\sqrt{\log(n) / n}$ (see more details in \Cref{sec:intuition}), which is sufficiently high to allow the Majority Boosting phase to be efficient and reliable.

In addition, we show that the value of~$\Y^{(i)}$ depends only on the random samples received by Agent~$i$, the noise affecting these samples, and the coin tosses of Agent~$i$ used to break ties. Since all these random variables are mutually independent among agents, so are the weak-opinions.

\begin{remark}
    In the first two phases of Algorithm \SF, each non-source agent presents the message 0 consecutively for many rounds and then switches to presenting 1 for the same number of rounds. Perhaps a more natural algorithm would allow each agent to first flip a fair coin to determine the message it will present on the first round, and then, over the following rounds, deterministically alternate between 0 and 1. While it is plausible that such a scheme would work as well, it does add some complexity to the analysis. For the sake of simplicity, we therefore decided to focus on Algorithm \SF.
\end{remark} 

\subsection{Self-stabilizing Source Filter (SSF)} \label{sec:SSF}

Algorithm \SF has an optimal convergence time for a wide range of values of~$\delta$, and moreover, it only relies on a small communication alphabet containing only the opinions.
However, it requires agents to share a global notion of time, so that they can agree on when to start each phase. 
In \Cref{sec:SSF_analysis}, we show that, at the cost of increasing the size of the alphabet and
slightly increasing the running time, such a synchronization assumption is not essential. 

For this purpose, we present a self-stabilizing protocol, whose pseudo-code is given in \Cref{alg:SSF}.
The protocol uses a communication alphabet of size 4, namely,~$\Sigma := \{ 0,1 \}^2$, meaning that each message is simply composed of 2 bits.
In every round~$t$, each agent~$i$ maintains a weak-opinion~$\Y_t^{(i)} \in \{0,1\}$ as well as an opinion $Y_t^{(i)} \in \{0,1\}$ that is held internally. The first bit of each message indicates whether or not the sender is a source. For source agents, the second bit corresponds to their preference, whereas for non-source agents, it indicates their weak-opinion.
In addition, each agent~$i$ holds a multi-set (containing at most $m$ messages) in her memory, whose content in round~$t$ is denoted by~$M_t^{(i)}$.
In every round, Agent~$i$ adds all messages received in this round to~$M_t^{(i)}$. If this results in~$|M_t^{(i)}|$ exceeding~$m$ (which happens every~$\lceil m/h \rceil$ rounds), the agent updates its opinion and weak-opinion based on the content of the multi-set, and then empties it. Specifically, 
on such a round, which we refer to as an {\em update round} for Agent~$i$, the following occurs:
\begin{itemize}
    \item the new weak-opinion of Agent~$i$, namely, $\Y_{t+1}^{(i)}$, is computed as the majority over the second bits of all messages in~$M_{t}^{(i)}$ whose first bit is 1 (breaking ties uniformly at random).
    Recall that in absence of noise, by construction, such a bit would correspond to the preference of a source agent.
    
    \item the new opinion of Agent~$i$, namely $Y_{t+1}^{(i)}$, is computed as the majority over the second bits of all messages in~$M_{t}^{(i)}$ (breaking ties uniformly at random). 
\end{itemize}
Note that in a self-stabilizing framework, the multi-sets~$\{ M_t^{(i)} \}_{i \in I}$ are initialized arbitrarily. In particular, they may start with different sizes, in which case the update rounds of different agents may not occur simultaneously. 

Importantly, the larger number of sources with the correct opinion (say, $1$), together with the uniform noise, imply that each agent is more likely to receive the message~$(1,1)$ than the message~$(1,0)$, even after noise is applied. In turn, this implies that each weak-opinion~$\Y_t^{(i)}$ is correct with a probability larger than~$1/2$.
This is formally proved in \Cref{lemma:correctness_weak}.
We remark that the fact that for every round~$t$, all weak-opinions~$\{\Y_t^{(i)}\}_{i \in I}$ are mutually independent is less obvious than for Algorithm \SF. 
As mentioned, the weak opinion of an agent is updated according to the second bits of messages that were received with a first bit that equals 1. If such a message~$\sigma$ originated from a non-source agent, then it must have been corrupted by noise. Therefore, because of the noise being uniform, the distribution of the second bit of~$\sigma$ is independent of the second bit of the original message (which is the weak opinion of the sampled non-source agent).
This implies that, eventually, the weak-opinion of each agent~$i$ is independent from the behavior of all other non-source agents. 
Finally, notice that, after~$2\lceil m/h \rceil$ rounds, each agent has updated twice. For every agent~$i$, the set~$M_t^{(i)}$ is emptied after its first update, and from now on, it will only contain messages that were actually sampled (and not generated adversarially during the initialization), which allows the protocol to be self-stabilizing.

\subsection{Intuition Behind the Correctness Analysis of Weak-opinions} \label{sec:intuition}

In this section, we give an overview of the computations ensuring that the weak-opinions are correct, in both Algorithms \SF and \SSF. Formal statements and corresponding proofs are provided in \Cref{sec:upper_bound}.

Let us fix an agent~$i$.
For both protocols, for the sake of analysis, we map the messages used to compute the weak-opinions of Agent~$i$, to a set of i.i.d.~random variables~$\{X_k\}_k$ taking values in~$\{-1,0,+1\}$.
Although these variables are defined differently for each protocol, the general idea is that $X_k = +1$ (resp. $-1$) if the corresponding messages suggest that the correct opinion is~$1$ (resp. $0$), whereas~$X_k = 0$ if the corresponding messages do not carry relevant information. 
For instance, in the case of Algorithm \SSF, we define one~$X_k$ for each message in the memory of the agent. When a message has a first bit equal to~$1$ (i.e., it is tagged as coming from a source), the corresponding $X_k$ is assigned value~$+1$ or~$-1$ depending on the second bit of the message, and otherwise, it is assigned value~$0$. 
Instead, in the case of Algorithm \SF, each~$X_k$ corresponds to a pair of messages, one received in Phase 0 and one in Phase 1.
When these messages are either both equal to~$1$, or both equal to~$0$, then the corresponding~$X_k$ is assigned value~$+1$ or~$-1$ respectively. Conversely, when these messages are different, $X_k$ is assigned value~$0$.
Now, let us consider the sum $X = \sum_{k} X_k$.
The protocols and the $X_k$s are defined so that the weak-opinion of Agent~$i$ is equal to~$1$ if and only if~$X > 0$ (leaving the tie-breaking mechanism aside).
Without loss of generality, we will assume that the correct opinion is~$1$, and focus on showing that~$X>0$ with probability at least~$1/2 + \Omega(\sqrt{\log n / n})$.
This requires a fine-grained analysis of the distribution of~$X$, that classical concentration inequalities do not provide directly.

Recall that a Rademacher random variable takes values in~$\{-1,+1\}$. Although the~$\{X_k\}$ variables are not Rademacher, since we expect the number of non-zero~$X_k$s to be~$\ell := m \cdot \Pr\pa{X_k \neq 0}$, we show that~$X$ is distributed roughly as a sum of~$\ell$ i.i.d.~Rademacher random variables.
Moreover, each variable in this sum would be equal to~$1$ with probability~$p = \Pr\pa{X_k = 1 \mid X_k \neq 0}$.
We can then apply the following lower bound, derived directly from existing literature (see \Cref{lemma:majority_boosting}):
\begin{equation*}
    \Pr\pa{ X>0 } - \Pr\pa{ X<0 } \geq \sqrt{\frac{2}{\pi e}} \cdot \min\{(p-\tfrac{1}{2}) \sqrt{\ell},1\}.
\end{equation*}
Hence, a sufficient condition for having~$X>0$ with probability at least~$1/2 + \Omega(\sqrt{\log n / n})$, is (up to a constant factor):
\begin{equation}
    \label{eq:sufficient_cond_for_bias}
    \pa{p-\frac{1}{2}} \cdot \sqrt{\ell} \geq \sqrt{\frac{\log n}{n}}. 
\end{equation}
This condition reflects the need to gather sufficient information from the variables~$X_k$. Specifically, $\ell$ represents the number of~$X_k$s that provide any useful information, while~$p-\frac{1}{2}$ quantifies the extent to which each non-zero~$X_k$ contributes in correctly inferring the correct opinion.
In order to estimate~$\ell = m \cdot \Pr\pa{X_k \neq 0}$, we show that for both protocols,
\begin{equation*}
    \Pr\pa{X_k \neq 0} \geq \delta + \frac{s_0 + s_1}{2n}(1-|\Sigma|\delta).
\end{equation*}
The first term corresponds to corrupted messages, while the second term corresponds to uncorrupted messages from the sources (in Algorithm \SSF, there is another factor of $(1-|\Sigma|\delta)$ that is due to the more complex pairing of $X_k$-messages; but we ignore it here for the sake of clarity).

When computing~$p-\frac{1}{2}$, we distinguish between two regimes.
\begin{itemize}
\item When
\begin{equation*}
    \delta < \frac{s_0 + s_1}{2n} (1-|\Sigma|\delta),
\end{equation*}
we are in a situation in which each non-zero~$X_k$ is more likely to be the consequence of a direct observation of a source without distortion, than to be a consequence of noise. 
Hence, each non-zero~$X_k$ is relatively informative, since the probability that it corresponds to an observation of a source is large.
Indeed, we can show that in this case, for both protocols
\begin{equation*}
    p -\frac{1}{2} \geq \frac{s}{4 (s_0+s_1)}.
\end{equation*}

\item Conversely, when
\begin{equation*}
    \delta \geq \frac{s_0 + s_1}{2n} (1-|\Sigma|\delta),
\end{equation*}
each non-zero~$X_k$ is quite likely to be the consequence of a corrupted message, implying that~$p-\frac{1}{2}$ cannot be too large; in fact, we show that for both protocols, we only have
\begin{equation*}
    p-\frac{1}{2} \geq \frac{s}{n}\frac{(1-|\Sigma|\delta)}{8\,\delta}.
\end{equation*}
However, this drop in quality is offset by the greater number of~$X_k$s different from 0, enabling the agents to accumulate enough information.
\end{itemize}
The aforementioned computations of~$p$ and~$\Pr \pa{X_k \neq 0}$ are presented in \Cref{lemma:first_phase_SF} for Algorithm \SF, and in \Cref{lemma:correctness_weak} for Algorithm \SSF.
From these quantities, we derive a sufficient condition on~$m$ that applies to both protocols, in \Cref{lemma:first_phase}.
Specifically, with the above bounds, we obtain that in the regime where $\delta \geq \frac{s_0 + s_1}{2n} (1-|\Sigma|\delta)$, it is sufficient to choose $m$ such that
\begin{align*}
    m\geq \frac{\delta n \log n}{s^2 (1-|\Sigma|\delta)^2},
\end{align*}
in order to satisfy \Cref{eq:sufficient_cond_for_bias}.
However, to ensure the correctness of the weak opinion across a broader range of parameters, additional additive terms for $m$ are necessary.

\section{Conclusions and Discussion}
\label{sec:conclusions}

We consider the task of spreading a binary value from a set of (possibly conflicting) {\em source} agents to the rest of the population, under the noisy $\pull(h)$ model of communication, where the goal is to converge to the plurality opinion of the sources. 

When the sample size $h$ is small, the lower bound proved in \cite{boczkowski2018limits} 
becomes linear in the group size.
This result allowed the authors to predict that recruiting a few desert ants among a population, a process which could be viewed as an information spreading process under the noisy $\pull(1)$ model, would be slower when more ants are involved. This counter-intuitive prediction was indeed verified empirically in \cite{boczkowski2018limits}.
As a general message, the authors in \cite{boczkowski2018limits} used the linear lower bound to argue that in the case of pairwise-meeting patterns, the loss of communication structure can severely weaken the system's ability to effectively counteract noise.
While a small sample size is evident in some biological scenarios \cite{razin}, it is inconsistent with various other natural scenarios where agents extract information from numerous other agents simultaneously (see more details in \Cref{sec:relatedworks}). 
This paper provides upper bounds for the information spreading time for every $h$, which match the bounds proved in \cite{boczkowski2018limits} up to lower order terms, thus demonstrating how a larger sample size can linearly accelerate the information spreading time.

From a heuristic perspective, the structure of our protocols, particularly Algorithm \SF, can be conceptualized as comprising two main stages: a ``listening stage'' (Phases 0 and 1 in Algorithm \SF) and a ``majority-consensus stage'' (Phase 2 in Algorithm \SF). During the ``listening stage'', unknowledgeable agents independently gather samples while refraining from relaying information, maintaining neutrality to minimize interference with the sampling process. After this period, each agent has collected enough samples to form a non-trivial ``guess''. In the ``majority-consensus stage'', agents then share their guesses and reinforce the prevailing majority. 
The simplicity and effectiveness demonstrated here for this heuristic suggest its potential applicability to information dissemination scenarios in stochastic environments.

It might be interesting to interpret natural scenarios through the lens of our findings. 
One example is the process of {\em house-hunting} observed in ant species such as {\em Temnothorax}. When their nest is damaged, these ants embark on a house-hunting process to find and select a new site for their colony \cite{DBLP:conf/podc/GhaffariMRL15,pratt2002quorum}. Suppose the possible nest locations have already been identified, and the remaining task is to reach a consensus on the best site. This process can be divided into two phases. In the first phase, scout ants are sent to evaluate the quality of different sites --- a process that can be expected to be noisy. Upon completing an evaluation, a scout ant returns to the nest and begins a {\em tandem run}, leading another ant from the nest to the site. The leader moves slowly enough to keep the follower engaged, with regular pauses to ensure the follower remains close. Communication occurs through frequent physical contact, such as antennal touching. This allows the leader to adjust pace and direction, maintaining coordination with the follower, which can in turn, learn the path to the new site. When reaching the site, the newly recruited ant assesses it herself and, upon returning to the nest, repeats the tandem run process. The rate of tandem runs for a particular site depends on its quality, but each tandem run is time-consuming delaying the assessment phase. While relaying quality estimates between ants could have sped up the process, it might have also introduced errors due to the difficulty of accurately communicating those measurements. Instead, tandem runs, though slow, allow ants to gather first-hand information by evaluating the sites themselves. 
In the language of this paper, one may interpret the ants' strategy as investing time (performing tandem runs) to increase the number of sources (and thereby the bias) instead of attempting to relay estimates. The second phase involves a quorum-sensing mechanism, where if a sufficient number of ants are present at a site, that site is chosen by the colony. This phase bears similarities to the majority-consensus phase.

Another example is the {\em cooperative transport} process in longhorn ``crazy ants'' mentioned in Section \ref{sec:motivation}, where a knowledgeable ant is thought to transmit directional information to the group of carrying ants. Communication there relies on ants sensing the load's movement rather than direct interaction \cite{gelblum2015ant,gelblum2016emergent}. In \cite{gelblum2015ant}, the authors used a variant of the Voter model to demonstrate that such directional information could, theoretically, be transmitted eventually. However, the question of whether this process could be achieved quickly remained unresolved. Here, we showed that 
when the sample size is large, e.g., when $h=n$, the noisy information spreading problem could be solved in logarithmic time. The averaging operation over all samples gathered by an agent at a given round corresponds to a noisy sampling of the general tendency of the system \cite{gelblum2015ant,gorbonos2016long,gelblum2016emergent,sarfati2021self}.
Our results thus demonstrate how sensing the general tendency in the population can suffice to quickly transmit information from a few knowledgeable individuals to the rest of the group.   

Overall, in light of the above, our results suggest that in the context of reliable and efficient information dissemination, a large sample size can effectively compensate for the lack of structure in noisy environments.

\paragraph{Acknowledgments.} The authors thank Ofer Feinerman for the helpful and insightful discussions regarding ants. \\
AK was supported by the Israel Science Foundation (\url{https://www.isf.org.il}) grant 1574/24.  \\
ND has been supported by the AID INRIA-DGA project n°2023000872 “BioSwarm”.

\section{Handling Non-Uniform Noise}  \label{sec:non-uniform}

In this section, we demonstrate that, given a~$\delta$-upper bounded noise matrix~$N$, agents can apply an artificial noise~$P$ to incoming messages such that the combined effect of~$N$ and~$P$ is~$\delta'$-uniform, with~$\delta'$ slightly exceeding~$\delta$.
This will allow us to focus on uniform noise when proving the correctness of our algorithms in \Cref{sec:upper_bound}.
In what follows, we formalize this statement.


\begin{definition} [Simulation with artificial noise] \label{def:simulation}
    Given a protocol~$\mathcal{A}$ and a (stochastic) noise  matrix~$P$, we say that agents {\em simulate~$\mathcal{A}$ with artificial noise~$P$} when they perform the following operations.
    In every round, each agent replaces every sampled message~$i$ by a random~$\sigma \in \Sigma$, such that~$\Pr(\sigma = j) = P_{ij}$. Each agent then applies protocol~$\mathcal{A}$ to the resulting set of modified messages. 
\end{definition}

\begin{definition} \label{def:delta}
    Let~$f : [ 0, \frac{1}{d} )  \rightarrow \bbR$ s.t. $f(0) = 0$, and for every~$\delta \in \pa{ 0, \frac{1}{d} }$,
    \begin{equation*}
        f(\delta) := \pa{d + \frac{1}{2} \cdot \frac{1}{(d-1)^2} \cdot \frac{1-d\, \delta}{\delta}}^{-1}.
    \end{equation*}
\end{definition}
See \Cref{fig:function_f} for an illustration.

\begin{theorem} 
\label{thm:reduction}
    Let~$N$ be a~$\delta$-upper bounded noise matrix on an alphabet~$\Sigma$ of size $d$, and let~$\mathcal{A}$ be a protocol.
    Let~$\delta' = f(\delta)$ as given by \Cref{def:delta}.
    There exists a stochastic matrix~$P = P(N,\delta)$, such that the simulation of~$\mathcal{A}$ with artificial noise~$P$ (\Cref{def:simulation}) under noise~$N$ has the same distribution as Protocol~$\mathcal{A}$ under a~$\delta'$-uniform noise matrix.
\end{theorem}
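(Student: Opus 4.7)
The composition of sampling through the ambient noise $N$ followed by the artificial noise $P$ applied to the received symbol induces, from the point of view of the protocol $\mathcal{A}$, an effective noise channel whose transition matrix is simply the product $NP$. My plan is therefore to exhibit a stochastic matrix $P$ such that $NP = T$, where $T$ denotes the $\delta'$-uniform noise matrix on $\Sigma$. Since \Cref{cor:N_is_invertible} guarantees that $N$ is invertible, the natural (and unique) candidate is $P := N^{-1} T$, and the entire proof reduces to verifying that, for the specific value $\delta' = f(\delta)$, this matrix is stochastic.

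The row-sum condition is essentially automatic: since $N$ is stochastic we have $N \mathbf{1} = \mathbf{1}$, hence $N^{-1} \mathbf{1} = \mathbf{1}$, while $T \mathbf{1} = \mathbf{1}$ by construction, so $P \mathbf{1} = N^{-1} T \mathbf{1} = \mathbf{1}$. For non-negativity, I would write $T = (1 - d \delta') I + \delta' J$, where $J$ denotes the all-ones matrix, and use the identity $N^{-1} J = J$ (each column of $J$ equals $\mathbf{1}$, which is fixed by $N^{-1}$) to obtain
\begin{equation*}
    P = (1 - d \delta') \, N^{-1} + \delta' \, J,
\end{equation*}
so that entrywise
\begin{equation*}
    P_{\sigma \sigma'} = (1 - d \delta')(N^{-1})_{\sigma \sigma'} + \delta'.
\end{equation*}
Positive entries of $N^{-1}$ pose no problem; for a negative entry of magnitude $x > 0$, non-negativity of $P_{\sigma \sigma'}$ translates into $\delta' \geq x / (1 + d x)$. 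Setting $B := \max\{ |(N^{-1})_{\sigma \sigma'}| : (N^{-1})_{\sigma \sigma'} < 0 \}$, the requirement thus becomes $\delta' \geq 1/(d + 1/B)$, and this is matched exactly by $\delta' = f(\delta)$ provided $B \leq \frac{2(d-1)^2 \delta}{1 - d\delta}$.

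The crux of the proof is therefore this last entrywise bound on the negative entries of $N^{-1}$, and it is the step I expect to be the main technical obstacle. I would invoke \Cref{cor:N_is_invertible} for the existence of $N^{-1}$ and the linear-algebraic estimates of \Cref{sec:linear_algebra} (as used in \Cref{prop:non-uniform_case}) for the quantitative control. A natural route is to decompose $N = U_\delta + E$, where $U_\delta := (1 - d \delta) I + \delta J$ is the $\delta$-uniform matrix (whose inverse admits a closed form as a rank-one perturbation of a multiple of the identity), and $E$ is a perturbation with zero row sums, non-negative diagonal, and non-positive off-diagonal entries of absolute value at most $\delta$. Expanding $N^{-1} = (U_\delta + E)^{-1}$ around $U_\delta^{-1}$ and tracking the signs and magnitudes of successive correction terms should yield the prescribed bound. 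Once $P \geq 0$ is established, the distributional equivalence claim is immediate: by construction, the transition matrix from the displayed symbol to the observed (post-artificial-noise) symbol is exactly $NP = T$, and all the random ingredients (sampling, ambient noise, artificial noise) are mutually independent across samples and rounds, so the joint distribution of the sequence of messages fed to $\mathcal{A}$ coincides with that produced by a genuine $\delta'$-uniform channel $T$.
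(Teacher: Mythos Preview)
Your overall architecture is exactly the paper's: set $P := N^{-1} T$ with $T$ the $\delta'$-uniform matrix, verify weak stochasticity via $N^{-1}\mathbf{1} = \mathbf{1}$, expand $P_{\sigma\sigma'} = (1 - d\delta')(N^{-1})_{\sigma\sigma'} + \delta'$, and reduce non-negativity to the entrywise bound $B \leq \frac{2(d-1)^2 \delta}{1 - d\delta}$ on the negative entries of $N^{-1}$. The distributional-equivalence step is also identical.

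The only divergence is in how you establish that last bound. The paper does \emph{not} perturb around $U_\delta$; instead it uses a one-line submultiplicativity argument (\Cref{claim:yet_another_bound}): the most negative entry of $N^{-1}$ satisfies
\[
-N^{-1}_{i^\star j^\star} \;\leq\; |N^{-1}_{i^\star j^\star} - \mathds{1}(i^\star = j^\star)| \;\leq\; \lVert N^{-1} - I \rVert_\infty \;\leq\; \lVert N^{-1} \rVert_\infty \cdot \lVert I - N \rVert_\infty,
\]
and then plugs in $\lVert N^{-1} \rVert_\infty \leq \frac{d-1}{1 - d\delta}$ from \Cref{cor:N_is_invertible} together with the direct estimate $\lVert I - N \rVert_\infty \leq 2(d-1)\delta$. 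This works for the full range $\delta \in [0, 1/d)$. By contrast, your Neumann-series route $N^{-1} = (I + U_\delta^{-1} E)^{-1} U_\delta^{-1}$ requires $\lVert U_\delta^{-1} E \rVert_\infty < 1$ for convergence; since $\lVert U_\delta^{-1} \rVert_\infty = \frac{d-1}{1-d\delta}$ and $\lVert E \rVert_\infty \leq 2(d-1)\delta$, this only holds when $\frac{2(d-1)^2 \delta}{1-d\delta} < 1$, i.e., for $\delta$ bounded well away from $1/d$. So the perturbative approach, as sketched, does not cover the full parameter range, whereas the paper's norm-submultiplicativity trick does so cleanly and for free once \Cref{cor:N_is_invertible} is in hand.
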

The proof of \Cref{thm:reduction} is given in \Cref{sec:non_uniform_noise}.

\begin{figure}[htbp]
    \centering
    \includegraphics[width=0.5\linewidth]{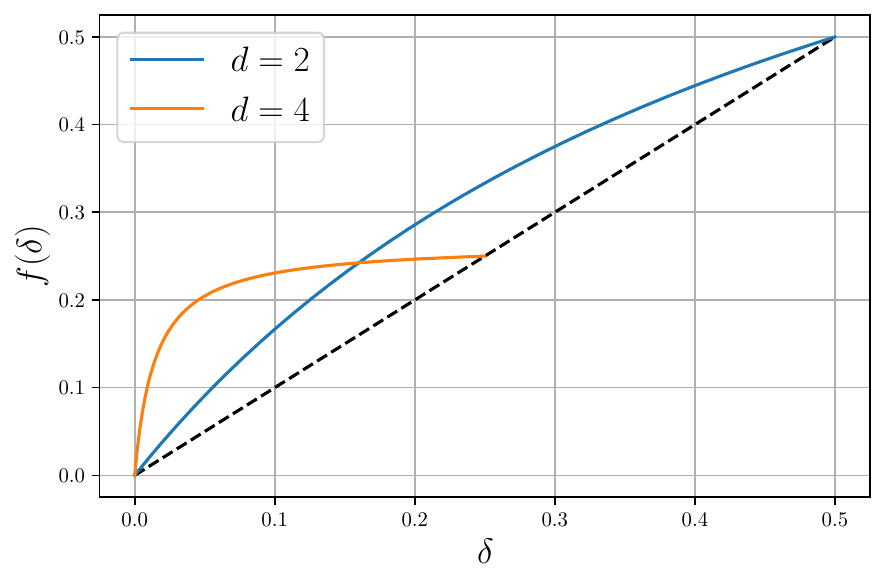}
    \caption{Plot of function~$f$ for 2 different values of~$d$.}
    \label{fig:function_f}
\end{figure}

\subsection{Preliminary Results on the Noise Matrix} 
\label{sec:linear_algebra}

Before presenting the proof of \Cref{thm:reduction}, we first define several key terms and notations that will be useful for this section and establish a few intermediate results.

\begin{definition} [Stochasticity] \label{def:stochasticity}
    We say that a matrix $A \in \bbR^{n \times n}$ is {\em weakly-stochastic} if and only if the coefficients on each row sum to~$1$, i.e.,
    \begin{equation*}
        \forall i \in \{1,\ldots,n\}, ~ \sum_{j=1}^n A_{ij} = 1.
    \end{equation*}
    If, in addition, all coefficients of~$A$ are non-negative, then we say that~$A$ is {\em stochastic}.
\end{definition}

\begin{definition}[Operator norm]
    For any matrix~$A \in \bbR^{d \times d}$, we write~$\lVert A \rVert_\infty$ to denote the operator norm, defined  as
    \begin{equation} \label{eq:norm_definition}
        \lVert A \rVert_\infty := \underset{ x \in \bbR^d }{ \sup } \frac{\lVert A \cdot x \rVert_\infty }{\lVert  x \rVert_\infty }.
    \end{equation}
    In general, it holds that
    \begin{equation} \label{thm:infinity_norm_calculation}
        \lVert A \rVert_\infty = \max_{1\leq i \leq d} \sum_{j=1}^d | A_{i,j} |.
    \end{equation}
\end{definition}

\Cref{lemma:stochastic_product,claim:stochastic_inverse} below outline simple properties of weakly-stochastic matrices that we will use throughout our analysis. The proofs are straightforward but we give them here, for completeness.

\begin{claim} \label{lemma:stochastic_product}
    Let~$A,B \in \bbR^{\ell \times \ell}$, for some integer $\ell$. If $AB$ is weakly-stochastic and $B$ is weakly-stochastic, then~$A$ is weakly-stochastic.
\end{claim}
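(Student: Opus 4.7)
The plan is to reformulate weak-stochasticity in terms of the action of a matrix on the all-ones vector, which turns the claim into a trivial associativity computation. Concretely, letting $\mathbf{1} \in \mathbb{R}^\ell$ denote the column vector with every entry equal to $1$, observe that a matrix $M \in \mathbb{R}^{\ell \times \ell}$ is weakly-stochastic if and only if $M \mathbf{1} = \mathbf{1}$, since the $i$-th entry of $M\mathbf{1}$ is precisely $\sum_j M_{ij}$.

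Given this reformulation, the two hypotheses become $B \mathbf{1} = \mathbf{1}$ and $(AB)\mathbf{1} = \mathbf{1}$. Using associativity of matrix-vector multiplication, I would then write
\begin{equation*}
    A \mathbf{1} \;=\; A(B \mathbf{1}) \;=\; (AB)\mathbf{1} \;=\; \mathbf{1},
\end{equation*}
which is exactly the statement that $A$ is weakly-stochastic. There is no real obstacle here: the claim is essentially a restatement of the fact that $\mathbf{1}$ is a right eigenvector (with eigenvalue $1$) of any weakly-stochastic matrix, and such an eigenvector relation is preserved when one peels off a right factor that also fixes $\mathbf{1}$.

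The only minor subtlety worth remarking on, if desired for completeness, is that the conclusion does \emph{not} upgrade to stochasticity: the claim says nothing about the signs of the entries of $A$, and indeed $A$ may well have negative entries (which is precisely the situation relevant to the use of this claim later in \Cref{sec:non-uniform}, where $A$ will play the role of an inverse noise matrix). So I would present the proof as the three-line computation above, with a one-sentence reminder that weak-stochasticity is equivalent to fixing $\mathbf{1}$, and no further work is required.
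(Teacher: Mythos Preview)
Your proof is correct and essentially the same as the paper's: the paper writes out the coordinate-wise computation $1 = \sum_j (AB)_{ij} = \sum_k A_{ik}\sum_j B_{kj} = \sum_k A_{ik}$, which is exactly your identity $A\mathbf{1} = A(B\mathbf{1}) = (AB)\mathbf{1} = \mathbf{1}$ unpacked entry by entry. Your formulation via the all-ones eigenvector is slightly cleaner, but the underlying argument is identical.
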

\begin{proof}
    We have for every~$i \in \{1,\ldots,\ell\}$,
    \begin{align*}
        1 &= \sum_{j=1}^\ell (AB)_{i,j} & \text{(since~$AB$ is weakly-stochastic)} \\
        &= \sum_{j=1}^\ell \sum_{k=1}^\ell A_{i,k} B_{k,j} = \sum_{k=1}^\ell \sum_{j=1}^\ell A_{i,k} B_{k,j} = \sum_{k=1}^\ell A_{i,k} \pa{\sum_{j=1}^\ell  B_{k,j}} \\
        &= \sum_{k=1}^\ell A_{i,k}, & \text{(since~$B$ is weakly-stochastic)}
    \end{align*}
    which concludes the proof of \Cref{lemma:stochastic_product}.
\end{proof}

\begin{claim} \label{claim:stochastic_inverse}
    Let~$A \in \bbR^{\ell \times \ell}$. If $A$ is invertible and weakly-stochastic, then~$A^{-1}$ is weakly-stochastic.
\end{claim}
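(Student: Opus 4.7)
The statement is a short linear-algebra fact and I would prove it in essentially one line, by exploiting the characterization of weak-stochasticity as ``the all-ones vector is a right eigenvector with eigenvalue~$1$''. The plan is the following.

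First, I would recast the definition: a matrix $M \in \bbR^{\ell \times \ell}$ is weakly-stochastic if and only if $M \mathbf{1} = \mathbf{1}$, where $\mathbf{1} \in \bbR^\ell$ denotes the all-ones column vector. Indeed, the $i$-th coordinate of $M \mathbf{1}$ is exactly $\sum_{j=1}^{\ell} M_{i,j}$, so this vector equation is precisely the row-sum condition.

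Given this reformulation, I would argue as follows. Since $A$ is weakly-stochastic, $A \mathbf{1} = \mathbf{1}$. Applying $A^{-1}$ on the left of both sides, which is legitimate since $A$ is assumed invertible, yields $\mathbf{1} = A^{-1} A \mathbf{1} = A^{-1} \mathbf{1}$. By the reformulation above, this is exactly the statement that $A^{-1}$ is weakly-stochastic, concluding the proof.

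As an alternative, entirely equivalent route, I could invoke \Cref{lemma:stochastic_product} directly: the identity matrix $I = A^{-1} A$ is trivially weakly-stochastic, and $A$ is weakly-stochastic by assumption, so \Cref{lemma:stochastic_product} applied to the product $A^{-1} A$ immediately gives that $A^{-1}$ is weakly-stochastic. There is no real obstacle here; the only thing to note is that ``weakly-stochastic'' makes no sign restriction on the entries, which is exactly why invertibility is the only requirement one needs on $A$ beyond weak-stochasticity.
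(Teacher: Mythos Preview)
Your proposal is correct, and your alternative route via \Cref{lemma:stochastic_product} applied to the product $A^{-1}A = I$ is exactly the paper's own proof. The direct eigenvector argument you give first is an equally valid (and arguably more transparent) one-line variant of the same idea.
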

\begin{proof}
    Since $A$ is weakly-stochastic and~$A^{-1}\cdot A = I$ is weakly-stochastic, $A^{-1}$ is weakly-stochastic by \Cref{lemma:stochastic_product}.
\end{proof}

It is well-known that strictly diagonally dominant matrices are invertible.
In what follows, we show that this result can be generalized to all~$\delta$-upper bounded matrices -- even though they may not be strictly diagonally dominant in general -- and we bound the operator norm of their inverse.
We start with a technical lemma, from which the aforementioned result will follow.

    \begin{lemma}
    \label{claim:lower_bound_infinity_norm}
        Let~$N$ be a $\delta$-upper bounded matrix of dimension $d \times d$.
        Then
        \begin{equation*}
            \underset{ \lVert x \rVert_\infty = 1 }{ \inf } \lVert N \cdot  x \rVert_\infty \geq \frac{1 - d \, \delta}{d-1}.
        \end{equation*}
    \end{lemma}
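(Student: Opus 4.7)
Fix $x$ with $\|x\|_\infty = 1$. Since $\|Nx\|_\infty = \|N(-x)\|_\infty$, by possibly replacing $x$ with $-x$ I may assume $\max_k x_k = 1$, attained at some index $i^*$; set $\alpha := -\min_k x_k$ and $S := \sum_k x_k$. The plan is to exhibit a single coordinate $i$ at which $|(Nx)_i| \geq \tfrac{1-d\delta}{d-1}$ holds; that coordinate will be either $i^*$ or the argmin $j^*$, depending on the sign of $S$.

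The central algebraic step is to introduce the auxiliary matrix $A := N - \delta J$, where $J$ is the $d \times d$ all-ones matrix. The $\delta$-upper bounded hypothesis gives $A_{ii} = N_{ii} - \delta \geq 1 - d\delta \geq 0$ and $A_{ij} = N_{ij} - \delta \leq 0$ for $i \neq j$; row-stochasticity of $N$ then makes every row of $A$ sum to $1-d\delta$. From $A_{i^*,k} \leq 0$ together with $x_k \leq 1$, one has $A_{i^*,k} x_k \geq A_{i^*,k}$, hence
\begin{equation*}
    (Ax)_{i^*} \;\geq\; \sum_k A_{i^*,k} \;=\; 1 - d\delta.
\end{equation*}
Symmetrically, when $\alpha > 0$ and $j^*$ is an argmin, using $x_k \geq -\alpha$ together with $A_{j^*,k} \leq 0$ for $k \neq j^*$ yields $(Ax)_{j^*} \leq -\alpha(1-d\delta)$.

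Since $Nx = Ax + \delta S \cdot \mathbf{1}$, the argument concludes by splitting on the sign of $S$. If $S \geq 0$, then $(Nx)_{i^*} \geq 1-d\delta \geq \tfrac{1-d\delta}{d-1}$ (using $d \geq 2$), and we are done. If instead $S < 0$, the key observation is that one must then have $\alpha > \tfrac{1}{d-1}$: otherwise every $x_k$ would satisfy $x_k \geq -\alpha \geq -\tfrac{1}{d-1}$, forcing $S \geq 1 - (d-1)\cdot \tfrac{1}{d-1} = 0$, a contradiction. Combining $\alpha > \tfrac{1}{d-1}$ with the bound $(Ax)_{j^*} \leq -\alpha(1-d\delta)$ and $\delta S \leq 0$ yields $(Nx)_{j^*} \leq -\alpha(1-d\delta) < -\tfrac{1-d\delta}{d-1}$, so $\|Nx\|_\infty \geq |(Nx)_{j^*}| > \tfrac{1-d\delta}{d-1}$.

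The main obstacle I anticipate is picking the right auxiliary matrix $A$ (subtracting $\delta J$ is what converts the $\delta$-upper bounded structure into non-positive off-diagonals with controlled row sums) and then noticing the coupling between the sign of $S$ and a forced lower bound on $\alpha$; once both observations are in place the proof reduces to the short two-case argument above.
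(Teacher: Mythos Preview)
Your proof is correct and takes a genuinely different route from the paper's.

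The paper normalizes by assuming (after possibly negating $x$) that $\sum_{j:x_j\geq 0} x_j \geq \sum_{j:x_j<0}|x_j|$, i.e.\ $S\geq 0$, and then argues only at the index $i^\star=\arg\max x_j$. To bound $(Nx)_{i^\star}$ it constructs two intermediate $\delta$-upper bounded matrices $N',N''$ (progressively pushing off-diagonal mass up to $\delta$ and diagonal mass down to $1-(d-1)\delta$) and shows $f(N)\geq f(N')\geq f(N'')\geq (1-d\delta)\,x_{i^\star}$; finally it proves $x_{i^\star}\geq \tfrac{1}{d-1}$ by the same counting argument you use for $\alpha$. Your decomposition $N=A+\delta J$ replaces this matrix-comparison chain entirely: the non-positive off-diagonals and constant row sums $1-d\delta$ of $A$ give $(Ax)_{i^\star}\geq 1-d\delta$ and $(Ax)_{j^\star}\leq -\alpha(1-d\delta)$ in one line each, and the $\delta S\mathbf{1}$ shift then forces the two-case split. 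Both proofs ultimately rest on the same combinatorial fact (if $S$ has the ``wrong'' sign then the opposite extreme coordinate must exceed $\tfrac{1}{d-1}$ in absolute value), but your auxiliary-matrix trick is shorter and more transparent than the paper's explicit extremization over the family of $\delta$-upper bounded matrices. One very minor quibble: when you write ``$A_{i^\star,k}\leq 0$'' to justify $A_{i^\star,k}x_k\geq A_{i^\star,k}$, this fails for $k=i^\star$ where $A_{i^\star,i^\star}\geq 0$; but since $x_{i^\star}=1$ the inequality holds there with equality, so the sum bound is unaffected.
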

    \begin{proof}
        Let $\mathcal{S} \subset \bbR^{d \times d}$ be the set of all $\delta$-upper bounded matrices of size~$d$.
        Let~$N \in \mathcal{S}$, and~$x \in \bbR^d$ s.t. $\lVert x \rVert_\infty = 1$.
        Let $ I_+ = \{ j: x_j \geq 0\}$ and $ I_- = \{ j: x_j < 0\}$.
        Recall that stochastic matrices are non-negative. Therefore, since we are ultimately interested in~$\lVert N \cdot x \rVert_\infty$, we can assume w.l.o.g. (up to replacing~$x$ by~$-x$) that
        \begin{equation} \label{eq:assumption_x}
            \sum_{j\in I_+} x_j \geq \sum_{j\in I_-} |x_j|.
        \end{equation}
        Let $i^\star= \min\{\underset{i\in I_+}{\argmax }\, x_i\} \in I_+$.
        Let~$f : \mathcal{S} \rightarrow \bbR$, defined as
        \begin{equation*}
            f(M) := \sum_{j \in I_+} M_{ i^\star j } \, x_j  -  \sum_{j \in I_-} M_{ i^\star j } \, |x_j|.
        \end{equation*}
        By the reverse triangular inequality, and since~$N$ is non-negative,
        \begin{equation} \label{eq:reverse_triangular_ineq}
            \lVert N \cdot x \rVert_\infty \geq \Big| \sum_j N_{ i^\star j } \, x_j \Big| \geq \bigg| \sum_{j \in I_+} N_{ i^\star j } \, x_j \bigg| -  \bigg| \sum_{j \in I_-} N_{ i^\star j } \, x_j \bigg| = f(N).
        \end{equation}
        Let
        \begin{equation*}
            \Delta = \sum_{j \in I_-} \pa{ \delta - N_{i^\star j}}.
        \end{equation*}
        (Since~$N$ satisfies \Cref{eq:N_def}, $\Delta \geq 0$.) We define matrix~$N'$ as
        \begin{equation*}
            \begin{cases}
                N'_{i^\star i^\star} = N_{i^\star i^\star} - \Delta, \\
                N'_{i^\star j} = N_{i^\star j} & \text{for } j \in I_+, \\
                N'_{i^\star j} = \delta & \text{for } j \in I_-.
            \end{cases}
        \end{equation*}
        The coefficients of~$N'$ in rows other than~$i^\star$ are irrelevant, but for completeness, we may set them equal to those of~$N$.
        By construction,
        \begin{equation*}
            \sum_{j} N'_{i^\star j} = \sum_{j} N_{i^\star j} = 1 \quad \text{and for every } j \neq i^\star, \quad 0 \leq N'_{i^\star j} \leq \delta.
        \end{equation*}
        Therefore
        \begin{align*}
            N'_{i^\star i^\star} = 1 - \sum_{j \neq i^\star} N_{i^\star j} &\geq 1 - (d-1) \cdot \delta, 
        \end{align*}
      and hence $N' \in \mathcal{S}$. We have that
        \begin{equation*}
            \begin{cases}
                N_{i^\star j} \geq N'_{i^\star j} & \text{for all } j \in I_+, \\
                N_{i^\star j} \leq N'_{i^\star j} & \text{for all } j \in I_-
            \end{cases}
        \end{equation*}
        which implies that~$f(N) \geq f(N')$.
        Now, let us define matrix~$N''$ as
        \begin{equation*}
            \begin{cases}
                N''_{i i} = 1-(d-1)\delta & \text{for every } i, \\
                N''_{i j} = \delta & \text{for every } i,j \text{ s.t. } i \neq j.
            \end{cases}
        \end{equation*}
        Clearly~$N'' \in \mathcal{S}$. Since by definition, $x_{i^\star} \geq x_j$ for all~$j \in I_+$, and since
        \begin{equation*}
            \begin{cases}
                N'_{i^\star i^\star} \geq N''_{i^\star i^\star},  \\
                N'_{i^\star j} \leq N''_{i^\star j} & \text{for all } j \in I_+ \setminus \{i^\star \}, \\
                N'_{i^\star j} = N''_{i^\star j} & \text{for all } j \in I_-,
            \end{cases}
        \end{equation*}
        we have that~$f(N') \geq f(N'')$. By \Cref{eq:assumption_x}, we can give a lower bound on~$f(N'')$:
        \begin{equation*}
            f(N'') = \delta \sum_{j\in I_+} x_j + \pa{1- d \, \delta} \, x_{i^\star} - \delta \sum_{j\in I_-} |x_j| \geq \pa{1- d \, \delta} \, x_{i^\star}.
        \end{equation*}
        From \Cref{eq:reverse_triangular_ineq} and since we have shown that~$f(N) \geq f(N') \geq f(N'')$, this implies that
        \begin{equation*}
            \lVert N \cdot x \rVert_\infty \geq \pa{1- d \, \delta} \, x_{i^\star}.
        \end{equation*}
        To conclude, we will show that
        \begin{equation} \label{eq:lb_xistar}
            x_{i^\star} \geq \frac{1}{d-1}.
        \end{equation}
        If $x_{i^\star} \geq 1$, \Cref{eq:lb_xistar} holds trivially, so we only have to consider the case that $x_{i^\star} < 1$. Recall that~$\lVert x \rVert_\infty = 1$, and since~$0 \leq x_j \leq x_{i^\star} < 1$ for every~$j \in I_+$, there necessarily exists~$k \in I_-$ s.t. $x_k = -1$.
        By \Cref{eq:assumption_x},
        \begin{equation*}
            \sum_{j\in I_+} x_j \geq \sum_{j\in I_-} |x_j| \geq |x_k| = 1.
        \end{equation*}
        Note that $|I_+| \leq d-1$ since at least~$k \in I_-$.
        Moreover,~$x_{i^\star}\geq x_j$ for all $j\in I_+$, which implies \Cref{eq:lb_xistar} and concludes the proof of \Cref{claim:lower_bound_infinity_norm}.
    \end{proof}
    \begin{corollary} 
    \label{cor:N_is_invertible}
        Let~$N$ be a $\delta$-upper bounded matrix of dimension $d \times d$. Then~$N$ is invertible, and
        \begin{equation*}
            \lVert N^{-1} \rVert_\infty \leq \frac{d-1}{1 - d \, \delta}.
        \end{equation*}
    \end{corollary}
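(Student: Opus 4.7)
The plan is to derive this corollary as a direct consequence of \Cref{claim:lower_bound_infinity_norm}, which already establishes the key quantitative estimate $\inf_{\lVert x \rVert_\infty = 1} \lVert N \cdot x \rVert_\infty \geq \frac{1-d\delta}{d-1}$. The only genuine work left is (i) concluding invertibility from this lower bound, and (ii) translating the same bound into the desired operator-norm inequality for $N^{-1}$. Both steps are standard, so I do not expect any real obstacle; the main thing to keep an eye on is the implicit assumption $\delta < 1/d$ (so that the right-hand side is strictly positive and $N^{-1}$ makes sense in the first place).

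For invertibility, I would argue as follows. Suppose toward a contradiction that $N \cdot x = 0$ for some nonzero $x \in \bbR^d$. Setting $y := x / \lVert x \rVert_\infty$, we obtain a unit vector in the $\ell_\infty$ norm satisfying $\lVert N \cdot y \rVert_\infty = 0$. But \Cref{claim:lower_bound_infinity_norm} guarantees $\lVert N \cdot y \rVert_\infty \geq \frac{1-d\delta}{d-1} > 0$, a contradiction. Hence $N$ has trivial kernel and is therefore invertible.

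For the bound on $\lVert N^{-1} \rVert_\infty$, I would use the standard change of variables $x = N^{-1} y$ in the definition \eqref{eq:norm_definition}. Since $N$ is invertible, this bijection on $\bbR^d \setminus \{0\}$ gives
\begin{equation*}
    \lVert N^{-1} \rVert_\infty \;=\; \sup_{y \neq 0} \frac{\lVert N^{-1} y \rVert_\infty}{\lVert y \rVert_\infty} \;=\; \sup_{x \neq 0} \frac{\lVert x \rVert_\infty}{\lVert N \cdot x \rVert_\infty} \;=\; \left( \inf_{\lVert x \rVert_\infty = 1} \lVert N \cdot x \rVert_\infty \right)^{-1},
\end{equation*}
where in the last step I used homogeneity to restrict the infimum to the unit sphere. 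Applying \Cref{claim:lower_bound_infinity_norm} to the denominator yields $\lVert N^{-1} \rVert_\infty \leq \frac{d-1}{1-d\delta}$, which is the desired inequality. Altogether the corollary reduces to a two-line argument on top of \Cref{claim:lower_bound_infinity_norm}, with no nontrivial step remaining.
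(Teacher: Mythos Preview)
Your argument is correct and matches the paper's own proof essentially line for line: invertibility by contradiction with \Cref{claim:lower_bound_infinity_norm}, followed by the change of variables $y = N^{-1}x$ to rewrite $\lVert N^{-1}\rVert_\infty$ as the reciprocal of $\inf_{\lVert x\rVert_\infty=1}\lVert Nx\rVert_\infty$. Nothing to add.
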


\begin{proof}
    Let~$N \in \mathcal{S}$.
    \Cref{claim:lower_bound_infinity_norm} implies that~$N$ is invertible. Indeed, if it was not the case, there would exist a vector $v \in \bbR ^ d$  with $\lVert v \rVert_\infty = 1$ (and hence~$v \neq 0$) s.t. $N \cdot v = 0$, 
    which would contradict the lower bound in \Cref{claim:lower_bound_infinity_norm}. Moreover, we can rewrite the norm of $N^{-1}$ (defined in \Cref{eq:norm_definition}) as
            \begin{equation*}
                \lVert N^{-1} \rVert_\infty = \underset{ x \in \bbR^d }{ \sup } \frac{\lVert N^{-1} \cdot x \rVert_\infty }{\lVert  x \rVert_\infty } = \underset{ y \in \bbR^d }{ \sup } \frac{\lVert y \rVert_\infty }{\lVert N \cdot y \rVert_\infty },
            \end{equation*}
by substituting $y = N^{-1} \cdot x$. Therefore,
            \begin{equation*}
                \lVert N^{-1} \rVert_\infty =  \underset{ y \in \bbR^d }{ \sup } \frac{\lVert y \rVert_\infty }{\lVert N \cdot y \rVert_\infty } = \underset{ y \in \bbR^d }{ \sup } \frac{ 1 }{\lVert N \cdot \frac{y}{\lVert y \rVert_\infty}  \rVert_\infty } = \pa{ \underset{ \lVert y \rVert_\infty = 1 }{ \inf } \lVert N \cdot y  \rVert_\infty }^{-1}.
            \end{equation*}
            By \Cref{claim:lower_bound_infinity_norm}, this implies
            \begin{equation*}
                \lVert N^{-1} \rVert_\infty \leq \frac{d-1}{1 - d \, \delta},
            \end{equation*}
            which concludes the proof of \Cref{cor:N_is_invertible}.
\end{proof}

\subsection{\texorpdfstring{Proof of \Cref{thm:reduction}}{Proof of Theorem 8}} \label{sec:non_uniform_noise}

First, let us observe some properties of the function~$f$, given in \Cref{def:delta}.
\begin{claim} \label{claim:f_properties}
Function~$f$ is continuous and increasing on~$[ 0, \frac{1}{d} )$, and for every~$\delta$ in this interval, 
    \begin{equation*}
        0 = f(0) \leq f(\delta) < f(\tfrac{1}{d}) = \tfrac{1}{d}.
    \end{equation*}
\end{claim}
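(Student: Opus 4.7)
The plan is to analyze the inner function $g(\delta) := d + \tfrac{1}{2(d-1)^2}\cdot\tfrac{1-d\,\delta}{\delta}$, since $f(\delta) = 1/g(\delta)$ on $(0,1/d)$. The key algebraic simplification I would use is
\[
\frac{1-d\,\delta}{\delta} = \frac{1}{\delta} - d,
\]
which makes all the monotonicity and limit computations transparent: as $\delta$ increases on $(0,1/d)$, $1/\delta$ decreases strictly, so $g(\delta)$ decreases strictly, and $f(\delta)=1/g(\delta)$ increases strictly (as long as we can verify $g(\delta) > 0$ throughout).

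First I would verify continuity and positivity of $g$ on $(0,1/d)$: the factor $\tfrac{1-d\,\delta}{\delta}$ is strictly positive on this interval because $1-d\,\delta > 0$ and $\delta > 0$, hence $g(\delta) > d > 0$, so $f = 1/g$ is well-defined, positive, and continuous on $(0,1/d)$ as a composition of continuous functions with non-vanishing denominator. Strict monotonicity of $f$ on $(0,1/d)$ then follows immediately from the strict monotonicity of $g$.

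Next I would handle the two endpoints to extend the properties to the closed-on-the-left interval $[0, 1/d)$ and identify the stated limit at $1/d$. For $\delta \to 0^+$, $\tfrac{1-d\,\delta}{\delta} \to +\infty$, so $g(\delta)\to +\infty$ and thus $f(\delta)\to 0 = f(0)$, giving continuity at $0$ and completing the definition at $0$ consistently. For $\delta \to (1/d)^-$, $\tfrac{1-d\,\delta}{\delta}\to 0^+$, so $g(\delta)\to d$ and $f(\delta)\to 1/d$, which defines $f(1/d):=1/d$ by continuous extension. Combining: since $g(\delta) > d$ strictly on $(0,1/d)$, we get $0 < f(\delta) < 1/d$ there, and together with $f(0)=0$, the chain $0=f(0)\leq f(\delta) < f(1/d) = 1/d$ holds.

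I do not expect any real obstacle here; the only minor pitfall is making sure the boundary behavior at $0$ is treated as a continuous extension (since the formula is literally undefined at $\delta = 0$) and that the value $f(1/d) = 1/d$ is understood as the one-sided limit (since $f$ is defined on a half-open interval). Both are handled by the limit computations above, and the rest reduces to one-line observations about the sign and monotonicity of $1/\delta - d$.
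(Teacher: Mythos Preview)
Your argument is correct and follows essentially the same approach as the paper: both define $g(\delta)=f(\delta)^{-1}$, observe that $g$ is positive and strictly decreasing on $(0,1/d)$, and compute the limits $g\to+\infty$ as $\delta\to 0^+$ and $g\to d$ as $\delta\to(1/d)^-$. Your write-up is in fact more explicit than the paper's (which leaves these verifications as ``clearly''), and the algebraic rewrite $\tfrac{1-d\delta}{\delta}=\tfrac{1}{\delta}-d$ is a nice touch that makes the monotonicity immediate.
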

\begin{proof}
    Let~$g : \delta \rightarrow f(\delta)^{-1}$.
    The function~$g$ is clearly well-defined, positive, and decreasing on the interval~$\left(0,\frac{1}{d}\right)$. Moreover, $g(\frac{1}{d}) = d$, and
    \begin{equation*}
        \lim_{\delta \rightarrow 0} g(\delta) = +\infty,
    \end{equation*}
    which concludes the proof of \Cref{claim:f_properties}.
\end{proof}

We now prove the main result of the section, from which \Cref{thm:reduction}  follows directly.
\begin{proposition} 
\label{prop:non-uniform_case}
    Let~$N$ be a~$\delta$-upper bounded noise matrix on an alphabet~$\Sigma$ of size $d$.
    There exists a stochastic matrix~$P$ such that~$N \cdot P$ is~$\delta'$-uniform, where~$\delta' = f(\delta)$ is given by \Cref{def:delta}.
\end{proposition}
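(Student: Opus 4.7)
The strategy is to set $P := N^{-1} U$, where $U$ is the target $\delta'$-uniform matrix, written as $U = (1 - d\delta') I + \delta' J$ with $J$ the all-ones matrix of order $d$. By \Cref{cor:N_is_invertible}, $N$ is invertible, so $P$ is well-defined and $NP = U$ by construction. Since $N^{-1}$ is weakly-stochastic by \Cref{claim:stochastic_inverse}, we have $N^{-1} J = J$, yielding the convenient rewriting
\[
P = (1 - d\delta') \, N^{-1} + \delta' \, J.
\]
Row-stochasticity of $P$ is then immediate: each row sums to $(1 - d\delta') \cdot 1 + \delta' \cdot d = 1$. It only remains to check that $P_{ij} = (1-d\delta')(N^{-1})_{ij} + \delta'$ is non-negative. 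This is trivial whenever $(N^{-1})_{ij} \geq 0$, so the crux is to control $|(N^{-1})_{ij}|$ for entries with $(N^{-1})_{ij} < 0$: non-negativity is equivalent to the bound $|(N^{-1})_{ij}| \leq \delta'/(1 - d\delta')$, and a short computation shows that plugging in $\delta' = f(\delta)$ from \Cref{def:delta} turns the right-hand side into exactly $\tfrac{2(d-1)^2 \, \delta}{1 - d\delta}$.

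Proving this per-entry bound on $N^{-1}$ is the technical heart of the proof. The plan is to fix a row $i$, write $r_j := (N^{-1})_{ij}$, and exploit the identity $\sum_k r_k \, N_{kj} = \delta_{ij}$ coming from $N^{-1} N = I$. Isolating $r_j \, N_{jj}$ and invoking the two defining inequalities of $\delta$-upper boundedness, namely $N_{jj} \geq 1 - (d-1)\delta$ and $N_{kj} \leq \delta$ for $k \neq j$, together with the $\ell_1$-norm estimate $\|r\|_1 \leq \|N^{-1}\|_\infty \leq (d-1)/(1 - d\delta)$ provided by \Cref{cor:N_is_invertible}, would give (in both the cases $j = i$ and $j \neq i$, which are essentially parallel after passing to absolute values)
\[
|r_j| \, (N_{jj} + \delta) \leq \delta \, \|r\|_1, \quad \text{hence} \quad |r_j| \leq \frac{(d-1) \, \delta}{(1 - d\delta) \, (1 - (d-2)\delta)}.
\]
An elementary algebraic check, reducing to the inequality $1 - (d-2)\delta \geq \tfrac{1}{2(d-1)}$ (which holds whenever $d \geq 2$ and $\delta < 1/d$), then confirms that this upper bound is itself at most the target $\tfrac{2(d-1)^2 \, \delta}{1 - d\delta}$, and the proposition follows.

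The main obstacle is precisely this per-entry control on the negative coefficients of $N^{-1}$: using the operator norm bound of \Cref{cor:N_is_invertible} as a black box would only control the total $\ell_1$-mass of negative entries in each row by a quantity of order $(d-2)/2$ for small $\delta$, which completely misses the correct $\Theta(\delta)$ scaling one needs. The trick is to apply $N^{-1} N = I$ column by column rather than as a global norm estimate, so that the smallness of each individual off-diagonal coefficient $N_{kj} \leq \delta$ is decoupled from the ambient $\ell_1$-mass of the row and produces the sharp dependence on $\delta$ encoded in $f$.
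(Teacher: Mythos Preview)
Your proof is correct and follows the same overall architecture as the paper's: set $P = N^{-1}U$ with $U$ the target $\delta'$-uniform matrix, observe that $P$ is weakly stochastic because $N^{-1}$ is, and then reduce non-negativity of $P$ to the entrywise bound $-(N^{-1})_{ij} \leq \tfrac{2(d-1)^2\delta}{1-d\delta}$ (which is exactly $\delta'/(1-d\delta')$ for $\delta'=f(\delta)$).

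Where you diverge is in how you establish this last bound. The paper uses the matrix identity $N^{-1}-I = N^{-1}(I-N)$ together with submultiplicativity of the operator norm: since $\|I-N\|_\infty \leq 2(d-1)\delta$ (each row has absolute sum at most $(d-1)\delta + (d-1)\delta$) and $\|N^{-1}\|_\infty \leq (d-1)/(1-d\delta)$ by \Cref{cor:N_is_invertible}, one gets $-(N^{-1})_{ij} \leq |(N^{-1}-I)_{ij}| \leq \|N^{-1}-I\|_\infty \leq \tfrac{2(d-1)^2\delta}{1-d\delta}$ in one line. Your route instead reads off the columns of $N^{-1}N=I$ directly, obtaining the sharper intermediate estimate $|(N^{-1})_{ij}| \leq \tfrac{(d-1)\delta}{(1-d\delta)(1-(d-2)\delta)}$ for negative entries and then relaxing it. Both arguments lean on the same $\ell_1$-row bound from \Cref{cor:N_is_invertible}; the paper's is slicker and more conceptual, while yours is more elementary and, as a bonus, yields a tighter constant before relaxation. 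One presentational caveat: your displayed inequality $|r_j|(N_{jj}+\delta)\leq \delta\|r\|_1$ fails in the case $j=i$ with $r_i>0$ (take $N=I$), so you should state explicitly that it is claimed only for $r_j<0$, which is all that is needed.
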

\begin{proof}
    Let~$\delta' = f(\delta)$ as given by \Cref{def:delta}, and let~$T$ be the~$\delta'$-uniform noise matrix on~$\Sigma$, i.e.,
    \begin{equation*}
        T_{i,j}= \begin{cases}
            1-(d-1) \, \delta' & \text{for } i=j, \\
            \delta' & \text{for } i\neq j.
        \end{cases}
    \end{equation*}
    By \Cref{cor:N_is_invertible}, $N$ is invertible, and we can consider the matrix~$P := N^{-1} \cdot T$.
    By \Cref{lemma:stochastic_product,claim:stochastic_inverse}, $P$ is weakly stochastic.
    Now, we will show that~$P$ is stochastic.
    Let
    \begin{equation*}
        i^\star,j^\star := \underset{i,j}{\mathrm{argmin}} \:  \{N^{-1}_{i,j}\}.
    \end{equation*}

    \begin{claim} \label{claim:yet_another_bound}
        We have that
        \begin{equation*}
            - N_{i^\star j^\star}^{-1} \leq 2 (d-1)^2 \cdot \frac{\delta}{1-d \, \delta}.
        \end{equation*}
    \end{claim}
    \begin{proof}[Proof of \Cref{claim:yet_another_bound}]
        By \Cref{cor:N_is_invertible},
        \begin{align*}
        -N^{-1}_{i^\star,j^\star} 
        &\leq -N^{-1}_{i^\star,j^\star} + \mathds{1}(i^\star=j^\star)
        \leq |N^{-1}_{i^\star,j^\star} - \mathds{1}(i^\star=j^\star)|
        \leq \lVert N^{-1} - I \rVert_\infty  = \lVert N^{-1} \pa{ I - N} \rVert_\infty \\
        &\leq \lVert N^{-1} \rVert_\infty \cdot  \lVert\pa{ I - N} \rVert_\infty 
        \leq \frac{d-1}{1 - d \, \delta} \cdot \lVert\pa{ I - N} \rVert_\infty .
        \end{align*}    
        Moreover,
        \begin{align*}
            \lVert\pa{ I - N} \rVert_\infty &= \max_{1\leq i \leq d} \sum_{j=1}^d | \pa{ I - N }_{i,j} | & \text{(by \Cref{thm:infinity_norm_calculation})} \\
            &\leq 1 - \pa{1-(d-1) \delta} + (d-1) \delta & \text{(since~$N$ satisfies \Cref{eq:N_def})} \\
            &= 2 (d-1) \cdot \delta,
        \end{align*}
        which concludes the proof of \Cref{claim:yet_another_bound}.
    \end{proof}
    Assuming~$N_{i^\star j^\star}^{-1} \neq 0$, we have
    \begin{align*}
        \delta' = f(\delta) &= \pa{d + \frac{1}{2} \cdot \frac{1}{(d-1)^2} \cdot \frac{1-d\, \delta}{\delta}}^{-1} & \text{(restating \Cref{def:delta})} \\
        &\geq \pa{ d - \frac{1}{N_{i^\star j^\star}^{-1}} }^{-1} & \text{(by \Cref{claim:yet_another_bound})} \\
        &= \frac{- N_{i^\star j^\star}^{-1}}{1-N_{i^\star j^\star}^{-1} \, d}.
    \end{align*}
    Note that the inequality holds trivially when $N_{i^\star j^\star}^{-1} = 0$. It can be rewritten as
    \begin{equation} \label{eq:almost_there}
        (1-d \, \delta') \cdot N_{i^\star j^\star}^{-1} + \delta' \geq 0.
    \end{equation}
    Finally, for every~$i,j \in \{1,\ldots,d\}$,
    \begin{align*}
        P_{i,j} &= (1-(d-1) \, \delta') N^{-1}_{i,j} + \delta' \pa{ \sum_{k \neq j} N^{-1}_{i,k} }& \text{(by definition since~$P := N^{-1} \cdot T$)} \\
        &= (1-d \, \delta') \cdot N_{i,j}^{-1} + \delta' & \text{(since~$N$ is weakly-stochastic)} \\
        &\geq (1-d \, \delta') \cdot N_{i^\star j^\star}^{-1} + \delta' & \text{(by definition of~$i^\star,j^\star$ and since~$(1-d \, \delta') \geq 0$)} \\
        &\geq 0. & \text{(by \Cref{eq:almost_there})}
    \end{align*}
    Therefore, $P$ is stochastic, which concludes the proof of \Cref{prop:non-uniform_case}.
\end{proof}

Having proved the existence of a suitable artificial noise matrix $P$ so that the original noise matrix $N$ can be transformed into a uniform noise matrix $T$, it remains to show, to complete the proof of \Cref{thm:reduction}, that the distribution of received messages for agents using such artificial noise is effectively the same as if the original noise were described by $T$. 

\begin{proof}[Proof of \Cref{thm:reduction}]
    Let~$\delta' = f(\delta)$ and~$P$ a stochastic matrix of artificial noise such that~$T := N \cdot P$ is a $\delta'$-uniform matrix, as given by \Cref{prop:non-uniform_case}. Consider the case that one agent~$u$ observes another agent~$v$ in round~$t$.
    We define the following variables:
    \begin{itemize}
        \item $\sigma_v$ is the message originally displayed by Agent~$v$;
        \item $\sigma_{v\rightarrow T}$ is the message obtained by applying the noise~$T=N\cdot P$ to~$\sigma_v$;
        \item $\sigma_{v\rightarrow N}$ is the message obtained by applying the original noise~$N$ to~$\sigma_v$;
        \item $\sigma_{N\rightarrow P}$ is the message obtained by applying the artificial noise~$P$ to~$\sigma_{v\rightarrow N}$.
    \end{itemize}
    For every~$i,j \in \Sigma$,
    \begin{align*}
        \Pr\pa{\sigma_{N\rightarrow P} = j \mid \sigma_v = i} &= \sum_{k \in \Sigma} \Pr\pa{ \sigma_{v\rightarrow N} = k \mid \sigma_v = i } \cdot \Pr\pa{ \sigma_{N\rightarrow P} = j \mid \sigma_v = i, \sigma_{v\rightarrow N} = k } \\
        &= \sum_{k \in \Sigma} N_{i,k} \cdot P_{k,j} = T_{i,j} = \Pr\pa{\sigma_{v\rightarrow T} = j \mid \sigma_v = i},
    \end{align*}
    which shows that if agent $u$ applies the artificial noise matrix $P$ to the received messages, her received message is distributed as if the noise matrix was $T$, concluding the proof.
\end{proof}

\section{Analysis with Uniform Noise} \label{sec:upper_bound}

In this section we show the correctness of Algorithms \SF and \SSF in the case that the noise matrix~$N$ is~$\delta$-uniform.
\Cref{sec:useful_lemmas} establishes probabilistic results and \Cref{sec:prelim} introduces notations that are used in the analysis of both algorithms.
Finally, \Cref{sec:SF_analysis} and \Cref{sec:SSF_analysis} focus respectively on the analysis of Algorithms \SF and \SSF.
We recommend referring to \Cref{sec:intuition} for an informal overview of the proofs presented here.
 
\subsection{Preliminary Results in Probability} \label{sec:useful_lemmas}

\begin{definition}[Rademacher distribution]
    We say that a random variable~$X$ follows a Rademacher distribution with parameter $p$, and write~$X\sim \Rad(p)$, if it takes value in $\{-1,1\}$ and $\Pr\pa{X=1}=p$.
\end{definition}

The following result is a direct observation concerning binomial random variables with a small success probability parameter~$p$.

\begin{claim} \label{claim:binomial_0_average}
Let $X\sim\binomial\pa{n,p}$. If $n \, p \leq 1$, then
    \begin{equation*}
        \Pr\pa{X=1} \geq \frac{1}{e} n \, p .
    \end{equation*}
\end{claim}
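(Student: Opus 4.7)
The probability mass function of the binomial gives $\Pr(X=1) = n p (1-p)^{n-1}$, so it suffices to show that $(1-p)^{n-1} \geq 1/e$ whenever $np \leq 1$. Setting $q := np \in [0,1]$ and $x := q/n = p$, the task reduces to the purely analytic inequality $(1-x)^{n-1} \geq 1/e$ on the region $0 \leq x \leq 1/n$.

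The plan is to take logarithms and apply the elementary inequality
\begin{equation*}
    \ln(1-x) \geq -\frac{x}{1-x}, \qquad x \in [0,1),
\end{equation*}
which follows by checking that the function $x \mapsto \ln(1-x) + x/(1-x)$ vanishes at $0$ and has derivative $x/(1-x)^2 \geq 0$. Multiplying by $n-1$ and using $x = q/n$, we obtain
\begin{equation*}
    (n-1)\ln(1-x) \;\geq\; -\frac{(n-1)\, q/n}{1 - q/n} \;=\; -\frac{(n-1)\, q}{n-q}.
\end{equation*}
Since $q \leq 1$, we have $(n-1)/(n-q) \leq 1$, so the right-hand side is at least $-q \geq -1$. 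Exponentiating yields $(1-p)^{n-1} \geq 1/e$, and multiplying by $np$ gives the claim.

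I do not expect a real obstacle here: the only subtlety is picking a log-inequality tight enough for $p$ close to $1/n$ (where the bound is essentially saturated, since $(1-1/n)^{n-1} \downarrow 1/e$), and the form $\ln(1-x) \geq -x/(1-x)$ is exactly what is needed because it interacts well with the constraint $np \leq 1$. An equivalent route would be to observe that $(1-p)^{n-1}$ is decreasing in $p$ and thus minimized at $p = 1/n$ on the feasible region, then invoke the standard fact that $(1-1/n)^{n-1} \geq 1/e$ for all $n \geq 1$; I would mention this as an alternative but keep the logarithmic derivation as the main argument since it is self-contained.
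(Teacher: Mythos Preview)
Your proposal is correct. In fact, the ``alternative route'' you sketch at the end---observe that $(1-p)^{n-1}$ is decreasing in $p$, hence minimized at $p=1/n$, and then use $(1-1/n)^{n-1}\geq 1/e$---is precisely the paper's proof; the paper just notes that $u_n:=(1-1/n)^{n-1}$ is decreasing with limit $1/e$. Your main logarithmic argument via $\ln(1-x)\geq -x/(1-x)$ is a valid self-contained alternative that avoids the monotonicity-of-$u_n$ fact, at the cost of one extra computation.
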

\begin{proof}
    Let~$u_n = (1-\frac{1}{n})^{n-1}$. $(u_n)_{n \geq 1}$ is decreasing and~$\lim_{n \rightarrow +\infty} u_n = \frac{1}{e}$, so $u_n \geq \frac{1}{e}$ for every~$n \geq 1$.
    Therefore,
    \begin{align*}
        \Pr\pa{X=1} =  n \, p \, (1-p)^{n-1} &\geq n \,p \, u_n & \text{(since~$p \leq \tfrac{1}{n}$)} \\
        &\geq \frac{1}{e} n \, p,
    \end{align*}
    which concludes the proof of \Cref{claim:binomial_0_average}.
\end{proof}

As outlined informally in \Cref{sec:intuition}, our analysis will later introduce random variables~$X_k$ that take values in~$\{-1,0,1\}$.
We will focus specifically on the non-zero variables, treating them as Rademacher variables.
The following result formalizes this approach and addresses potential dependency issues.

\begin{lemma} \label{lemma:for_SF_and_SSF}
    Let~$\{X_i\}_{i=1,\dots,m}$ be i.i.d. random variables taking values in $\{-1,0,1\}$. Let~$Y$ be the number of~$X_i$ with non-zero value, i.e.,
    \begin{equation*}
        	Y = \Big| \{ i=1,\dots,m \, :\, X_i \neq 0 \} \Big|.
    \end{equation*} 
    Let~$X=\sum_{i=1}^m X_i$. Conditioning on $\{Y=r\}$, $X$ is distributed as a sum of~$r$ Rademacher random variables with parameter $p= \Pr\pa{ X_i = 1 \mid X_i \neq 0}$.
\end{lemma}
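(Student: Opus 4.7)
The approach is a direct conditioning argument exploiting the independence of the $X_i$'s. The plan is to first encode each $X_i$ by the pair $(Z_i, S_i)$, where $Z_i := \mathds{1}\{X_i \neq 0\}$ is the indicator of nonzeroness and $S_i$ is the sign (defined, say, as $X_i$ when $Z_i = 1$, and irrelevant otherwise). By definition of $p$, we have $\Pr(S_i = 1 \mid Z_i = 1) = p$, and since the $X_i$'s are i.i.d., the pairs $(Z_i, S_i)$ are i.i.d. as well. Note also that $Y = \sum_{i=1}^m Z_i$.

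Next, I would condition on the more refined event that a specific subset $A \subseteq \{1,\dots,m\}$ of size $r$ contains exactly the indices with $Z_i = 1$. On this event, by independence of the $X_i$'s and the definition of conditional probability, the random variables $(X_i)_{i \in A}$ are conditionally i.i.d., each taking values $\pm 1$ with $\Pr(X_i = 1 \mid Z_i = 1) = p$, i.e., each distributed as $\Rad(p)$. Since the indices outside $A$ contribute $0$ to the sum, $X = \sum_{i \in A} X_i$ on this event, so $X$ conditioned on this refined event is distributed as a sum of $r$ i.i.d. $\Rad(p)$ random variables.

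Finally, to obtain the conditional distribution of $X$ given $\{Y = r\}$, I would apply the law of total probability, decomposing $\{Y = r\}$ into the disjoint union of the events indexed by subsets $A$ of size $r$. Since the conditional distribution of $X$ is the same sum of $r$ i.i.d. $\Rad(p)$ variables on each such event (it depends on $A$ only through $|A| = r$), the same distribution holds after averaging, which yields the claim. I do not anticipate a real obstacle here; the statement is essentially a formalization of the intuitive fact that, conditional on the number of nonzero entries, the pattern of signs among them is exchangeable and each sign is independently $\Rad(p)$.
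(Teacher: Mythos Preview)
Your proposal is correct and follows essentially the same approach as the paper: both decompose the event $\{Y=r\}$ into the disjoint events specifying exactly which indices are nonzero (your subsets $A$ correspond to the paper's binary vectors $a$ via the events $\mathcal{E}(a)$), observe that conditional on each such event the nonzero $X_i$'s are i.i.d.\ $\Rad(p)$ by independence, and then average. Your encoding via $(Z_i,S_i)$ is a slightly more streamlined way to phrase it, but the argument is the same.
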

\begin{proof}
    Let $a=(a_1,\dots, a_m)\in\{0,1\}^m$. Let~$\mathcal{E}(a)$ be the event that non-zero elements of~$\{ X_i \}_{i=1,\dots, m}$ corresponds to non-zero elements of~$a$, i.e.,
    \begin{equation*}
        \mathcal{E}(a) := \bigcap_{i=1}^m \left\{ |X_i| = a_i \right\}.
    \end{equation*}
    Similarly, we write~$\mathcal{E}_k(a)$ to denote the same event, without the condition on the element at index~$k$:
    \begin{equation*}
        \mathcal{E}_k(a):= \bigcap_{\substack{i=1 \\ i\neq k}}^m \left\{ |X_i| = a_i \right\}.
    \end{equation*}
    Note that all $\{\mathcal{E}(a)\}_{a \in \{0,1\}^m}$ are disjoint. Moreover, writing~$|a| = \sum_{i=1}^m a_i$, we have
    \begin{equation*}
        \{Y=r\} = \bigcup_{\substack{a\in \{0,1\}^m \\ |a| = r }} \mathcal{E}(a).
    \end{equation*}
    Since all~$\{ X_i \}_{i=1,\dots, m}$ are mutually independent, $X_k$ and $\mathcal{E}_k$ are also independent.
    We have
    \begin{align*}
        \Pr\pa{ X_k = 1 \mid X_k\neq 0, Y=r} &= \frac{\Pr\pa{ X_k = 1, Y=r }}{\Pr\pa{X_k\neq 0, Y=r}}\\
        &=\frac{\sum_{|a|= r} \Pr\pa{ X_k = 1, \mathcal{E}(a)   }}{\sum_{|a|= r} \Pr\pa{ X_k \neq 0, \mathcal{E}(a)   }}\\
        &=\frac{\sum_{\substack{|a|= r \\ a_k=1}} \Pr\pa{ X_k = 1, \mathcal{E}(a)   }}{\sum_{\substack{ |a|= r \\ a_k=1}} \Pr\pa{ X_k \neq 0, \mathcal{E}(a) }} & \text{(when~$a_k = 0$, $\Pr\pa{ X_k \neq 0, \mathcal{E}(a) } = 0$)} \\
        &= \frac{\sum_{\substack{|a|= r \\ a_k=1 }} \Pr\pa{ X_k = 1, \mathcal{E}_k(a) }}{\sum_{\substack{|a|= r \\ a_k=1 }} \Pr\pa{ X_k \neq 0, \mathcal{E}_k(a) }}\\
        &= \frac{\sum_{\substack{|a|= r \\ a_k=1 }} \Pr\pa{ X_k = 1} \cdot \Pr\pa{ \mathcal{E}_k(a) }}{\sum_{\substack{|a|= r \\ a_k=1 }} \Pr\pa{ X_k \neq 0} \cdot \Pr\pa{ \mathcal{E}_k(a) }} & \text{($X_k$ and $\mathcal{E}_k(a)$ being independent)} \\
        &= \frac{\Pr\pa{ X_k = 1}}{\Pr\pa{ X_k \neq 0}} = \Pr\pa{ X_k = 1 \mid X_k\neq 0}.
    \end{align*}
    Moreover, for any $x\in\mathbb{Z}$, we have
    \begin{align*}
        \Pr\pa{ X = x \mid Y=r} &= \Pr\pa{ \sum_i X_i = x \mid Y=r} \\
        &= \sum_{|a|= r} \Pr\pa{ \mathcal{E}(a) \mid Y=r}  \cdot \Pr\pa{ \sum_i X_i = x \mid \mathcal{E}(a) } \\
        &= \sum_{|a|= r} \Pr\pa{ \mathcal{E}(a) \mid Y=r }  \cdot \Pr\pa{ \sum_{i: a_i=1} X_i + \sum_{i: a_i = 0} X_i = x \mid \mathcal{E}(a) } \\
        &= \sum_{|a|= r} \Pr\pa{ \mathcal{E}(a)  \mid Y=r}  \cdot \Pr\pa{ \sum_{i: a_i=1} X_i = x \mid \mathcal{E}(a) } \\
        &= \sum_{|a|= r} \Pr\pa{ \mathcal{E}(a)  \mid Y=r}  \cdot \Pr\pa{ \sum_{i: a_i=1} X_i = x \mid \bigcap_{i: a_i=1} \{X_i \neq 0\} } \\
        &= \sum_{|a|= r} \Pr\pa{ \mathcal{E}(a)  \mid Y=r}  \cdot \Pr\pa{ \sum_{i= 1}^r X_i = x \mid \bigcap_{i=1}^r \{ X_i \neq 0 \}} & \pa{X_i \text{ are i.i.d.}} \\
        &= \Pr\pa{ \sum_{i= 1}^r X_i = x \mid \bigcap_{i=1}^r \{ X_i \neq 0 \}},
    \end{align*}
    which concludes the proof of \Cref{lemma:for_SF_and_SSF}.
\end{proof}

The next lemma recalls an existing lower bound on the probability to obtain more ``heads'' than ``tails'', when throwing a biased coin~$m$ times.

\begin{lemma}[From Lemma 9 in \cite{DBLP:journals/dc/FraigniaudN19}]
\label{lem:from_lemma_9}
    Let~$\theta > 0$ and~$B\sim \binomial(m,1/2+\theta)$.
    Then,
    \begin{align*}
        \Pr\pa{ B > \frac{m}{2}} - \Pr\pa{ B < \frac{m}{2}} \geq \sqrt{\frac{2m}{\pi}} \cdot  g(\theta, m),
    \end{align*}
    where
    \begin{equation*}
        g(\theta, m)=
        \begin{cases}
            \theta \, \pa{1-\theta^2}^{\frac{m-1}{2}} & \text{if } \theta < \frac{1}{\sqrt{m}}, \\
            \frac{1}{\sqrt{m}} \, \pa{1-\frac{1}{m}}^{\frac{m-1}{2}} & \text{if } \theta \geq \frac{1}{\sqrt{m}}.
        \end{cases}
    \end{equation*}
\end{lemma}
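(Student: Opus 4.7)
The plan is to express $h(\theta) := \Pr(B > m/2) - \Pr(B < m/2)$ as an integral of an explicit derivative, then estimate the resulting integral in closed form. Using the classical identity $\frac{d}{dp}\Pr(B \geq k) = m\binom{m-1}{k-1}p^{k-1}(1-p)^{m-k}$ (which follows from the incomplete-beta representation of the binomial CDF) together with $\frac{d}{d\theta} = \frac{d}{dp}$ (since $p = 1/2+\theta$), one obtains the exact formula
\[
    h'(\theta) \;=\; 2m\binom{2n}{n}(pq)^{(m-1)/2}
\]
when $m=2n+1$ is odd; in the even case $m=2n$, an analogous computation yields $h'(\theta) = m\binom{2n-1}{n}(pq)^{(m-2)/2}$. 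The two terms $\frac{d}{dp}\Pr(B\geq\lceil m/2\rceil)$ and $-\frac{d}{dp}\Pr(B\leq\lfloor m/2\rfloor - 1)$ (for odd $m$) collapse because they share the same central binomial coefficient.

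In the odd case, the substitution $u = 2s$ together with $(pq)^k = 4^{-k}(1-4s^2)^k$ transforms $h(\theta) = \int_0^\theta h'(s)\, ds$ into
\[
    h(\theta) \;=\; m\binom{2n}{n}4^{-n}\int_0^{2\theta}(1-u^2)^{(m-1)/2}\, du.
\]
The crucial lower bound is obtained by restricting the integration domain and then bounding the integrand by its endpoint value: $\int_0^{2\theta}(1-u^2)^{(m-1)/2}\, du \geq \int_0^{\theta}(1-u^2)^{(m-1)/2}\, du \geq \theta(1-\theta^2)^{(m-1)/2}$. The prefactor comparison $m\binom{2n}{n}4^{-n} \geq \sqrt{2m/\pi}$ follows from Robbins' inequality $\binom{2n}{n} \geq 4^n/\sqrt{\pi(n+1/2)}$, and is tight for this purpose since $(2n+1)^2/(n+1/2) = 2(2n+1) = 2m$. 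This delivers the first branch $g(\theta,m) = \theta(1-\theta^2)^{(m-1)/2}$ in the odd case.

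For the even case, one first verifies the identity $h_{2n+2}(\theta) = h_{2n+1}(\theta)$ via conditioning on an auxiliary sample: writing the two relevant CDF probabilities in terms of $B_{2n+1}$ and observing that $p\Pr(B_{2n+1} = n) = q\Pr(B_{2n+1} = n+1)$ (which follows directly from $\binom{2n+1}{n+1} = \binom{2n+1}{n}$ after cancellation). This reduces the even $m=2n+2$ case to the odd case at $m'=2n+1$; a short exponent-and-prefactor comparison delivers the claim for $\theta \geq 1/\sqrt{m}$, while for $\theta < 1/\sqrt{m}$ a refinement of the integral estimate (exploiting that the integrand stays close to $1$ throughout the integration range in this regime) closes the remaining gap. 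The second branch of $g(\theta,m)$ for $\theta \geq 1/\sqrt{m}$ then follows from the monotonicity of $h$ in $\theta$ (by stochastic dominance of $\binomial(m, 1/2+\theta)$ in $\theta$): $h(\theta) \geq h(1/\sqrt{m}) \geq \sqrt{2m/\pi}(1/\sqrt{m})(1-1/m)^{(m-1)/2}$ by substituting $\theta=1/\sqrt{m}$ into the first-branch bound.

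The main obstacle I anticipate is the parity bookkeeping, namely matching the $(m-1)/2$ exponent in the claim to the exponent naturally produced by the CDF derivatives, which differs between even and odd $m$ by $1/2$. A secondary subtlety is that the Robbins lower bound on $\binom{2n}{n}$ matches the target prefactor $\sqrt{2m/\pi}$ exactly in the odd case but is slightly slack in the even case, so the reduction $h_{2n+2}=h_{2n+1}$ must be combined with a careful accounting of the exponent exchange to absorb the slack.
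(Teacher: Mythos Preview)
The paper does not give a self-contained proof: after observing that $\Pr(B>m/2)-\Pr(B<m/2)=\Pr(B\ge m/2)-\Pr(B\le m/2)$ and rewriting this via the reflection $\binom{m}{k}=\binom{m}{m-k}$ as
\[
\sum_{k=\lceil m/2\rceil}^{m}\binom{m}{k}\bigl(p^{k}q^{m-k}-p^{m-k}q^{k}\bigr),
\]
it simply defers to the proof of Lemma~9 in~\cite{DBLP:journals/dc/FraigniaudN19}. Your route via the incomplete-beta derivative and the integral representation is therefore a genuinely different, self-contained approach, and it is clean for odd~$m$: the computation $h'(\theta)=2m\binom{2n}{n}(pq)^{n}$, the substitution $u=2s$, the crude integral bound $\int_{0}^{2\theta}\ge\int_{0}^{\theta}\ge\theta(1-\theta^{2})^{n}$, and the prefactor match $m\binom{2n}{n}4^{-n}\ge\sqrt{2m/\pi}$ all go through exactly as you say. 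The reduction $h_{2n+2}=h_{2n+1}$ is also correct, and the monotonicity argument for the branch $\theta\ge 1/\sqrt{m}$ is fine.

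The one place that is not yet a proof is the even case with $\theta<1/\sqrt{m}$. From $h_{m}(\theta)=h_{m-1}(\theta)\ge\sqrt{2(m-1)/\pi}\,\theta(1-\theta^{2})^{(m-2)/2}$ you need to reach $\sqrt{2m/\pi}\,\theta(1-\theta^{2})^{(m-1)/2}$, i.e.\ you must recover a factor $\sqrt{m/(m-1)}\cdot(1-\theta^{2})^{1/2}$, and this factor exceeds~$1$ precisely when $\theta^{2}<1/m$. Your phrase ``a refinement of the integral estimate \dots\ closes the remaining gap'' is where the actual work hides. Concretely, the crude bound $\int_{0}^{2\theta}(1-u^{2})^{n}\,du\ge\theta(1-\theta^{2})^{n}$ throws away the second half $\int_{\theta}^{2\theta}$ of the integral, and that half is exactly what is needed: one checks that $\int_{\theta}^{2\theta}(1-u^{2})^{n}\,du\ge\theta(1-4\theta^{2})^{n}$, and for $\theta^{2}=x/m$ with $x\in[0,1)$ the required surplus over $\theta(1-\theta^{2})^{n}$ is of order $(1-x)/(2m)$ while the extra term contributes of order $e^{-3x/2}$, which dominates for all $m\ge 3$. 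So the plan is salvageable, but you should write this estimate out rather than assert it; as stated, the even-$m$ small-$\theta$ branch is a gap.
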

\begin{proof}
    We note that
    \begin{align*}
        \Pr\pa{ B > \frac{m}{2}} - \Pr\pa{ B < \frac{m}{2}} &= \Pr\pa{ B \geq \frac{m}{2}} - \Pr\pa{ B \leq \frac{m}{2}} \\
        &= \sum_{k = \lceil m/2 \rceil}^{m} \binom{m}{k} \, p^{k} (1-p)^{m-k} - \sum_{k = \lceil m/2 \rceil}^{m} \binom{m}{k} \, p^{m-k} (1-p)^{k},
    \end{align*}
    and then we proceed as in the proof of~\cite[Lemma 9]{DBLP:journals/dc/FraigniaudN19}. Please note that in this paper, the definition of~$g$ contains a typographical error; the correct definition is the one provided here.
\end{proof}

Next, we rewrite the previous statement in a more convenient way for our analysis.

\begin{lemma} \label{lemma:majority_boosting}
    Let $0 \leq \theta < 1/2$ and $m>0$, and consider i.i.d.~random variables $\{ X_i \}_{i=1,\dots, m} \sim \Rad\pa{ \frac{1}{2}+\theta}$.
    Then, $X=\sum_{i=1}^m X_i$ satisfies
    \begin{equation*}
        \Pr\pa{ X>0 } - \Pr\pa{ X<0 } \geq \sqrt{\frac{2}{\pi e}} \cdot \min\{\sqrt{m} \,  \theta,1\}.
    \end{equation*}
\end{lemma}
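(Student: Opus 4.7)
The plan is to reduce the statement to \Cref{lem:from_lemma_9} via the affine bijection $B := (X + m)/2$, which takes values in $\{0, 1, \ldots, m\}$ and satisfies $B \sim \binomial(m, 1/2 + \theta)$ since each $X_i$ can be written as $2 Z_i - 1$ for an independent $Z_i \sim \binomial(1, 1/2+\theta)$. Under this correspondence $\{X > 0\} = \{B > m/2\}$ and $\{X < 0\} = \{B < m/2\}$, so \Cref{lem:from_lemma_9} directly gives
\[
\Pr(X > 0) - \Pr(X < 0) = \Pr(B > m/2) - \Pr(B < m/2) \geq \sqrt{2m/\pi} \cdot g(\theta, m).
\]
It then suffices to verify, for both branches of $g$, that $\sqrt{2m/\pi} \cdot g(\theta, m) \geq \sqrt{2/(\pi e)} \cdot \min\{\sqrt{m}\,\theta,\, 1\}$.

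The one elementary ingredient I need is the inequality $(1 - 1/m)^{m-1} \geq 1/e$ for every integer $m \geq 1$. This follows by rewriting it as $(1 + 1/(m-1))^{m-1} \leq e$ for $m \geq 2$ (with the convention $0^0 = 1$ handling $m=1$), and invoking the classical fact that $(1+1/k)^k$ is increasing in $k$ and tends to $e$ from below. Taking a square root gives $(1 - 1/m)^{(m-1)/2} \geq 1/\sqrt{e}$, which is the form I will actually use.

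For the branch $\theta < 1/\sqrt{m}$, I have $1 - \theta^2 \geq 1 - 1/m$, so $(1-\theta^2)^{(m-1)/2} \geq 1/\sqrt{e}$, hence $g(\theta, m) \geq \theta/\sqrt{e}$. Multiplying by $\sqrt{2m/\pi}$ yields $\sqrt{2/(\pi e)} \cdot \sqrt{m}\,\theta$, which coincides with $\sqrt{2/(\pi e)} \cdot \min\{\sqrt{m}\,\theta,1\}$ under the branch hypothesis. For the branch $\theta \geq 1/\sqrt{m}$, I have $g(\theta, m) = (1/\sqrt{m})(1-1/m)^{(m-1)/2} \geq 1/\sqrt{me}$ by the same inequality, so $\sqrt{2m/\pi} \cdot g(\theta, m) \geq \sqrt{2/(\pi e)} \geq \sqrt{2/(\pi e)} \cdot \min\{\sqrt{m}\,\theta,1\}$.

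I do not anticipate any real obstacle: all the nontrivial probabilistic content is already concentrated in \Cref{lem:from_lemma_9}, and what remains is the reparametrization from Rademacher to binomial and the bookkeeping to reconcile the two forms of lower bound. The only mildly delicate step is the verification of $(1-1/m)^{m-1} \geq 1/e$, which is handled by the standard companion bound $(1+1/k)^k \leq e$.
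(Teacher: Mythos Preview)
Your proof is correct and follows essentially the same route as the paper: reduce to \Cref{lem:from_lemma_9} via $B=(X+m)/2\sim\binomial(m,1/2+\theta)$, then handle the two branches of $g$ using $(1-1/m)^{(m-1)/2}\geq e^{-1/2}$. The only cosmetic difference is that the paper justifies this last inequality by noting $x\mapsto (1-1/x)^{(x-1)/2}$ is decreasing with limit $e^{-1/2}$, whereas you recast it as $(1+1/k)^k\leq e$; these are equivalent.
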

\begin{proof}
    For~$i \in \{1, \dots, m\}$, let~$B_i = (X_i + 1)/2$, and let $B=\sum_{i=1}^m B_i$. By definition, $B\sim \binomial(m, \frac{1}{2}+\theta )$ and $\{X>0\}= \{B>\frac{m}{2}\}$.
    \begin{itemize}
        \item First, consider the case that~$\theta \geq \frac{1}{\sqrt{m}}$. By \Cref{lem:from_lemma_9}, we have
        \begin{equation*}
            \Pr\pa{X > 0 } - \Pr\pa{X < 0} \geq \sqrt{\frac{2m}{\pi}} \cdot \frac{(1-1/m)^{\frac{m-1}{2}}}{\sqrt{m}} \geq \sqrt{\frac{2m}{\pi e}},
        \end{equation*}
        where we used the fact $x \mapsto (1-1/x)^{\frac{x-1}{2}}$ is decreasing and that $\lim_{x\to\infty} (1-1/x)^{\frac{x-1}{2}} = e^{-1/2}$.
        
        \item Now, consider the case that~$\theta < \frac{1}{\sqrt{m}}$.
        In that case,
        \begin{equation*}
            \pa{1-\theta^2}^{\frac{m-1}{2}} \geq  \pa{1-\frac{1}{m}}^{\frac{m-1}{2}} \geq e^{-1/2},
        \end{equation*}
        and by \Cref{lem:from_lemma_9}, we have
        \[
            \Pr\pa{X > 0 } - \Pr\pa{X < 0} \geq  \sqrt{\frac{2m}{\pi e}} \theta,
        \]
    which concludes the proof of \Cref{lemma:majority_boosting}.
\end{itemize}
\end{proof}

Finally, the following result identifies necessary conditions for correctly identifying the bias with probability at least~$1/2 + \Omega(\sqrt{\log n / n})$, when observing a sum of random variables with values in~$\{-1,0,1\}$. It will be used as a black box when asserting the quality of the agents' {\em weak-opinions} in the analysis of Algorithms \SF and \SSF. 

\begin{lemma}
\label{lemma:first_phase} 
    Let $m\in\bbN$ and $\delta<1/|\Sigma|$, and consider i.i.d. random variables $\{X_k\}_{k=1,\ldots,m}$ taking values in $\{-1,0,1\}$. Let $X=\sum_{k=1}^m X_k$, and
    \begin{equation*}
        p := \Pr\pa{ X_k = 1 \mid X_k \neq 0}.
    \end{equation*}
    There exists a constant~$c_1>0$, s.t. if the following conditions hold:
    \begin{equation} \label{eq:zero_th_claim}
        m \geq c_1 \pa { \frac{n \delta \log n}{\min\{s^2,n\}  (1-|\Sigma|\delta)^2} + \frac{\sqrt{n}\log n}{s} + \frac{(s_0+s_1)\log n}{s^2} + \log n },
    \end{equation}
    \begin{equation} \label{eq:first_claim}
         \Pr\pa{ X_k\neq 0 } \geq (1-|\Sigma|\delta)^2 \cdot \frac{s_0 + s_1}{2n} + \delta,
    \end{equation}
    \begin{align}
        \delta \geq \frac{s_0 + s_1}{2n}(1-|\Sigma|\delta) \quad &\implies \quad p \geq \frac{1}{2} + \frac{s}{n}\frac{(1-|\Sigma|\delta)}{8\,\delta}, \label{eq:second_claim} \\
        \delta < \frac{s_0 + s_1}{2n}(1-|\Sigma|\delta) \quad &\implies \quad p \geq \frac{1}{2} + \frac{s}{4 (s_0+s_1)}, \label{eq:third_claim}
    \end{align}
    then for~$n$ large enough,
    \begin{equation*}
        \Pr\pa{ X > 0  } - \Pr\pa{ X < 0  } \geq 8\sqrt{\frac{\log n}{n}}.
    \end{equation*}
\end{lemma}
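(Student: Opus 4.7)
The plan is to reduce the analysis of $X$ to that of a Rademacher sum of random length. Set $Y := |\{k : X_k \neq 0\}|$, so that $Y \sim \binomial(m, q)$ with $q := \Pr(X_k \neq 0)$, and let $\mu := mq$. Both \Cref{eq:second_claim} and \Cref{eq:third_claim} imply $p \geq 1/2$. By \Cref{lemma:for_SF_and_SSF} and \Cref{lemma:majority_boosting}, for every integer $r \geq 0$,
\begin{equation*}
    \Pr\pa{X > 0 \mid Y = r} - \Pr\pa{X < 0 \mid Y = r} \; \geq \; \sqrt{\tfrac{2}{\pi e}} \cdot \min\{\sqrt{r}\,(p-\tfrac{1}{2}),\, 1\},
\end{equation*}
which is non-negative and non-decreasing in $r$. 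I then average over $Y$ and split on whether $\mu \geq 1$: when $\mu \geq 1$, a multiplicative Chernoff bound yields $\Pr(Y \geq \mu/2) \geq 1 - e^{-\mu/8} \geq 1/2$, so
\begin{equation*}
    \Pr\pa{X > 0} - \Pr\pa{X < 0} \; \geq \; \tfrac{1}{2}\sqrt{\tfrac{2}{\pi e}} \cdot \min\{\sqrt{\mu/2}\,(p-\tfrac{1}{2}),\, 1\};
\end{equation*}
when $\mu < 1$, the bound $\Pr(Y \geq 1) \geq 1 - e^{-\mu} \geq \mu/2$ gives instead $\Pr(X > 0) - \Pr(X < 0) \geq \tfrac{\mu}{2}\sqrt{2/(\pi e)}\,(p-\tfrac{1}{2})$. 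In either case, it suffices to show the resulting lower bound exceeds $8\sqrt{\log n/n}$.

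Plugging in the bounds on $q$ and $p$ from \Cref{eq:first_claim}, \Cref{eq:second_claim}, and \Cref{eq:third_claim}, I match the requirements against the summands of \Cref{eq:zero_th_claim}. In Regime~I ($\delta \geq (s_0+s_1)(1-|\Sigma|\delta)/(2n)$), the $\delta$ term in \Cref{eq:first_claim} dominates, so $q \geq \delta$ and hence $\mu \geq m\delta$, while $p - \tfrac{1}{2} \geq s(1-|\Sigma|\delta)/(8n\delta)$. The large-$\mu$ branch then demands $m \geq C_1\, n\delta \log n/(s^2 (1-|\Sigma|\delta)^2)$, which is precisely the first summand of \Cref{eq:zero_th_claim} when $s^2 \leq n$; the small-$\mu$ branch demands $m q (p-\tfrac{1}{2}) \geq C_2 \sqrt{\log n / n}$, i.e., $m \geq C_3 \sqrt{n\log n}/(s(1-|\Sigma|\delta))$, which is absorbed by the $\sqrt{n}\log n/s$ summand. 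Regime~II ($\delta < (s_0+s_1)(1-|\Sigma|\delta)/(2n)$) is handled analogously using $\mu \geq m(1-|\Sigma|\delta)^2(s_0+s_1)/(2n)$ and $p - \tfrac{1}{2} \geq s/(4(s_0+s_1))$; the corresponding conditions on $m$ match the $(s_0+s_1)\log n/s^2$ and $\sqrt{n}\log n/s$ summands. Crucially, the Regime~II constraint forces $\delta < 1/(|\Sigma|+4)$, keeping $(1-|\Sigma|\delta)^2$ bounded below by a positive constant.

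The main obstacle is checking that a single absolute constant $c_1$ can be chosen to make the argument go through uniformly across all subcases: the two regimes, the large-$\mu$ versus small-$\mu$ split, and the boundary behaviour when $s^2 > n$ or when $\sqrt{\mu/2}(p-\tfrac{1}{2})$ saturates at $1$ (turning the first summand of \Cref{eq:zero_th_claim} into a threshold independent of $s$, which must then be dominated by the $\sqrt{n}\log n/s$ summand). This demands careful bookkeeping of every absolute constant appearing in \Cref{lemma:majority_boosting}, in the Chernoff step, and in the factor $\sqrt{2/(\pi e)}$ propagated through the final numerical inequality needed to beat $8\sqrt{\log n/n}$.
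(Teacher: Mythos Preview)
Your approach is essentially identical to the paper's: it too defines $Y$, conditions via \Cref{lemma:for_SF_and_SSF} and \Cref{lemma:majority_boosting}, splits on whether $\bbE(Y) \geq 1$ (using Chernoff in the large case and the event $\{Y=1\}$ in the small case), and then subdivides each branch into the two $\delta$-regimes, matching against the summands of \Cref{eq:zero_th_claim} exactly as you describe. One numerical slip: your claim that $1 - e^{-\mu/8} \geq 1/2$ for $\mu \geq 1$ is false, since $1 - e^{-1/8} \approx 0.118$; the paper uses this value as its constant $c_2$ and absorbs it into the choice of $c_1$, so the fix is immediate.
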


\begin{proof}
    We begin the proof with a technical computation that will be used multiple times.
    \begin{claim} 
    \label{claim:sigma_delta}
        If $\delta < \frac{s_0 + s_1}{2n}(1-|\Sigma|\delta)$, then $\pa{ 1-|\Sigma|\delta }^2 \geq \frac{1}{4}$ for~$n$ large enough.
    \end{claim}
    \begin{proof}[Proof of \Cref{claim:sigma_delta}]
        We have
        \begin{equation*}
            \delta < \frac{s_0 + s_1}{2n}(1-|\Sigma|\delta) \leq \frac{1}{2}(1-|\Sigma|\delta),
        \end{equation*}
        where in the last inequality we used that $s_0+s_1 \leq n$. This inequality can be rewritten as
        \[
            \delta \leq \frac{1}{2+ |\Sigma|}.
        \]
        This implies that
        \begin{equation*} 
            \pa{ 1-|\Sigma|\delta }^2 \geq \pa{ 1-\frac{|\Sigma|}{2+ |\Sigma|} }^2 \geq \frac{1}{4},
        \end{equation*}
        where in the last inequality we used that, since $|\Sigma| \geq 2$, we have $\frac{|\Sigma|}{2+ |\Sigma|} \leq \frac{1}{2}$.  This concludes the proof of \Cref{claim:sigma_delta}.
    \end{proof}
    Now, let~$Y$ denote the number of $k$ for which $X_k \neq 0$. Formally,
    \begin{equation*}
        Y = \Big| \{ k=1,\dots,m \, :\, X_k \neq 0 \} \Big|.
    \end{equation*}
    $Y$ follows a binomial distribution with parameters~$\Pr\pa{ X_k\neq 0}$, and~$m$.
    The next two claims consider the case that~$Y$ is relatively large, namely, that~$\bbE(Y) \geq 1$.
    In this regime, the sum~$X$ contains enough information about the bias, because many~$X_k$s are different from~$0$.
    In \Cref{lemma:r_is_big}, we translate the lower-bound on the expectation of~$Y$, into a lower-bound that holds with constant probability.
    Then, in \Cref{claim:big_expectations}, we prove the statement in the case that~$Y$ is sufficiently large.
    \begin{claim} 
    \label{lemma:r_is_big}
        If $ \bbE(Y) \geq 1$, there is a constant $c_2>0$, such that
        \begin{equation*}
            \Pr \pa{ Y > \frac{\bbE(Y)}{2}} \geq c_2.
        \end{equation*}
    \end{claim}
    \begin{proof}[Proof of \Cref{lemma:r_is_big}]
        Recall that~$Y$ follows a binomial distribution with parameter~$m,\Pr\pa{ X_k\neq 0}$.
        We have
        \begin{align*}
            \Pr \pa{ Y \leq \frac{1}{2} \bbE(Y) } &\leq  \Pr \pa{ Y \leq (1-\tfrac{1}{2}) \, \bbE(Y) } \\
            &\leq \exp\pa{-\frac{1}{8} \bbE(Y) } & \text{(by the \nameref{thm:multiplicative_chernoff_bound})} \\
            &\leq \exp\pa{ -\frac{1}{8} }, & \pa{\bbE(Y) \geq 1}.
        \end{align*}
        We conclude the proof of \Cref{lemma:r_is_big} by taking~$c_2 = 1- \exp\pa{ -\frac{1}{8} }$.
    \end{proof}
    
    \begin{claim} \label{claim:big_expectations}
        If $\bbE(Y)\geq 1$, and if~$c_1$ from \Cref{eq:zero_th_claim} is large enough, then for every~$r \geq \bbE(Y)/2$,
        \[
        \Pr\pa{X > 0 \mid Y=r } - \Pr\pa{X < 0 \mid Y=r } \geq \frac{8}{c_2}\,\sqrt{\frac{\log n}{n}}.
        \]
    \end{claim}
        \begin{proof}[Proof of \Cref{claim:big_expectations}]
            Let~$r \geq \bbE(Y)/2$.
            By \Cref{lemma:for_SF_and_SSF}, conditioning on $\{Y=r\}$, $X$ is a sum of $r$ Rademacher random variables of parameter~$p := \Pr\pa{ X_k=1 \mid X_k\neq 0}$. Then, we can apply \Cref{lemma:majority_boosting} to obtain
            \begin{equation} \label{eq:majority_boosting}
                \Pr\pa{X > 0 \mid Y=r } - \Pr\pa{X < 0 \mid Y=r } \geq \sqrt{\frac{2}{\pi \, e}} \min \Bigg\{ \sqrt{r} \pa{p-\frac{1}{2}}, 1 \Bigg\}.
            \end{equation}
            We distinguish between two cases, depending on the value of~$\delta$.
            \begin{itemize}
            \item {\bf Case 1.} Assume that $\delta < \frac{s_0 + s_1}{2n}(1-|\Sigma|\delta)$. Then,
            \begin{equation*}
                \Pr \pa{X_k \neq 0} \underset{\text{\Cref{eq:first_claim}}}{\geq} (1-|\Sigma|\delta)^2 \cdot \frac{s_0 + s_1}{2n} + \delta \underset{\text{\Cref{claim:sigma_delta}}}{\geq} \frac{s_0 + s_1}{8n}.
            \end{equation*}
            Therefore, by \Cref{eq:zero_th_claim},
            \begin{equation*}
                \bbE(Y) = m \cdot \Pr \pa{X_k \neq 0} \geq c_1 \cdot \frac{(s_0+s_1) \log n}{s^2} \cdot\frac{s_0 + s_1}{8n},
            \end{equation*}
            and thus,
            \begin{equation} \label{eq:r_bound}
                r \geq \frac{\bbE(Y)}{2} \geq c_1 \cdot \frac{(s_0+s_1)^2 \log n}{16\,s^2 n}.
            \end{equation}
            Finally,
            \begin{align*}
                &\Pr\pa{X > 0 \mid Y=r } - \Pr\pa{X < 0 \mid Y=r } \\
                &\geq \sqrt{\frac{2}{\pi \, e}} \min \Bigg\{ \sqrt{r} \pa{p-\frac{1}{2}}, 1 \Bigg\} & \text{(by \Cref{eq:majority_boosting})} \\
                &\geq \sqrt{\frac{2}{\pi \, e}} \min \Bigg\{ \sqrt{c_1 \cdot \frac{(s_0+s_1)^2 \log n}{16\,s^2 n}} \cdot \frac{s}{4(s_0+s_1)}, 1 \Bigg\} & \text{(by \Cref{eq:third_claim,eq:r_bound})} \\
                &= \sqrt{\frac{2}{\pi \, e}} \min \Bigg\{ \frac{\sqrt{c_1}}{16} \sqrt{\frac{\log n}{ n}}, 1 \Bigg\}\\
                &\geq \frac{8}{c_2} \sqrt{\frac{\log n}{n}}, & \text{(as long as~$c_1$ is large enough)}
            \end{align*}
            which concludes the proof of Case 1.
            
            \item {\bf Case 2.}
            Now, assume that $\delta \geq \frac{s_0 + s_1}{2n}(1-|\Sigma|\delta)$. We have
            \begin{equation} \label{eq:r_bound2}
                r \geq \frac{E(Y)}{2} = \frac{1}{2} \cdot m \, \Pr\pa{X_k \neq 0} \underset{\text{\Cref{eq:first_claim}}}{\geq} \frac{1}{2} \cdot m \, \delta \underset{\text{\Cref{eq:zero_th_claim}}}{\geq} c_1 \cdot \frac{n \, \delta^2 \log n}{2 \min\{s^2,n\}  (1-|\Sigma|\delta)^2}.
            \end{equation}
            Finally,
            \begin{align*}
                &\Pr\pa{X > 0 \mid Y=r } - \Pr\pa{X < 0 \mid Y=r } \\
                &\geq \sqrt{\frac{2}{\pi \, e}} \min \Bigg\{ \sqrt{r} \pa{p-\frac{1}{2}}, 1 \Bigg\} & \text{(by \Cref{eq:majority_boosting})} \\
                &\geq \sqrt{\frac{2}{\pi \, e}} \min \Bigg\{ \sqrt{c_1} \cdot \frac{\delta \sqrt{n \log n} }{\min\{s,\sqrt{n}\}  \, (1-|\Sigma|\delta)} \cdot \frac{s}{n}\frac{(1-|\Sigma|\delta)}{8\,\delta}, 1 \Bigg\} & \text{(by \Cref{eq:r_bound2,eq:second_claim})} \\
                &\geq \sqrt{\frac{2}{\pi \, e}} \min \Bigg\{ \frac{\sqrt{c_1}}{8} \sqrt{\frac{\log n}{n}} , 1 \Bigg\} \geq \frac{8}{c_2} \sqrt{\frac{\log n}{n}},
            \end{align*}
            where as above, $c_1$ is assumed to be large enough.
            \end{itemize}
            This concludes the proof of \Cref{claim:big_expectations}.
        \end{proof}

        Now, we consider the case that~$Y$ is relatively small, that is, $\bbE(Y) \leq 1$.
        In this regime, few~$X_k$s are different from~$0$, but each non-zero~$X_k$ is equal to~$1$ with a sufficiently large probability.
        The following result formalizes this claim, by focusing on the event $\{Y=1\}$.
    \begin{claim} \label{claim:low_expectation}
        If $\bbE(Y) \leq 1$, we have
        \[
            \Pr\pa{ Y=1 } \cdot \big[ \Pr\pa{ X>0 \mid  Y=1 } - \Pr\pa{ X<0 \mid  Y=1 } \big] \geq 8 \sqrt{\frac{\log n}{n}}.
        \]  
    \end{claim}
    \begin{proof}[Proof of \Cref{claim:low_expectation}]
        We have
        \begin{align*}
            \Pr\pa{Y=1 } &\geq \frac{1}{e} m \cdot \Pr \pa{X_k \neq 0} & \text{(by \Cref{claim:binomial_0_average})} \\
            &\geq \frac{1}{e} m \cdot \pa{(1-|\Sigma|\delta)^2 \cdot \frac{s_0 + s_1}{2n} + \delta} & \text{(by \Cref{eq:first_claim})} \\
            &\geq \frac{\sqrt{n}\log n}{e \, s} \pa{(1-|\Sigma|\delta)^2 \cdot \frac{s_0 + s_1}{2n} + \delta}. & \text{(by \Cref{eq:zero_th_claim})}
        \end{align*}
        Conditioning on~$\{Y=1\}$, $X$ is a single Rademacher random variable with parameter~$p$.
        Therefore,
        \begin{equation*} 
            \Pr\pa{X > 0 \mid Y=1 } - \Pr\pa{X < 0 \mid Y=1 } = 2\pa{p-\frac{1}{2}},
        \end{equation*}
        and together with the last equation, we obtain
        \begin{multline} \label{eq:single_equation}
            \Pr\pa{ Y=1 } \cdot \big[\Pr\pa{ X>0 \mid  Y=1 } - \Pr\pa{ X<0 \mid  Y=1 } \big] \\ \geq \frac{\sqrt{n}\log n}{e \, s} \pa{(1-|\Sigma|\delta)^2 \cdot \frac{s_0 + s_1}{2n} + \delta} \cdot 2\pa{p-\frac{1}{2}}.
        \end{multline}
        As in the proof of \Cref{claim:big_expectations}, we distinguish between two cases, depending on the value of~$\delta$.
        \begin{itemize}
            \item {\bf Case 1.} Consider the case that $\delta < \frac{s_0 + s_1}{2n}(1-|\Sigma|\delta)$. We can rewrite \Cref{eq:single_equation} as
            \begin{align*}
                &\Pr\pa{ Y=1 } \cdot \big[\Pr\pa{ X>0 \mid  Y=1 } - \Pr\pa{ X<0 \mid  Y=1 } \big] \\
                &\geq \frac{\sqrt{n}\log n}{e \, s} \pa{(1-|\Sigma|\delta)^2 \cdot \frac{s_0 + s_1}{2n} + \delta} \cdot 2\pa{p-\frac{1}{2}} \\
                &\geq \frac{\sqrt{n}\log n}{e \, s} \cdot \frac{s_0 + s_1}{8n} \cdot 2\pa{p-\frac{1}{2}} & \text{(by \Cref{claim:sigma_delta})} \\
                &\geq \frac{\sqrt{n}\log n}{e \, s} \cdot \frac{s_0 + s_1}{8n} \cdot 2 \cdot \frac{s}{4(s_0+s_1)} & \text{(by \Cref{eq:third_claim})} \\
                &= \frac{\log n}{16\,e \, \sqrt{n}} \geq 8 \sqrt{\frac{\log n}{n}}.
            \end{align*}

            \item {\bf Case 2.} Now, consider the case that~$\delta \geq \frac{s_0 + s_1}{2n}(1-|\Sigma|\delta)$. Again, we rewrite \Cref{eq:single_equation} as
            \begin{align} \nonumber
                &\Pr\pa{ Y=1 } \cdot \big[\Pr\pa{ X>0 \mid  Y=1 } - \Pr\pa{ X<0 \mid  Y=1 } \big] \\ \nonumber
                &\geq \frac{\sqrt{n}\log n}{e \, s} \pa{(1-|\Sigma|\delta)^2 \cdot \frac{s_0 + s_1}{2n} + \delta} \cdot 2\pa{p-\frac{1}{2}} \\ \nonumber
                &\geq \frac{\sqrt{n}\log n}{e \, s} \cdot \delta \cdot 2\pa{p-\frac{1}{2}} \\ \nonumber
                &\geq \frac{\sqrt{n}\log n}{e \, s} \cdot \delta \cdot 2 \cdot \frac{s}{n}\frac{(1-|\Sigma|\delta)}{8\,\delta} & \text{(by \Cref{eq:second_claim})} \\ \label{eq:Y=1}
                &= \frac{\log n}{4e\sqrt{n}}  \cdot (1-|\Sigma|\delta).
            \end{align}
            By \Cref{eq:first_claim}, $\bbE(Y)= m \Pr \pa{X_k \neq 0}\geq m \delta$ , therefore $\frac{1}{m} > \frac{\bbE(Y)}{m} \geq \delta$. For $n\rightarrow \infty$, we have $m \geq \log n \rightarrow \infty$ and therefore $\delta \rightarrow 0$. Then, for $n$ large enough, $(1-|\Sigma|\delta) \geq \frac{1}{2}$ and by \Cref{eq:Y=1} we finally obtain
            \[
                \Pr\pa{ Y=1 } \geq \frac{\log n}{8e\sqrt{n}} \geq 8 \sqrt{\frac{\log n}{n}}.
            \]
        \end{itemize}
        This concludes the proof of \Cref{claim:low_expectation}.
    \end{proof}
    Now we have the tools to conclude the proof of \Cref{lemma:first_phase}. If $\bbE(Y)\geq 1$,
    \begin{align*}
        \Pr\pa{ X > 0  } - \Pr\pa{ X < 0  } 
        &\geq \sum_{r=\bbE(Y)/2}^m \Pr\pa{ Y=r } \cdot \big[ \Pr\pa{ X>0 \mid  Y=r } - \Pr\pa{ X<0 \mid  Y=r } \big] \\
        &\geq \frac{8}{c_2} \sqrt{\frac{\log n}{n}} \cdot \Pr\pa{ Y \geq \frac{\bbE(Y)}{2} } &\pa{\text{by \Cref{claim:big_expectations}}}\\
        &\geq 8\sqrt{\frac{\log n}{n}}. &\pa{\text{by \Cref{lemma:r_is_big}}}
    \end{align*}
    If, instead, $\bbE(Y) < 1$,
    \begin{align*}
        \Pr\pa{ X > 0  } - \Pr\pa{ X < 0  } 
        &\geq \Pr\pa{ Y=1 } \cdot \big[ \Pr\pa{ X>0 \mid  Y=1 } - \Pr\pa{ X<0 \mid  Y=1 } \big] \\
        &\geq 8\sqrt{\frac{\log n}{n}}. &\pa{\text{by \Cref{claim:low_expectation}}}
    \end{align*}
    This concludes the proof of \Cref{lemma:first_phase}.
\end{proof}

\subsection{Notation} \label{sec:prelim}

Before proceeding with the analysis of Algorithms \SF and \SSF, we establish some notation. 
Recall that we write $s_i$ to denote the number of sources supporting opinion $i \in \{ 0,1 \}$, and denote the bias by $s := | s_1 - s_0 |\geq 1$. Recall that we  assume that
\begin{equation} \label{eq:condition_number_sources}
    s_0, s_1 \leq {n}/{4}.
\end{equation}
Without loss of generality (both our protocols being symmetric), we will assume throughout this section that the correct opinion is~$1$, i.e., that~$s_1 > s_0$.
Recall that~$Y_t^{(i)}$ and~$\Y_t^{(i)}$ denote respectively the {\em opinion} and {\em weak-opinion} of Agent~$i$ in round~$t$.
In what follows, we introduce a few other random variables that we will use in the analysis.
\begin{itemize}
	\item For $\ell \in \{1,\ldots,h\}$, let~$S_{t,\ell}^{(i)} \in I^h$ be the index of the~$\ell^{\text{th}}$ agent sampled by Agent~$i$ in round~$t$. Following our model, sampling occurs uniformly at random with replacement, therefore $S_{t,\ell}^{(i)}$ is uniformly distributed in~$I = \{1,\ldots,n\}$.
	
	\item For $\ell \in \{1,\ldots,h\}$, let~$N_{t,\ell}^{(i)}$ be the noise affecting the~$\ell^{\text{th}}$ message received by Agent~$i$ in round~$t$.
	Formally, we define $N_{t,\ell}^{(i)}$ as a random map from~$\Sigma$ to~$\Sigma$, with the understanding that if the~$\ell^{\text{th}}$ message was originally displayed as~$\sigma$, then Agent~$i$ receives~$N_{t,\ell}^{(i)} (\sigma)$.
	
	\item Let~$C_t^{(i)} \in \{0,1\}$ be the value of the coin toss performed by Agent~$i$ to break ties when performing a majority operation in round~$t$. By definition, $C_t^{(i)}$ is uniformly distributed in~$\{0,1\}$.
\end{itemize}
Moreover, we write~$S_t^{(i)} := \pa{S_{t,\ell}^{(i)}}_{\ell \in \{1,\ldots,h\}}$ and~$N_t^{(i)} := \pa{N_{t,\ell}^{(i)}}_{\ell \in \{1,\ldots,h\}}$.
Since sampling, noise and coin tosses are independent, all variables~$\left\{ S_t^{(i)}, N_t^{(i)}, C_t^{(i)} \right\}_{i \in I, t \in \bbN}$ are mutually independent.

\subsection{\texorpdfstring{Analysis of Algorithm \SF}{Analysis of Algorithm SF}} \label{sec:SF_analysis}

In this section, we present the proof of \Cref{thm:upper}. Consequently, we will always assume that the parameter~$m$ used in Algorithm \SF satisfies
\begin{equation} \label{eq:assumption_m_SF}
    m \geq c_1 \pa{\frac{n \delta \log n}{\min\{s^2,n\}  (1-|\Sigma|\delta)^2} + \frac{\sqrt{n}\log n}{s} + \frac{(s_0+s_1)\log n}{s^2} + h \log n},
\end{equation}
for a sufficiently large constant~$c_1$.
Recall that we assume the noise matrix~$N$ to be $\delta$-uniform. This assumption will be relaxed in \Cref{sec:final_proof_SF} when proving \Cref{thm:upper}, by using the results of \Cref{sec:non-uniform}.

\subsubsection{\texorpdfstring{Correctness of weak-opinions in Algorithm \SF}{Correctness of weak-opinions in Algorithm SF}}

\begin{lemma} \label{lemma:first_phase_SF} 
    The weak-opinion of each agent~$i$, computed after Phase 1 in Algorithm \SF, satisfies
    \begin{equation*}
        \Pr\pa{ \tilde{Y}^{(i)} = 1  } \geq \frac{1}{2} + 4\sqrt{\frac{\log n}{n}}.
    \end{equation*}
    Moreover, all $\{\Y^{(i)}\}_{i \in I}$ are mutually independent.
\end{lemma}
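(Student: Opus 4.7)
The strategy is to reduce the claim to an application of \Cref{lemma:first_phase}. To do so, I will encode the information on which the weak-opinion of Agent~$i$ depends as a sum of $m$ i.i.d.\ variables in $\{-1,0,+1\}$, as suggested in \Cref{sec:intuition}. Concretely, since Agent~$i$ gathers $h$ samples per round over $\lceil m/h\rceil$ rounds in each of the first two phases, I can pair up the $k$-th Phase-$0$ message $a_k$ with the $k$-th Phase-$1$ message $b_k$ (for $k=1,\dots,m$) and set
\begin{equation*}
    X_k := \mathds{1}\{a_k=1\} - \mathds{1}\{b_k=0\}.
\end{equation*}
Then $X_k\in\{-1,0,+1\}$, the pairs $(a_k,b_k)$ are i.i.d.\ (different rounds use independent samples and noise), and a direct computation gives $\counter_1-\counter_0=\sum_k X_k=:X$. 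Hence $\{X>0\}$, $\{X=0\}$, $\{X<0\}$ correspond respectively to $\Y^{(i)}=1$, a random tie-break, and $\Y^{(i)}=0$.

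The first main step is to compute $\Pr(a_k=1)$ and $\Pr(b_k=0)$ by splitting the sampled agent into the three populations (sources with preference $1$, sources with preference $0$, non-sources) and applying the $\delta$-uniform noise. Writing $\alpha:=1-2\delta$, I expect to obtain $\Pr(a_k=1)=\delta+\frac{s_1}{n}\alpha$ and $\Pr(b_k=0)=\delta+\frac{s_0}{n}\alpha$, and by independence of $a_k$ and $b_k$,
\begin{equation*}
    \Pr(X_k=1)-\Pr(X_k=-1)=\frac{s_1-s_0}{n}(1-2\delta)=\frac{s\alpha}{n}.
\end{equation*}
From the expansion of $\Pr(X_k\neq 0)=\Pr(a_k=1)+\Pr(b_k=0)-2\Pr(a_k=1)\Pr(b_k=0)$, I expect
\begin{equation*}
    \Pr(X_k\neq 0)=2\delta(1-\delta)+\tfrac{\alpha^2}{n}\bigl(s_0+s_1-\tfrac{2 s_0 s_1}{n}\bigr).
\end{equation*}
Using $\delta\le 1/2$ (so $2\delta(1-\delta)\ge\delta$) and $s_0+s_1\le n/2\le n$ (so $\tfrac{2 s_0 s_1}{n}\le\tfrac{s_0+s_1}{2}$), one gets exactly \Cref{eq:first_claim}. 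For \Cref{eq:second_claim,eq:third_claim}, I will bound $p-1/2=\frac{\Pr(X_k=1)-\Pr(X_k=-1)}{2\Pr(X_k\neq 0)}$ by discarding the negative $-\tfrac{2s_0s_1}{n^2}\alpha^2$ term in the denominator, and splitting along the threshold $\delta\lessgtr\tfrac{s_0+s_1}{2n}\alpha$ so that in each regime one of the two summands $\{2\delta,\tfrac{(s_0+s_1)\alpha^2}{n}\}$ dominates.

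With these three conditions verified and \Cref{eq:assumption_m_SF} implying the hypothesis \Cref{eq:zero_th_claim} of \Cref{lemma:first_phase} (the extra $h\log n$ term here is needed because we have $\lceil m/h\rceil\cdot h\ge m$ actual samples per phase, not $m$), I obtain $\Pr(X>0)-\Pr(X<0)\ge 8\sqrt{\log n/n}$. To translate this into a bound on $\Pr(\Y^{(i)}=1)$, I use that ties are broken uniformly at random:
\begin{equation*}
    \Pr(\Y^{(i)}=1)=\Pr(X>0)+\tfrac{1}{2}\Pr(X=0)=\tfrac{1}{2}+\tfrac{1}{2}\bigl[\Pr(X>0)-\Pr(X<0)\bigr]\ge \tfrac{1}{2}+4\sqrt{\tfrac{\log n}{n}}.
\end{equation*}

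Finally, independence across agents follows by observing that $\Y^{(i)}$ is a deterministic function of the triple $(S^{(i)},N^{(i)},C^{(i)})$ restricted to Phases $0$–$1$: the messages broadcast by any sampled agent are either a non-source's deterministic phase-value or a source's fixed preference, so no randomness of other non-source agents enters $\Y^{(i)}$. Mutual independence of $\{(S^{(i)},N^{(i)},C^{(i)})\}_i$, established in \Cref{sec:prelim}, then yields mutual independence of $\{\Y^{(i)}\}_i$. The main obstacle is the algebraic step of verifying \Cref{eq:second_claim,eq:third_claim} uniformly across the full parameter range; the rest is bookkeeping around \Cref{lemma:first_phase}.
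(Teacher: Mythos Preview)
Your proposal is correct and follows essentially the same approach as the paper: the paper defines the same variables $X_k$ (written case-by-case rather than via your compact formula $X_k=\mathds{1}\{a_k=1\}-\mathds{1}\{b_k=0\}$), verifies the same three conditions \Cref{eq:first_claim,eq:second_claim,eq:third_claim} via the same algebra, and then invokes \Cref{lemma:first_phase} and the tie-breaking identity exactly as you do. Your parenthetical about the role of the $h\log n$ term is slightly off (it is there so that \Cref{eq:assumption_m_SF} dominates the $\log n$ term in \Cref{eq:zero_th_claim} and for the boosting phase, not to account for the gap between $m$ and $\lceil m/h\rceil h$), but this does not affect the argument.
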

\begin{proof}
    Let~$i \in I$.
    Recall that it each of the first two phases of Algorithm \SF, Agent~$i$ receives at least~$m$ messages.
    Let $A^{(i)}, B^{(i)} \in \{0,1\}^m$ be a random vector containing the first~$m$ messages received in Phase 0 and Phase 1, respectively.
    By construction, Agent~$i$ observes a 0 in Phase~$0$ if and only if it received a corrupted message from a 1-source, or an uncorrupted message from either a 0-source or a non-source agent.
    By applying a similar reasoning to Phase 1, we compute for every~$k \in \{1,\dots, m\}$,
    \begin{align*}
        \Pr\pa{ A_k^{(i)} = 0 } &= \frac{s_1}{n} \delta + \pa{ 1- \frac{s_1}{n}}(1-\delta)
        & \Pr\pa{ B_k^{(i)} = 0 } &= \frac{s_0}{n}(1-\delta) + \pa{ 1- \frac{s_0}{n} } \delta\\
        \Pr\pa{ A_k^{(i)} = 1 } &= \frac{s_1}{n}(1-\delta) + \pa{ 1- \frac{s_1}{n} } \delta
        &\Pr\pa{ B_k^{(i)} = 1 } &= \frac{s_0}{n} \delta + \pa{ 1- \frac{s_0}{n}}(1-\delta).
    \end{align*}
    Let us define the random variables $\{X_k^{(i)}\}_{k\in \{1,\dots, m\}}$ s.t. 
    \begin{align*}
        X_k^{(i)} =\begin{cases}
            1 \quad \text{if } (A_k^{(i)},B_k^{(i)})=(1,1) \\
            0 \quad \text{if } (A_k^{(i)},B_k^{(i)})=(0,1), \text{ or } (A_k^{(i)},B_k^{(i)})=(1,0)\\
            -1 \quad \text{if } (A_k^{(i)},B_k^{(i)})=(0,0).
        \end{cases} 
    \end{align*}
    Let $X^{(i)}:= \sum_{k=1}^m X_k^{(i)}$.
    Informally, each $X_k^{(i)}$ is assigned value~$+1$ or~$-1$ if the corresponding pair of messages tends to indicate that the correct opinion is respectively~$1$ or~$0$, and it is assigned value~$0$ otherwise.
    Recall that~$C_t^{(i)}$ denotes the value of the coin toss performed by Agent~$i$ in round~$t$.
    By definition of Algorithm \SF, we have
    \begin{equation} \label{eq:weak_opn_equiv_SF_first_phases}
        \Y^{(i)} = 1 \iff \begin{cases}
            X^{(i)} > 0, &\text{or} \\
            X^{(i)} = 0 \text{ and } C_t^{(i)} = 1.
        \end{cases}
    \end{equation}
    Note that the value of the messages received in Phases 0 and 1 only depends on the identity of the sampled agents, and on the realization of the noise. As a consequence, each~$A_k^{(i)}$ and each~$B_k^{(i)}$ is a deterministic function of a corresponding pair~$(N_{t,\ell}^{(i)},S_{t,\ell}^{(i)})$, and so is each~$X_k^{(i)}$. Therefore, 
    since the variables~$\left\{ S_t^{(i)}, N_t^{(i)}, C_t^{(i)} \right\}_{i \in I, t \in \bbN}$ are mutually independent,
	it follows that all $\{\Y^{(i)}\}_{i \in I}$ are also mutually independent.
    Moreover, we can write
    \begin{align*}
        \Pr\pa{ X_k^{(i)} = 1 } &= \Pr\pa{ A_k^{(i)} = 1 } \cdot \Pr\pa{ B_k^{(i)} = 1 }, \\
        \Pr\pa{ X_k^{(i)} = 0 } &= \Pr\pa{ A_k^{(i)} = 0 } \cdot \Pr\pa{ B_k^{(i)} = 1 } + \Pr\pa{ A_k^{(i)} = 1 } \cdot \Pr\pa{ B_k^{(i)} = 0 }, \\
        \Pr\pa{ X_k^{(i)} = -1 } &= \Pr\pa{ A_k^{(i)} = 0 } \cdot \Pr\pa{ B_k^{(i)} = 0 }.
    \end{align*}
    In order to apply \Cref{lemma:first_phase} to $X^{(i)}$, we first verify that all its conditions are satisfied.
    \begin{claim} 
    \label{claim:p_SF}
        We have
        \begin{equation} \label{eq:first_claim_OLD}
             \Pr\pa{ X_k^{(i)}\neq 0 } \geq (1-2\delta)^2 \cdot \frac{s_0 + s_1}{2n} + \delta
        \end{equation}
        Moreover, if $\delta \geq \frac{s_0 + s_1}{2n}(1-2\delta)$,
         \begin{equation} \label{eq:second_claim_OLD}
            \Pr\pa{ X_k^{(i)} = 1 \mid X_k^{(i)} \neq 0} \geq \frac{1}{2} + \frac{s}{n}\frac{(1-2\delta)}{8\,\delta},
        \end{equation}
        while if $\delta < \frac{s_0 + s_1}{2n}(1-2\delta)$,
        \begin{equation} \label{eq:third_claim_OLD}
            \Pr\pa{ X_k^{(i)} = 1 \mid X_k^{(i)} \neq 0} \geq \frac{1}{2} + \frac{s}{4 (s_0+s_1)}
        \end{equation}
    \end{claim}
    \begin{proof}[Proof of \Cref{claim:p_SF}]
        We first prove \Cref{eq:first_claim_OLD}.
        \begin{align}
            \nonumber
            \Pr\pa{ X_k^{(i)}\neq 0}
            &=\Pr\pa{ A_k^{(i)} = 1 } \cdot \Pr\pa{ B_k^{(i)} = 1 } + \Pr\pa{ A_k^{(i)} = 0 } \cdot \Pr\pa{ B_k^{(i)} = 0 }\\ \nonumber
            &=\pa{\frac{s_1}{n}(1-\delta) + \pa{ 1- \frac{s_1}{n} } \delta} \cdot
            \pa{ \frac{s_0}{n} \delta + \pa{ 1- \frac{s_0}{n}}(1-\delta)} \\ \nonumber
            & + \pa{\frac{s_1}{n} \delta + \pa{ 1- \frac{s_1}{n}}(1-\delta)  } \cdot 
            \pa{ \frac{s_0}{n}(1-\delta) + \pa{ 1- \frac{s_0}{n} } \delta }\\ \nonumber
            &= \frac{s_0+s_1}{n} - 2 \frac{s_0 s_1}{n^2} + 2 \, \delta \left[ 1 - 2 \frac{s_0 + s_1}{n} + 4\frac{s_0 s_1}{n^2} \right] + 2 \, \delta^2 \left[ -1 + 2 \frac{s_0+s_1}{n} - 4 \frac{s_0 s_1}{n^2}  \right] \\
            &= 2\delta (1-\delta) + (1-2\delta)^2 \pa{ \frac{s_0+s_1}{n}-2 \frac{s_0 s_1}{n^2} }.\label{eq:sum_events}
        \end{align}
        Since $2\delta (1-\delta) \geq \delta$ whenever $\delta\in \left[ 0,1/2 \right]$, we obtain 
        \begin{align*}
            \Pr\pa{ X_k^{(i)}\neq 0} 
            &\geq \delta + (1-2\delta)^2 \frac{s_0+s_1}{2n},
        \end{align*}
        where we also used the fact that $s_0,s_1 \leq \frac{n}{4}$ (\Cref{eq:condition_number_sources}) and therefore $\frac{s_0 s_1}{n^2}\leq\frac{s_0+s_1}{4n}$. This establishes \Cref{eq:first_claim_OLD}.
        From \Cref{eq:sum_events}, since $1-2\delta \leq 1-\delta \leq 1$, we have
        \begin{equation} \label{eq:X_neq0_upper}
            \Pr\pa{ X_k^{(i)}\neq 0} \leq  2 \, \delta + (1-2\delta)\frac{s_0+s_1}{n}.
        \end{equation}
        Moreover,
        \begin{align}
            \nonumber
            \Pr\pa{ X_k^{(i)} = 1} - \Pr\pa{ X_k^{(i)} = -1}
            &=\Pr\pa{ A_k^{(i)} = 1 } \cdot \Pr\pa{ B_k^{(i)} = 1 } - \Pr\pa{ A_k^{(i)} = 0 } \cdot \Pr\pa{ B_k^{(i)} = 0 }\\ \nonumber
            &=\pa{\frac{s_1}{n}(1-\delta) + \pa{ 1- \frac{s_1}{n} } \delta} \cdot
            \pa{ \frac{s_0}{n} \delta + \pa{ 1- \frac{s_0}{n}}(1-\delta)} \\ \nonumber
            & - \pa{\frac{s_1}{n} \delta + \pa{ 1- \frac{s_1}{n}}(1-\delta)  } \cdot 
            \pa{ \frac{s_0}{n}(1-\delta) + \pa{ 1- \frac{s_0}{n} } \delta }\\ \label{eq:X=1-X=-1}
            &=  \pa{ 1 - 2\delta} \frac{s_1-s_0}{n} 
            = \pa{ 1 - 2\delta} \frac{s}{n}.
        \end{align}
        We have
        \begin{align*}
            \Pr\pa{ X_k^{(i)} = 1 \mid X_k^{(i)} \neq 0} - \Pr\pa{ X_k^{(i)} = -1 \mid X_k^{(i)} \neq 0}
            &= \frac{\Pr\pa{ X_k^{(i)} = 1} - \Pr\pa{ X_k^{(i)} = -1}}{\Pr\pa{ X_k^{(i)}\neq 0}},
        \end{align*}
        therefore, if $\delta \geq \frac{s_0 + s_1}{2n}(1-2\delta)$, by \Cref{eq:X_neq0_upper,eq:X=1-X=-1}, we obtain
        \[
            \Pr\pa{ X_k^{(i)} = 1 \mid X_k^{(i)} \neq 0} - \Pr\pa{ X_k^{(i)} = -1 \mid X_k^{(i)}\neq 0} \geq \frac{\pa{ 1 - 2\delta} \frac{s}{n}}{2 \, \delta + (1-2\delta)\frac{s_0+s_1}{n}} \geq \frac{\pa{ 1 - 2\delta} \frac{s}{n}}{4 \, \delta}.
        \]
        Instead, if $\delta < \frac{s_0 + s_1}{2n}(1-2\delta)$,  we have
        \[
            \Pr\pa{ X_k^{(i)} = 1 \mid X_k^{(i)} \neq 0} - \Pr\pa{ X_k^{(i)} = -1 \mid X_k^{(i)}\neq 0} \geq \frac{\pa{ 1 - 2\delta} \frac{s}{n}}{2 \, \delta + (1-2\delta)\frac{s_0+s_1}{n}} \geq \frac{\pa{ 1 - 2\delta} \frac{s}{n}}{2 (1-2\delta)\frac{s_0+s_1}{n}} = \frac{s}{2 (s_0+s_1)}.
        \]
        We have proven that, depending on the value of $\delta$, for $x=\frac{s}{n}\frac{(1-2\delta)}{4\,\delta}$ or $x=\frac{s}{2 (s_0+s_1)}$, we have
        \begin{equation*}
            \Pr\pa{ X_k^{(i)} = 1 \mid X_k^{(i)} \neq 0} - \Pr\pa{ X_k^{(i)} = -1 \mid X_k^{(i)}\neq 0} \geq x.
        \end{equation*}
        Since $\Pr\pa{ X_k^{(i)} = -1 \mid X_k^{(i)}\neq 0} = 1 - \Pr\pa{ X_k^{(i)} = 1 \mid X_k^{(i)} \neq 0}$, we obtain
        \[
             2 \, \Pr\pa{ X_k^{(i)} = 1 \mid X_k^{(i)} \neq 0} - 1 \geq x.
        \] \Cref{eq:second_claim_OLD,eq:third_claim_OLD} follow from rearranging the inequality, which concludes the proof of \Cref{claim:p_SF}.
    \end{proof}
    Finally, we have
    \begin{align*}
        \Pr\pa{ \Y^{(i)} = 1  } 
        &= \Pr\pa{X^{(i)}>0} + \frac{1}{2} \Pr\pa{X^{(i)}=0} & \text{(by \Cref{eq:weak_opn_equiv_SF_first_phases})} \\
        &= \Pr\pa{X^{(i)}>0} + \frac{1}{2} \pa{1- \Pr\pa{X^{(i)}>0} - \Pr\pa{X^{(i)}<0} } \\
        &= \frac{1}{2} + \frac{\Pr\pa{X^{(i)}>0} - \Pr\pa{X^{(i)}<0}}{2}. 
    \end{align*}
    \Cref{eq:assumption_m_SF,claim:p_SF} together imply that we can apply \Cref{lemma:first_phase}, in order to obtain
    \begin{equation*}
        \Pr\pa{ \Y^{(i)} = 1  } \geq  \frac{1}{2} + 4\sqrt{\frac{\log n}{n}},
    \end{equation*}
    which concludes the proof of \Cref{lemma:first_phase_SF}.
    
\end{proof}

\subsubsection{\texorpdfstring{Correctness of the majority boosting phase in Algorithm \SF}{Correctness of the majority boosting phase in Algorithm SF}}

Let~$L:= 10 \, \log n$ be the number of sub-phases lasting for~$\lceil w/h\rceil$ rounds, within the Majority Boosting phase (Phase~2) of Algorithm \SF. For~$\ell \in \{0,\ldots,L\}$, we write:
\begin{equation*}
    \Delta_\ell := \left| \left\{ i \in I, Y_t^{(i)} = 1, \text{where $t$ is the $\ell^\text{th}$ update round} \right\} \right| - \frac{n}{2}.
\end{equation*}
Informally, $\Delta_\ell$ denotes the number of agents holding the correct opinion ``beyond half of the population'', during the~$\ell^\text{th}$ sub-phase.
\begin{claim} \label{claim:easy_bound_m}
    We have that
    \begin{equation*}
        m \geq c_1 \, \frac{7}{16} \cdot \frac{\log n}{(1-2\delta)^2}.
    \end{equation*}
\end{claim}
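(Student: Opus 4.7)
The plan is to use only the first term and the last term from the lower bound on~$m$ given by \Cref{eq:assumption_m_SF}, and then reduce the claim to a simple one-variable calculus inequality. Discarding the middle two (non-negative) terms and using that~$\min\{s^2,n\} \leq n$ and~$h \geq 1$, we obtain
\begin{equation*}
    m \;\geq\; c_1 \pa{\frac{n\,\delta\,\log n}{\min\{s^2,n\}(1-2\delta)^2} + h\,\log n}
    \;\geq\; c_1 \pa{\frac{\delta\,\log n}{(1-2\delta)^2} + \log n}.
\end{equation*}
So the claim~$m \geq c_1 \cdot \tfrac{7}{16} \cdot \tfrac{\log n}{(1-2\delta)^2}$ will follow once we show
\begin{equation*}
    \frac{\delta}{(1-2\delta)^2} + 1 \;\geq\; \frac{7}{16\,(1-2\delta)^2},
\end{equation*}
or equivalently, after multiplying through by~$(1-2\delta)^2 > 0$,
\begin{equation*}
    g(\delta) \,:=\, \delta + (1-2\delta)^2 \;\geq\; \frac{7}{16} \qquad \text{for every } \delta \in [0, \tfrac{1}{2}).
\end{equation*}

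The last inequality is just univariate calculus: $g'(\delta) = 1 - 4(1-2\delta) = 8\delta - 3$, which vanishes only at~$\delta^\star = 3/8$ and is negative for~$\delta < 3/8$ and positive for~$\delta > 3/8$, so $\delta^\star$ is the unique minimizer of~$g$ on~$[0,\tfrac{1}{2})$. A direct substitution gives
\begin{equation*}
    g(3/8) \,=\, \tfrac{3}{8} + (\tfrac{1}{4})^2 \,=\, \tfrac{6}{16} + \tfrac{1}{16} \,=\, \tfrac{7}{16},
\end{equation*}
which is precisely the bound we need (and moreover explains the exact value~$7/16$ appearing in the claim).

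There is no real obstacle here; the only subtlety is recognizing which two terms of \Cref{eq:assumption_m_SF} to keep and reducing to the one-variable inequality. The constant~$7/16$ is tight (attained at~$\delta = 3/8$), which is why this specific numerical value appears in the statement of \Cref{claim:easy_bound_m}.
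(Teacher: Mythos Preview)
Your proof is correct and follows essentially the same approach as the paper's: both keep only the first and last terms of \Cref{eq:assumption_m_SF} to get $m \geq c_1\bigl(\tfrac{\delta\log n}{(1-2\delta)^2} + \log n\bigr)$, then use that $\delta + (1-2\delta)^2 \geq \tfrac{7}{16}$ on $[0,\tfrac{1}{2})$. The paper simply asserts this minimum without the calculus, whereas you spell out the minimization at $\delta^\star = 3/8$ and the reductions $\min\{s^2,n\}\leq n$, $h\geq 1$ explicitly.
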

\begin{proof}
    The minimum value attained by the function~$x \rightarrow x + (1-2x)^2$ on the interval~$[0,1/2]$ is equal to~$7/16$. By \Cref{eq:assumption_m_SF}, 
    \[
        m\geq c_1\pa{ \frac{\delta \log n}{(1-2\delta)^2}+ \log n} = c_1 \frac{(\delta + (1-2\delta)^2) \log n}{(1-2\delta)^2} \geq c_1 \, \frac{7}{16} \cdot \frac{\log n}{(1-2\delta)^2}.
    \] 
\end{proof}

\begin{lemma} \label{claim:duration_boosting}
    The Majority Boosting phase lasts at most~$2\lceil m/h \rceil$ rounds.
\end{lemma}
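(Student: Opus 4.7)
The plan is to directly unfold the definition of the Majority Boosting phase and bound each of its two components using the lower bound on $m$ given by \Cref{eq:assumption_m_SF}. Recall that by construction, the Majority Boosting phase consists of $L+1$ sub-phases, where $L := 10\log n$. The first $L$ sub-phases each last $\lceil w/h \rceil$ rounds, with $w = 100/(1-2\delta)^2$, while the last one lasts $\lceil m/h \rceil$ rounds. Hence the total duration is
\begin{equation*}
    L \cdot \lceil w/h \rceil \; + \; \lceil m/h \rceil.
\end{equation*}
It therefore suffices to show that $L \cdot \lceil w/h \rceil \leq \lceil m/h \rceil$.

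Using the trivial bounds $\lceil w/h \rceil \leq w/h + 1$ and $\lceil m/h \rceil \geq m/h$, it is enough to verify that
\begin{equation*}
    \frac{L\,w}{h} + L \;\leq\; \frac{m}{h},
    \qquad \text{that is,} \qquad
    m \;\geq\; L\,w + L\,h
    \;=\; \frac{1000\,\log n}{(1-2\delta)^2} + 10\, h\, \log n.
\end{equation*}
I would then handle these two additive terms separately. For the first term, \Cref{claim:easy_bound_m} already yields $m \geq c_1 \cdot \tfrac{7}{16} \cdot \tfrac{\log n}{(1-2\delta)^2}$, and it suffices to choose $c_1$ large enough (say $c_1 \geq 2 \cdot \tfrac{16}{7} \cdot 1000$) so that this contribution alone dominates $\tfrac{1000\,\log n}{(1-2\delta)^2}$ with room to spare. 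For the second term, the summand $c_1 h \log n$ in the definition of $m$ from \Cref{eq:assumption_m_SF} gives $m \geq c_1 h \log n$, which exceeds $10 h \log n$ as soon as $c_1 \geq 20$.

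The argument is essentially bookkeeping, and I do not expect any serious obstacle: the constants in the lower bound \Cref{eq:assumption_m_SF} on $m$ were in fact chosen precisely so that both the $\log n/(1-2\delta)^2$ term and the $h \log n$ term can absorb the contribution of the $L$ preliminary sub-phases, leaving at least one copy of $\lceil m/h \rceil$ worth of rounds to ``pay for'' them. If anything, the only point that requires a little care is making sure that the constant $c_1$ chosen here is consistent with (i.e., at least as large as) the one required by \Cref{lemma:first_phase_SF} and the other lemmas, which can be arranged by taking the maximum of finitely many constant thresholds.
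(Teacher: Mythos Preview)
Your proposal is correct and follows essentially the same approach as the paper: reduce to showing $L\cdot\lceil w/h\rceil \leq \lceil m/h\rceil$, then invoke \Cref{claim:easy_bound_m} together with the $h\log n$ summand in \Cref{eq:assumption_m_SF}. The only cosmetic difference is in how the ceiling is handled: the paper splits into the cases $h \geq w$ (where $\lceil w/h\rceil = 1$, so $L\cdot\lceil w/h\rceil = 10\log n \leq m/h$) and $h < w$ (where $\lceil w/h\rceil < 2w/h$), whereas you use the uniform bound $\lceil w/h\rceil \leq w/h + 1$ and then absorb the two resulting additive terms separately.
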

\begin{proof}
    By the definition of Algorithm \SF, the Majority Boosting phase is made up of~$L$ short sub-phases which last for~$\lceil w/h \rceil$ rounds each, where~$w := 100/(1-2\delta)^2$, and a longer sub-phase which lasts for~$\lceil m/h\rceil$ rounds. To prove the statement, we therefore need to verify that~$L \cdot \lceil w/h \rceil \leq \lceil m/h\rceil$.
    
    If~$h \geq w$, then $L \cdot \lceil w/h \rceil = L = 10 \, \log n$, which is less than~$m/h$ by \Cref{eq:assumption_m_SF}.
    Now, consider the case that~$h < w$. Then,
    \begin{equation} \label{eq:ceiling_ineq}
        \left\lceil \frac{w}{h} \right\rceil < \frac{w}{h}+1 < \frac{2w}{h}.
    \end{equation}
    Moreover,
    \begin{align*}
        &m \geq c_1 \, \frac{7}{16} \cdot \frac{\log n}{(1-2\delta)^2} & \text{(by \Cref{claim:easy_bound_m})} \\
        &\implies m \geq \frac{2000}{(1-2\delta)^2} \log n & \text{(as long as $c_1 \geq \frac{16}{7} \cdot 2000$)} \\
        &\implies \frac{m}{h} \geq \frac{2w}{h}\cdot 10 \, \log n.
    \end{align*}
    By \Cref{eq:ceiling_ineq}, this implies that~$\lceil m/h\rceil \geq \lceil w/h \rceil \cdot 10 \, \log n$, which concludes the proof of \Cref{claim:duration_boosting}.
\end{proof}

\begin{lemma} \label{lemma:initialization_boosting}
    We have that 
    \[
        \Pr\pa{\Delta_0 \geq 2\, \sqrt{n \log n}} \geq 1-\frac{1}{n^8}
    \]
\end{lemma}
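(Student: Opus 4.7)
The plan is to use the fact that $\Delta_0$ is a centered sum of $n$ mutually independent Bernoulli random variables, each biased toward the correct opinion. More precisely, at the beginning of the Majority Boosting phase, $Y_0^{(i)} = \tilde{Y}^{(i)}$ for every agent $i$, so by definition
\begin{equation*}
    \Delta_0 = \sum_{i \in I} \tilde{Y}^{(i)} - \frac{n}{2}.
\end{equation*}
By \Cref{lemma:first_phase_SF}, the random variables $\{\tilde{Y}^{(i)}\}_{i \in I}$ are mutually independent and each satisfies $\Pr(\tilde{Y}^{(i)} = 1) \geq \tfrac{1}{2} + 4\sqrt{\log n / n}$.

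The first step is to compute the expectation: letting $B := \sum_{i \in I} \tilde{Y}^{(i)}$, we have
\begin{equation*}
    \bbE[B] \geq n \cdot \pa{\frac{1}{2} + 4\sqrt{\frac{\log n}{n}}} = \frac{n}{2} + 4\sqrt{n \log n}.
\end{equation*}
Therefore, it suffices to show that $B \geq \bbE[B] - 2\sqrt{n \log n}$ with probability at least $1 - 1/n^8$, since this implies $\Delta_0 = B - n/2 \geq 2\sqrt{n \log n}$.

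The second (and final) step is a direct application of Hoeffding's inequality to the sum of $n$ independent Bernoulli variables, bounded in $\{0,1\}$:
\begin{equation*}
    \Pr\pa{ B \leq \bbE[B] - 2\sqrt{n \log n}} \leq \exp\pa{-\frac{2 \cdot (2\sqrt{n \log n})^2}{n}} = \exp(-8 \log n) = \frac{1}{n^8}.
\end{equation*}
Combining the two steps concludes the proof. There is no real obstacle here: both the independence and the bias on each individual weak-opinion have already been established in \Cref{lemma:first_phase_SF}, and the quantitative bias $4\sqrt{\log n / n}$ from that lemma was chosen precisely so that the expected gap $4\sqrt{n \log n}$ dwarfs the $2\sqrt{n \log n}$ deviation one loses to Hoeffding at the high-probability confidence level of $1/n^8$.
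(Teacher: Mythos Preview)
Your proof is correct and follows essentially the same approach as the paper's own proof: write $\Delta_0$ as a (shifted) sum of the independent weak-opinions, invoke \Cref{lemma:first_phase_SF} to get $\bbE[\Delta_0] \geq 4\sqrt{n\log n}$, and apply the Chernoff--Hoeffding bound with deviation $2\sqrt{n\log n}$ to obtain the $1/n^8$ failure probability. The paper's version is just a terser sketch of exactly what you wrote out.
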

\begin{proof}
    Recall that the opinion of each agent before the first update round is equal to its weak-opinion.
    Therefore, by \Cref{lemma:first_phase_SF}, $\bbE(\Delta_0) = 4\sqrt{n \log n}$. The result follows from applying the \nameref{thm:additive_chernoff_bound}.
\end{proof}

\begin{lemma} \label{lemma:boosting_main}
    There exists a constant~$c_2 > 2$ s.t.
    \begin{equation*}
        \Pr \pa{ \Delta_{\ell+1} \geq \min \pa{ 1.2 \, \Delta_\ell, \frac{n}{\sqrt{8 \pi e}} } ~\Big|~ \Delta_\ell \geq 2 \sqrt{n \log n}} \geq 1-\frac{1}{n^8}.
    \end{equation*}
\end{lemma}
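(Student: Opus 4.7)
The plan is to condition on the full configuration at the start of sub-phase $\ell+1$, use Lemma~\ref{lemma:majority_boosting} to bound $\bbE[\Delta_{\ell+1} \mid \Delta_\ell]$ from below, then apply Hoeffding's inequality (available as the additive Chernoff bound in the paper's appendix) to obtain the high-probability statement. The two cases in the $\min$ will correspond to whether $\sqrt{w}\,\theta_\ell$ reaches its saturation value $1$.

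First, I would fix the configuration at the start of the $(\ell+1)^{\text{th}}$ sub-phase with $\Delta_\ell \geq 2\sqrt{n \log n}$, and note that a fraction $\tfrac{1}{2}+\tfrac{\Delta_\ell}{n}$ of agents display opinion $1$. Since the noise is $\delta$-uniform, each observed message shows $1$ with probability $q_\ell = \tfrac{1}{2}+\theta_\ell$, where $\theta_\ell := (1-2\delta)\tfrac{\Delta_\ell}{n}$. During the sub-phase, each agent $i$ gathers at least $w = 100/(1-2\delta)^2$ samples that are conditionally i.i.d.\ given the configuration (because $S_t^{(i)}, N_t^{(i)}, C_t^{(i)}$ are mutually independent across rounds and agents). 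Mapping observed messages to Rademacher variables ($1 \mapsto +1$, $0 \mapsto -1$), Lemma~\ref{lemma:majority_boosting} and the symmetric tie-breaking yield
\[
    \Pr(Y^{(i)}_{\ell+1}=1) - \tfrac{1}{2} \;\geq\; \tfrac{1}{\sqrt{2\pi e}} \min\!\left(\sqrt{w}\,\theta_\ell,\,1\right) \;=\; \tfrac{1}{\sqrt{2\pi e}} \min\!\left(\tfrac{10\,\Delta_\ell}{n},\,1\right),
\]
since $\sqrt{w}\,\theta_\ell = 10\,\Delta_\ell/n$ by construction of $w$. This is precisely the reason $w$ was chosen with the constant $100$.

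Next, I would observe that $\Delta_{\ell+1} + n/2$ is a sum of $n$ conditionally i.i.d.\ Bernoulli variables (one per agent), since each $Y^{(i)}_{\ell+1}$ is a deterministic function of the independent variables $\{S_t^{(i)}, N_t^{(i)}, C_t^{(i)}\}$ over the sub-phase and the (fixed) starting configuration. Hence
\[
    \bbE[\Delta_{\ell+1} \mid \Delta_\ell] \;\geq\; \tfrac{1}{\sqrt{2\pi e}} \, \min\!\left(10\,\Delta_\ell,\, n\right).
\]
By Hoeffding's inequality, $\Pr(\Delta_{\ell+1} \leq \bbE[\Delta_{\ell+1} \mid \Delta_\ell] - t) \leq \exp(-2t^2/n)$. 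I would then split into two cases. When $\Delta_\ell \leq n/10$, the expectation is at least $(10/\sqrt{2\pi e})\,\Delta_\ell > 2.42\,\Delta_\ell$, so the ``slack'' $\bbE - 1.2\,\Delta_\ell$ exceeds $1.22\,\Delta_\ell$, and Hoeffding yields deviation probability at most $\exp(-2(1.22)^2\Delta_\ell^2/n) \leq \exp(-11.9\log n) \leq n^{-8}$, using $\Delta_\ell \geq 2\sqrt{n\log n}$. When $\Delta_\ell > n/10$, the expectation is at least $2n/\sqrt{8\pi e}$, so the slack relative to the threshold $n/\sqrt{8\pi e}$ is at least $n/\sqrt{8\pi e}$, and Hoeffding yields deviation probability $\exp(-\Omega(n)) \ll n^{-8}$.

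The main obstacle will be verifying that the constants align tightly. The target factor $1.2$ is only marginally below $5/\sqrt{2\pi e} \approx 1.21$ (i.e., half the expected multiplicative gain), leaving a narrow but sufficient margin for the concentration step. Care must be taken to ensure that the factor $100$ in $w$ is large enough to produce an expected growth factor strictly exceeding $2\cdot 1.2 = 2.4$, and that the hypothesis $\Delta_\ell \geq 2\sqrt{n\log n}$ suffices to make the Hoeffding exponent exceed $8\log n$. Both hold with the given constants, which then yields the claimed bound of $1 - n^{-8}$.
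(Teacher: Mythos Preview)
Your proposal is correct and follows essentially the same approach as the paper: condition on the configuration with $\Delta_\ell = x$, apply Lemma~\ref{lemma:majority_boosting} to get $\Pr(Y^{(i)}_{\ell+1}=1) \geq \tfrac{1}{2} + \tfrac{1}{\sqrt{2\pi e}}\min\{10x/n,1\}$, then use the additive Chernoff--Hoeffding bound on the sum of the $n$ independent indicators. The only cosmetic difference is that the paper handles both regimes in one stroke by setting $A := \tfrac{1}{\sqrt{2\pi e}}\min\{10x/n,1\}$ and bounding $\Pr(\Delta_{\ell+1} \leq An/2) \leq \exp(-A^2 n/2)$, whereas you split explicitly on whether $\Delta_\ell \leq n/10$; the resulting constants ($10/\sqrt{8\pi e} > 1.2$ and $400/(4\pi e) > 8$) are identical.
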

\begin{proof}
    Let~$M$ be the multi-set of all messages received by Agent~$i$ between the~$\ell^{\text{th}}$ and~$(\ell+1)^{\text{th}}$ update rounds, and let~$t$ be the round number of the latter. By construction of Algorithm \SF, $|M| \geq \frac{100}{(1-2\delta)^2}$.
    Moreover, the opinion of Agent~$i$ in round~$t$, $Y_{t}^{(i)}$, corresponds to the majority opinion in~$M$.
    Formally, let~$X_k = +1$ (resp. $-1$) if the~$k^{\text{th}}$ message in~$M$ is~$1$ (resp. $0$), and let~$X = \sum_k X_k$.
    Recall that~$C_{t}^{(i)}$ denotes the coin toss performed by Agent~$i$ to break ties in round~$t$.
    We have that
     \begin{equation} \label{eq:weak_opn_comput_SF}
        Y_{t}^{(i)} = 1 \iff \begin{cases}
            X > 0, &\text{or} \\
            X = 0 \text{ and } C_{t}^{(i)} = 1.
        \end{cases}
    \end{equation}
    Conditioning on~$\Delta_{\ell} = x$, all~$\{X_k\}$ are mutually independent and identically distributed. Specifically, they follow a Rademacher distribution with parameter~$p$, where
    \begin{equation*}
        p := \pa{\frac{1}{2} + \frac{x}{n}}(1-\delta) + \pa{\frac{1}{2} - \frac{x}{n}}\delta = \frac{1}{2} + (1-2\delta)\cdot \frac{x}{n}.
    \end{equation*}
    Thus, by \Cref{lemma:majority_boosting},
    \begin{align*}
        \Pr\pa{X>0\mid \Delta_\ell = x} - \Pr\pa{X<0\mid \Delta_\ell = x} &\geq \sqrt{\frac{2}{\pi e}} \cdot \min\{ \sqrt{|M|} (1-2\delta) \cdot \frac{x}{n}, 1 \} \\
        &\geq \sqrt{\frac{2}{\pi e}} \cdot \min\{ 10 \cdot \frac{x}{n}, 1 \}. & \text{(since~$|M| \geq \tfrac{100}{(1-2\delta)^2}$)}
    \end{align*}
    Using the fact that~$\Pr\pa{ X = 0} = 1 - \Pr\pa{ X >0 } -  \Pr\pa{ X <0}$, we obtain
    \begin{align*}
        \Pr\pa{Y^{(i)}_{t} = 1 \mid \Delta_{\ell} = x}
        &= \Pr\pa{ X >0 \mid \Delta_\ell = x} + \frac{1}{2} \cdot \Pr\pa{ X = 0 \mid \Delta_\ell = x} &\text{(by \Cref{eq:weak_opn_comput_SF})} \\
        & = \frac{1}{2} + \frac{ \Pr\pa{ X >0 \mid \Delta_\ell = x} -  \Pr\pa{ X <0\mid \Delta_\ell = x }}{2} \\
        &\geq  \frac{1}{2} + \sqrt{\frac{1}{2\pi e}} \cdot \min\{ 10\cdot \frac{x}{n}, 1 \} := \frac{1}{2}+A.
    \end{align*}
    Therefore, the expected number of agents with opinion~$1$ in round~$t$ is greater than~$n/2 + A\, n$, and by the \nameref{thm:additive_chernoff_bound},
    \begin{align*} 
        \Pr\pa{\Delta_{\ell+1} \leq \frac{1}{\sqrt{8\pi e}} \cdot \min\{ 10\, x, n \} ~\Big|~ \Delta_\ell = x} &= \Pr\pa{\Delta_{\ell+1} \leq \frac{A\, n}{2} ~\Big|~ \Delta_\ell = x}  \\
        &\leq \exp \pa{-2 \cdot \frac{A^2 \, n}{4}} \\
        &= \exp \pa{ - \frac{1}{4\pi e} \cdot \min\{ 100 \frac{x^2}{n},n \}} \\
        &\leq \exp \pa{ - \frac{1}{4\pi e} \cdot \min\{ 400 \log n, n \}} & \text{(since~$x \geq 2 \sqrt{n \log n}$)} \\
        &= \exp \pa{ - \frac{400}{4\pi e} \cdot \log n}.
    \end{align*}
    This concludes the proof of \Cref{lemma:boosting_main} by noting that~$\frac{10}{\sqrt{8\pi e}} > 1.2$, and $\frac{400}{4 \pi e}>8$. 
\end{proof}

\begin{lemma} \label{lemma:correctness_opinion}
    We have that
    \begin{equation*}
        \Pr \pa{\Delta_L \geq \frac{n}{\sqrt{8 \pi e}}} \geq 1 - \frac{1}{n^7}. 
    \end{equation*}
\end{lemma}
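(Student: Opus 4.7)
The plan is to chain \Cref{lemma:initialization_boosting} with $L = 10 \log n$ iterated applications of \Cref{lemma:boosting_main}, via a union bound. Concretely, for every $\ell \in \{0,1,\ldots,L\}$, define the event
\begin{equation*}
    E_\ell := \left\{ \Delta_\ell \geq \min\left( 2 \sqrt{n \log n} \cdot 1.2^\ell,\ \tfrac{n}{\sqrt{8 \pi e}} \right) \right\}.
\end{equation*}
The goal is to show that $\Pr\left( \bigcap_{\ell=0}^L E_\ell \right) \geq 1 - 1/n^7$, from which the lemma follows, because $E_L$ trivially implies $\Delta_L \geq n/\sqrt{8\pi e}$ as soon as $2 \sqrt{n \log n} \cdot 1.2^L \geq n/\sqrt{8\pi e}$.

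First I would verify that this last inequality is indeed satisfied for $n$ large enough. Since $1.2^{10 \log n} = n^{10 \ln 1.2}$ (using the natural logarithm for concreteness), and $10 \ln 1.2 > 1$, the left hand side grows faster than $n$, so the inequality holds comfortably. In other words, the geometric growth factor $1.2$ compounded over $10 \log n$ sub-phases amply covers the gap between the initial bias $\sqrt{n \log n}$ and the target value $\Theta(n)$.

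Next I would establish the chain. The base case is given directly by \Cref{lemma:initialization_boosting}, which yields $\Pr(E_0) \geq 1 - 1/n^8$. For the inductive step, observe that whenever $E_\ell$ holds, we have $\Delta_\ell \geq 2 \sqrt{n \log n}$, so the hypothesis of \Cref{lemma:boosting_main} is met; moreover, on $E_\ell$,
\begin{equation*}
    \min\pa{ 1.2 \, \Delta_\ell, \tfrac{n}{\sqrt{8 \pi e}}} \geq \min\pa{ 2 \sqrt{n \log n} \cdot 1.2^{\ell+1}, \tfrac{n}{\sqrt{8 \pi e}} },
\end{equation*}
so \Cref{lemma:boosting_main} gives $\Pr(E_{\ell+1} \mid E_\ell) \geq 1 - 1/n^8$ (and in particular $\Pr(E_{\ell+1} \cap E_\ell) \geq \Pr(E_\ell) - 1/n^8$).

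Finally, I would conclude via a union bound over the $L+1$ failure events:
\begin{equation*}
    \Pr\pa{\bigcap_{\ell=0}^L E_\ell} \geq 1 - \frac{L+1}{n^8} \geq 1 - \frac{11 \log n}{n^8} \geq 1 - \frac{1}{n^7},
\end{equation*}
for $n$ sufficiently large. The main (and really only) obstacle is bookkeeping: making sure that the conditional form of \Cref{lemma:boosting_main} can be applied along the chain without loss, which is why the events $E_\ell$ are defined to encode exactly the hypothesis required at step $\ell$. No other technical work is needed beyond these simple manipulations.
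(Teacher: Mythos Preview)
Your proposal is correct and follows essentially the same approach as the paper: chain \Cref{lemma:initialization_boosting} with repeated applications of \Cref{lemma:boosting_main} and take a union bound over the $O(\log n)$ steps. The only cosmetic difference is that you track ``threshold'' events $E_\ell = \{\Delta_\ell \geq \min(2\sqrt{n\log n}\cdot 1.2^\ell,\, n/\sqrt{8\pi e})\}$ whereas the paper tracks ``increment'' events $\mathcal{E}_\ell = \{\Delta_{\ell+1} \geq \min(1.2\,\Delta_\ell,\, n/\sqrt{8\pi e})\}$; both encodings lead to the same telescoping bound $\Pr(\cap E_\ell) \geq 1-(L{+}1)/n^8 \geq 1-1/n^7$.
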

\begin{proof}
    Let~$\mathcal{E}_\ell$ the event
    \begin{equation*}
        \mathcal{E}_\ell = \left\{ \Delta_{\ell+1} \geq \min \pa{ 1.2 \, \Delta_\ell, \frac{n}{\sqrt{8 \pi e}} } \right\}.
    \end{equation*}
    We have
    \begin{align*}
        \Pr\pa{ \bigcap_{\ell = 0}^{L-1} \mathcal{E}_\ell ~\Big|~ \Delta_0 \geq 2 \sqrt{n \log n} } &\geq \prod_{\ell = 0}^{L-1} \pa{1 - \frac{1}{n^8}} & \text{(by \Cref{lemma:boosting_main})} \\
        &\geq 1 - L \cdot \frac{1}{n^8} & \text{(union bound)} \\
        &= 1 - 10 \, \log n \cdot \frac{1}{n^8} \\
        &\geq 1 - \frac{1}{n^7}. & \text{(for~$n$ large enough)}
    \end{align*}
    By \Cref{lemma:initialization_boosting}, $\Delta_0 \geq 2 \sqrt{n \log n }$ w.p. $1-\tfrac{1}{n^8}$, and so~$\cap_{\ell = 0}^{L-1} \mathcal{E}_\ell$ happens w.p. $1 - \tfrac{1}{n^7}$
    This implies that
    \begin{equation*}
        \Delta_L \geq \min \{ 2 \cdot 1.2^L \sqrt{n \log n} , \frac{n}{\sqrt{8\pi e}} \}. 
    \end{equation*}
    Since~$1.2^L = 1.2^{10 \, \log n} = n^{\log 12} > n$, we have $\Delta_L \geq \frac{n}{\sqrt{8\pi e}}$,
    which concludes the proof of \Cref{lemma:correctness_opinion}.
\end{proof}

\begin{lemma}
    \label{lemma:w.h.p._agent_correct_SF}
    At the end of the Majority Boosting phase, all agents have the correct opinion w.h.p.
\end{lemma}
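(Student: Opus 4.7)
The strategy is to leverage \Cref{lemma:correctness_opinion}, which already guarantees that after the $L = 10\log n$ short sub-phases the bias $\Delta_L$ is a constant fraction of $n$, and then to show that the final, longer sub-phase pushes every single agent to the correct opinion via a standard Chernoff argument on a sample of size at least $m$.

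First, I would condition on the event $\{ \Delta_L \geq n/\sqrt{8\pi e} \}$, which by \Cref{lemma:correctness_opinion} holds with probability at least $1-1/n^7$. Since the protocol only updates opinions at the end of each sub-phase, throughout the final (long) sub-phase the fraction of agents displaying opinion~$1$ is exactly $1/2 + \Delta_L/n \geq 1/2 + 1/\sqrt{8\pi e}$, and this fraction remains fixed for the whole sub-phase.

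Next, I would fix an agent~$i$ and study its majority update at the end of that sub-phase, which is computed over at least $m$ i.i.d.~noisy samples. Using that the noise is $\delta$-uniform and that opinions are displayed according to the fixed fraction above, each sample equals~$1$ with probability at least
\begin{equation*}
p := \pa{\tfrac{1}{2} + \tfrac{1}{\sqrt{8\pi e}}}(1-\delta) + \pa{\tfrac{1}{2} - \tfrac{1}{\sqrt{8\pi e}}}\delta = \tfrac{1}{2} + \tfrac{1-2\delta}{\sqrt{8\pi e}}.
\end{equation*}
The additive Chernoff bound then gives that the probability that strictly fewer than half of the $m$ samples equal~$1$ --- the only way for Agent~$i$'s updated opinion to become $0$, even under an adversarial tie-break --- is at most $\exp\!\pa{-m(1-2\delta)^2/(4\pi e)}$. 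Invoking \Cref{claim:easy_bound_m}, $m \geq c_1 \cdot (7/16)\log n / (1-2\delta)^2$, so choosing $c_1$ sufficiently large makes this at most $1/n^4$.

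Finally, a union bound over the $n$ agents combined with the failure probability of the conditioning event yields that every agent holds the correct opinion at the end of the Majority Boosting phase with probability at least $1 - 1/n^7 - n \cdot 1/n^4 = 1 - \mathcal{O}(1/n^3)$, which qualifies as w.h.p.~in the sense of the paper. The only care-point I anticipate is the observation that the updated opinions of different agents are not mutually independent --- they are computed from samples drawn (with replacement) from the same pool of displayed opinions --- but since each agent's observation-randomness (sampled indices, received noises, and tie-breaking coin) is independent across agents, the union bound applies without modification. This is really the only subtlety; once the conditioning on $\Delta_L$ is in place, the argument is a one-shot Chernoff computation.
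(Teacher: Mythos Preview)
Your proposal is correct and follows essentially the same approach as the paper: condition on $\{\Delta_L \geq n/\sqrt{8\pi e}\}$ via \Cref{lemma:correctness_opinion}, compute the per-sample success probability $p = \tfrac{1}{2} + \tfrac{1-2\delta}{\sqrt{8\pi e}}$, apply the additive Chernoff bound together with \Cref{claim:easy_bound_m} to get a per-agent failure probability of $n^{-\Omega(1)}$, and finish with a union bound. The observation about independence of the per-agent randomness is also implicit in the paper's argument.
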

\begin{proof}
    Let~$T$ be the last round of the execution of Algorithm \SF, which follows the last sub-phase of length~$\lceil m /h\rceil$. Let~$i \in I$, and let~$M$ be the memory of Agent~$i$ in this round, which contains~$m$ messages.
    Let $X_k$ be the $k$-th message stored in $M$, and $X=\sum_k X_k$.
    Conditioning on~$\Delta_L \geq \frac{n}{\sqrt{8\pi e}}$, each message is equal to 1 with probability at least~$1/2+\frac{1}{\sqrt{8\pi e}}(1-2\delta)$. Therefore, for every agent~$i$, the expectation of the number of messages equal to 1 is equal to
    \begin{equation*}
        \mu := \bbE(X) = m \cdot \pa{\frac{1}{2}+\frac{1-2\delta}{\sqrt{8\pi e}}}.
    \end{equation*}
    Note that
    \begin{equation} \label{eq:tmp}
        m \cdot \frac{(1-2\delta)^2}{8\pi e} \underset{\text{\Cref{claim:easy_bound_m}}}{\geq} c_1 \, \frac{7}{16} \cdot \frac{\log n}{(1-2\delta)^2} \cdot \frac{(1-2\delta)^2}{8\pi e} = c_1 \cdot \frac{7}{16} \cdot \frac{\log n}{8\pi e}.
    \end{equation}
    Moreover,
    \begin{align*}
        \Pr\pa{ X \leq \frac{m}{2} ~\Big|~ \Delta_L \geq \frac{n}{\sqrt{8\pi e}} } 
        &= \Pr\pa{ X \leq \mu + \frac{m}{2} -\mu ~\Big|~ \Delta_L \geq \frac{n}{\sqrt{8\pi e}} }\\
        &\leq \Pr\pa{ X \leq \mu - m \frac{1-2\delta}{\sqrt{8 \pi e}} ~\Big|~ \Delta_L \geq \frac{n}{\sqrt{8\pi e}} }\\
        &\leq \exp\pa{- 2 m \pa{\frac{1-2\delta}{\sqrt{8 \pi e}}}^2} & \text{(by \nameref{thm:additive_chernoff_bound})} \\
        &\leq \exp\pa{ c_1 \cdot \frac{7}{8} \cdot \frac{\log n}{8\pi e} } & \text{(by \Cref{eq:tmp})} \\
        &\leq \frac{1}{n^{3}},
    \end{align*}
    where we used that $c_1$ is big enough. 
    We obtain
    \begin{align*}
        \Pr\pa{ X > \frac{m}{2}} 
        &\geq  \Pr\pa{ \Delta_L \geq \frac{n}{\sqrt{8\pi e}} } \cdot \Pr\pa{ X \leq \frac{m}{2} ~\Big|~ \Delta_L \geq \frac{n}{\sqrt{8\pi e}} }\\
        &\geq \pa{1-\frac{1}{n^7}} \cdot  \pa{1-\frac{1}{n^{3}}} \geq 1-\frac{1}{n^3}.& \text{(by \Cref{lemma:correctness_opinion})}
    \end{align*}
    Since $\{X> m/2\}\implies Y_T^{(i)}=1$, this implies that~$\Pr\pa{ Y_T^{(i)}=1} \geq 1-\frac{1}{n^3}$, i.e., the opinion of Agent~$i$ is correct w.h.p.
    We conclude the proof of \Cref{lemma:w.h.p._agent_correct_SF} by a union bound over all agents.
\end{proof}


\subsubsection{\texorpdfstring{Proof of \Cref{thm:upper}}{Proof of Theorem 4}} \label{sec:final_proof_SF}

Let~$N$ be an arbitrary~$\delta$-upper bounded noise matrix of size~$d=2$, and let
\begin{equation*}
    m \geq c_1 \pa{6 \cdot \frac{n \delta \log n}{\min\{s^2,n\}  (1-|\Sigma|\delta)^2} + \frac{\sqrt{n}\log n}{s} + \frac{(s_0+s_1)\log n}{s^2} + h \log n},
\end{equation*}
where~$c_1$ is the constant from \Cref{eq:assumption_m_SF}.
By \Cref{thm:reduction}, there exists a matrix~$P$ s.t. the simulation of Algorithm \SF with artificial noise~$P$ under~$N$ is equivalent to Algorithm \SF under a $\delta'$-uniform noise, where 
\begin{equation*}
    \delta' = \pa{2 + \frac{1}{2} \cdot \frac{(1-2 \, \delta)}{\delta}}^{-1}.
\end{equation*}
Writing~$A = {\delta'}^{-1}$, we have
\begin{equation*}
    \frac{(1-2\delta')^2}{\delta'} = A \cdot \pa{1-\frac{2}{A}}^2 = \frac{(A-2)^2}{A} = \frac{ \left[ \tfrac{1}{2}\tfrac{1-2\delta}{\delta} \right]^2}{2 + \tfrac{1}{2}\tfrac{1-2\delta}{\delta}} = \frac{1}{4 \delta} \cdot \frac{(1-2\delta)^2}{2\delta + \tfrac{1}{2}(1-2\delta)} \geq \frac{1}{6} \cdot \frac{(1-2\delta)^2}{\delta},
\end{equation*}
where in the last inequality we have used the fact that~$\delta \in [0,1/2)$ and so~$2\delta + \tfrac{1}{2}(1-2\delta) \leq 1$.
Hence,
\begin{equation*}
    \frac{n \delta' \log n}{s^2 (1-2\delta')^2} \leq 6 \cdot \frac{n \delta \log n}{s^2 (1-2\delta)^2}.
\end{equation*}
Therefore, $m$ satisfies \Cref{eq:assumption_m_SF} with~$\delta'$, and \Cref{thm:upper} follows from our analysis under uniform noise (see \Cref{lemma:w.h.p._agent_correct_SF}).

\subsection{\texorpdfstring{Analysis of Algorithm \SSF}{Analysis of Algorithm SSF}} \label{sec:SSF_analysis}

\newcommand{\R}{\mathcal{R}}

In this section, we prove \Cref{thm:self}. Accordingly, we assume that the parameter~$m$ used by the protocol satisfies the following condition:
\begin{equation} \label{eq:assumption_m_SSF}
    m \geq c_1 \cdot \pa{ \frac{\delta n \log n}{(1-4\delta)^2} + n},
\end{equation}
for a large enough constant~$c_1$. Recall that we assume the noise matrix~$N$ to be $\delta$-uniform. This assumption will be relaxed later in \Cref{sec:final_proof_SSF} when concluding the proof of \Cref{thm:self}, by using the results of \Cref{sec:non-uniform}.

Agents and rounds are indexed solely for analysis purposes; agents themselves do not require identifiers or knowledge of the current time.
In addition to the notations already introduced in \Cref{sec:prelim}, we will write~$\R_{t_1:t_2}^{(i)}$ to denote all the randomness (noise, samples, and coin tosses) linked to Agent~$i$, between rounds~$t_1$ and~$t_2$. Formally,
\begin{equation*}
    \R_{t_1:t_2}^{(i)} := \left\{ (N_t^{(i)},S_t^{(i)},C_t^{(i)}), \quad \text{for every } t \in \{ t_1+1,\ldots,t_2 \} \right\}.
\end{equation*}

\subsubsection{\texorpdfstring{Correctness of weak-opinions in Algorithm \SSF}{Correctness of weak-opinions in Algorithm SSF}}

In this section, we prove that at any time in the execution after at least~$2\lceil m/h \rceil$ rounds, the weak-opinion of any non-source agent is correct with a probability slightly larger than~$1/2$. Moreover, we show that all weak-opinions are mutually independent.

\begin{lemma} \label{lemma:correctness_weak}
    For every~$t \geq 2 \lceil m/h \rceil$ and~$i \in I$,
    \begin{enumerate}
        \item[\bf{(i)}] $\Y_t^{(i)}$ is a deterministic function of~$\R_{t_1:t_2}^{(i)}$, where $t_2 \leq t$ is the last update round of Agent~$i$ up to round~$t$, and~$t_1 = t_2 - \lceil m/h \rceil$. As a consequence, all $\{\Y_t^{(i)}\}_{i \in I}$ are mutually independent.

        \item[\bf{(ii)}] We have
        \begin{equation*}
        \Pr\pa{ \tilde{Y}_t^{(i)} = 1  } \geq \frac{1}{2} + 4 \sqrt{\frac{\log n}{n}}.
        \end{equation*}
    \end{enumerate}
\end{lemma}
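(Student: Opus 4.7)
The plan is to reduce $\Y_t^{(i)}$ to a quantity depending only on Agent~$i$'s own randomness by exploiting the uniformity of the noise, and then to apply \Cref{lemma:first_phase} to the resulting i.i.d.~sum.

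For part~(i), I would use the following coupling. Since the noise is $\delta$-uniform on $\Sigma=\{0,1\}^2$, it admits the equivalent description: with probability $1-4\delta$ the message is passed through unchanged, and with probability $4\delta$ it is replaced by a uniformly random element of $\Sigma$. I would encode this decomposition inside $N_{t,\ell}^{(i)}$. The assumption $t \geq 2\lceil m/h\rceil$ ensures Agent~$i$ has performed at least two update rounds, so at its last update round $t_2 \leq t$ the multi-set $M_{t_2}^{(i)}$ contains only messages genuinely received during $(t_1, t_2]$ with $t_1 = t_2 - \lceil m/h\rceil$ (no adversarial initial content survives). I would then argue message-by-message that the first-bit-$1$ messages used to form $\Y_t^{(i)}$ depend only on $\R_{t_1:t_2}^{(i)}$ and the (fixed, non-corruptible) source structure:
\begin{itemize}
    \item if the sampled agent $j$ is a source with known preference $p_j$, both whether the received first bit is $1$ and the conditional distribution of the second bit are determined by $(N_{t,\ell}^{(i)}, S_{t,\ell}^{(i)}, p_j)$ alone;
    \item if $j$ is a non-source (true message $(0,\Y^{(j)})$), a received first bit of $1$ is only possible through the ``replace uniformly'' event, and conditional on that event the second bit is uniform in $\{0,1\}$, independent of $\Y^{(j)}$.
\end{itemize}
Mutual independence of $\{\Y_t^{(i)}\}_{i\in I}$ then follows immediately from the mutual independence of $\{\R_{t_1:t_2}^{(i)}\}_{i\in I}$.

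For part~(ii), I would reduce to \Cref{lemma:first_phase}. Assume WLOG that the correct opinion is $1$. For each of the $m$ messages in $M_{t_2}^{(i)}$, set $X_k = +1$ if the received message equals $(1,1)$, $X_k = -1$ if it equals $(1,0)$, and $X_k = 0$ otherwise; let $X = \sum_k X_k$. By part~(i) the $\{X_k\}$ are i.i.d. A direct computation, separating sources of each preference from non-sources, gives
\begin{align*}
    \Pr(X_k = 1) &= \tfrac{s_1}{n}(1-4\delta) + \delta, \qquad \Pr(X_k = -1) = \tfrac{s_0}{n}(1-4\delta) + \delta, \\
    \Pr(X_k \neq 0) &= \tfrac{s_0+s_1}{n}(1-4\delta) + 2\delta, \qquad \Pr(X_k=1) - \Pr(X_k=-1) = \tfrac{s(1-4\delta)}{n}.
\end{align*}
Condition~\Cref{eq:first_claim} follows from $ab + 2\delta \geq a^2 b/2 + \delta$ with $a = 1-4\delta$ and $b = (s_0+s_1)/n$. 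For \Cref{eq:second_claim,eq:third_claim}, I write $p-\tfrac12 = \tfrac{s(1-4\delta)/n}{2[(1-4\delta)(s_0+s_1)/n + 2\delta]}$ and upper-bound the denominator by $8\delta$ in the regime $\delta \geq \tfrac{s_0+s_1}{2n}(1-4\delta)$ and by $4(s_0+s_1)(1-4\delta)/n$ in the opposite regime. Condition~\Cref{eq:zero_th_claim} is obtained from the standing assumption \Cref{eq:assumption_m_SSF}, absorbing the auxiliary terms via $s\geq 1$ and $s_0+s_1 \leq n/2$. Then \Cref{lemma:first_phase} yields $\Pr(X>0) - \Pr(X<0) \geq 8\sqrt{\log n / n}$, and since the tie-breaking coin is fair,
\[
    \Pr(\Y_t^{(i)} = 1) \;=\; \Pr(X>0) + \tfrac12\Pr(X=0) \;=\; \tfrac12 + \tfrac12\bigl(\Pr(X>0) - \Pr(X<0)\bigr) \;\geq\; \tfrac12 + 4\sqrt{\tfrac{\log n}{n}}.
\]

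The main obstacle is part~(i): establishing rigorously that, even though non-source agents transmit messages whose second bits depend on their (possibly adversarially-seeded and history-dependent) weak-opinions, the receiver's filtering to first-bit-$1$ messages, combined with the uniformity of the noise, fully decouples $\Y_t^{(i)}$ from all other agents' weak-opinions. The cleanest route is the pass/replace-uniformly decomposition sketched above, but it requires some care to set up formally since $N_{t,\ell}^{(i)}$ is notationally a random map rather than a scalar; the key point to verify is that the distribution of the second bit of a received first-bit-$1$ message from a non-source is uniform regardless of the sender's weak-opinion, which is what makes the coupling deterministic in $\R_{t_1:t_2}^{(i)}$ and the static source profile.
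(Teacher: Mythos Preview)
Your proposal is correct and follows essentially the same approach as the paper: for part~(i) you exploit a coupling of the $\delta$-uniform noise so that the second bit of any received message with first bit~$1$ is determined by $\R_{t_1:t_2}^{(i)}$ and the fixed source profile alone, and for part~(ii) you define the same $\{-1,0,+1\}$-valued variables and verify the hypotheses of \Cref{lemma:first_phase}. The only cosmetic difference is that you use the ``pass-through with probability $1-4\delta$, otherwise replace by a uniform element'' coupling, whereas the paper writes out four explicit random maps (one identity map with probability $1-3\delta$ and three specific permutation-like maps each with probability $\delta$); both couplings realise the same $\delta$-uniform kernel and yield the identical conclusion that messages from non-sources with received first bit~$1$ carry a uniformly random second bit independent of the sender's weak-opinion, and your probability computations $\Pr(X_k=1)=\tfrac{s_1}{n}(1-4\delta)+\delta$ etc.\ are algebraic simplifications of the paper's expressions $\tfrac{s_1}{n}(1-3\delta)+(1-\tfrac{s_1}{n})\delta$.
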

\begin{proof}
We prove each statement in \Cref{lemma:correctness_weak} separately.

\paragraph{Proof of~(i).} Fix a round~$T \geq 2 \lceil m/h \rceil$, let $t_2$ be the last update round of Agent~$i$ before round~$T$, and let~$t_1 = t_2 - \lceil m/h \rceil$.
    Since there are $\lceil m/h \rceil$ rounds between two consecutive updates, $t_2 \geq T - \lceil m/h \rceil$, and hence~$t_1 \geq 0$. Moreover, the fact that the memory of Agent~$i$ has been emptied in round~$t_1$ implies that in round~$t_2$ (when~$\Y_T^{(i)}$ is computed), it does not contain any adversarially generated messages.    
    Consider the multi-set~$\big\{\sigma_k^{(i)}, k \in \{1,\ldots,m\} \big\}$ of all~$m$ messages received by Agent~$i$ between round~$t_1$ and~$t_2$.
    Define
    \begin{equation*}
        X_k^{(i)} := \begin{cases}
         1 & \text{if } \sigma_k^{(i)} = (1,1), \\
         -1 & \text{if } \sigma_k^{(i)} = (1,0),  \\
         0 & \text{otherwise. }
         \end{cases} \\
    \end{equation*}
    Recall that formally, each~$N_{t,\ell}^{(i)}$ is a random function from~$\Sigma$ to~$\Sigma$, with the convention that the~$\ell^{\text{th}}$ message sampled by Agent~$i$ in round~$t$ is modified by noise according to~$\sigma \rightarrow N_{t,\ell}^{(i)}(\sigma)$.
    We can assume, without loss of generality, that~$N_{t,\ell}^{(i)}$ is distributed explicitly as follows:
	\begin{equation} \label{eq:noise_def}
		N_{t,\ell}^{(i)} := \begin{dcases} \func{ (0,0) \rightarrow (0,0) \\ (0,1) \rightarrow (0,1) \\ (1,1) \rightarrow (1,1) \\ (1,0) \rightarrow (1,0)} \qquad \text{\em (uncorrupted message)} & \text{w.p. } 1-3\delta, \\
        \func{ (0,0) \rightarrow (0,1) \\ (0,1) \rightarrow (0,0) \\ (1,1) \rightarrow (0,0) \\ (1,0) \rightarrow (0,0) } \text{ or } \func{ (0,0) \rightarrow (1,0) \\ (0,1) \rightarrow (1,0) \\ (1,1) \rightarrow (0,1) \\ (1,0) \rightarrow (0,1) } \text{ or } \func{ (0,0) \rightarrow (1,1) \\ (0,1) \rightarrow (1,1) \\ (1,1) \rightarrow (1,0) \\ (1,0) \rightarrow (1,1) } & \text{w.p. } \delta.
		 \end{dcases} 
	\end{equation}
	Indeed, with this definition, one can check that the noise is~$\delta$-uniform: for every~$\sigma,\sigma' \in \Sigma$, we have
	\begin{equation*}
		\Pr\pa{ N_{t,\ell}^{(i)}(\sigma) = \sigma'} = \begin{cases} (1-3\delta) & \text{if } \sigma = \sigma', \\ \delta & \text{otherwise.} \end{cases}
	\end{equation*}
    It is clear from \Cref{eq:noise_def} that whenever the original message has a first bit equal to~$0$, and this bit is flipped to~$1$ under the effect of noise, then the second bit of the modified message is independent of the second bit of the original message. 
    Consequently, each~$k \in \{1,\ldots,m\}$ corresponds to a different pair~$(t,\ell) \in \{t_1+1,\ldots,t_2\} \times \{1,\ldots,h\}$, such that~$X_k^{(i)}$ is a deterministic function of~$S_{t,\ell}^{(i)}$ and~$N_{t,\ell}^{(i)}$. Specifically, $S_{t,\ell}^{(i)}$ determines whether the sampled agent is a source or not, while~$N_{t,\ell}^{(i)}$ determines how the message is affected by noise, and no other information is needed to compute the value of~$X_k^{(i)}$. 
    Note that this argument relies heavily on the fact that the noise matrix~$N$ is $\delta$-uniform.
    Now, let
    \begin{equation*}
        X^{(i)} = \sum_{k=1}^{m} X_k^{(i)}.
    \end{equation*}
    Informally, $X^{(i)}$ measures the bias towards the correct opinion, corresponding to the messages observed by Agent~$i$.
    Recall that~$C_{t_2}^{(i)}$ represents the coin used by Agent~$i$ in round~$t_2$ to break ties when computing~$\Y_T^{(i)}$.
    By definition of Algorithm \SSF,
    \begin{equation} \label{eq:weak_opn_equiv}
        \Y_T^{(i)} = 1 \iff \begin{cases}
            X^{(i)} > 0, &\text{or} \\
            X^{(i)} = 0 \text{ and } C_{t_2}^{(i)} = 1. 
        \end{cases}
    \end{equation}
    Together with the reasoning above, this implies that $\Y_T^{(i)}$ is a deterministic function of~$\mathcal{R}_{t_1:t_2}^{(i)}$.
    Since all~$\big\{ \mathcal{R}_{t_1:t_2}^{(i)} \big\}_{i \in I}$ are mutually independent,
	it follows that all $\{\Y_T^{(i)}\}_{i \in I}$ are also mutually independent, which establishes \Cref{lemma:correctness_weak}.(i).

    \paragraph{Proof of~(ii).}
    There are two ways in which~$X_k^{(i)}$ can be equal to~$1$ (resp.~$-1$): either Agent~$i$ samples a source with opinion~$1$ (resp.~$0$) and the noise does not modify the message, or they sample any other agent and the noise changes the message into~$(1,1)$ (resp.~$(1,0)$). Therefore,
    \begin{equation} \label{eq:proba_Xk}
        \begin{split}
            \Pr\pa{ X_k^{(i)} = 1 } &= \frac{s_1}{n}(1-3\delta) + \pa{1-\frac{s_1}{n}}\delta, \quad \text{and} \\
            \Pr\pa{ X_k^{(i)} = -1 } &= \frac{s_0}{n}(1-3\delta) + \pa{1-\frac{s_0}{n}}\delta.
        \end{split}
    \end{equation}
 
    \begin{claim} \label{claim:p_SSF}
        For every~$k \in \{1,\ldots,m\}$, we have
        \begin{equation} \label{eq:first_claim_OLD_ssf}
             \Pr\pa{ X_k^{(i)}\neq 0 } \geq (1-4\delta)^2 \cdot \frac{s_0 + s_1}{2n} + \delta
        \end{equation}
        Moreover, if $\delta \geq \frac{s_0 + s_1}{2n}(1-4\delta)$,
         \begin{equation} \label{eq:second_claim_OLD_ssf}
            \Pr\pa{ X_k^{(i)} = 1 \mid X_k^{(i)} \neq 0} \geq \frac{1}{2} + \frac{s}{n}\frac{(1-4\delta)}{8\,\delta},
        \end{equation}
        while if $\delta < \frac{s_0 + s_1}{2n}(1-4\delta)$,
        \begin{equation} \label{eq:third_claim_OLD_ssf}
            \Pr\pa{ X_k^{(i)} = 1 \mid X_k^{(i)} \neq 0} \geq \frac{1}{2} + \frac{s}{4 (s_0+s_1)}.
        \end{equation}
    \end{claim}
    \begin{proof}[Proof of \Cref{claim:p_SSF}]
        By \Cref{eq:proba_Xk}, we have
        \begin{align}
            \Pr\pa{ X_k^{(i)}\neq 0 }  \nonumber
            &= \frac{s_1}{n}(1-3\delta) + \pa{1-\frac{s_1}{n}}\delta + \frac{s_0}{n}(1-3\delta) + \pa{1-\frac{s_0}{n}}\delta \\ \label{eq:Xneq0_ssf}
            &= 2 \delta + (1-4\delta) \pa{ \frac{s_0+s_1}{n} }
        \end{align}
        \Cref{eq:first_claim_OLD_ssf} follows from \Cref{eq:Xneq0_ssf} and the fact that~$0 < 1-4\delta \leq 1$.
        Moreover, we have
        \begin{align} \nonumber
            &\Pr\pa{ X_k^{(i)} = 1 } - \Pr\pa{ X_k^{(i)} = -1}\\ \nonumber
            &= \frac{s_1}{n}(1-3\delta) + \pa{1-\frac{s_1}{n}}\delta - \frac{s_0}{n}(1-3\delta) - \pa{1-\frac{s_0}{n}}\delta\\ \label{eq:X=1-X=-1_ssf}
            &= \frac{s_1-s_0}{n} (1-4\delta). 
        \end{align}
        Then, if $\delta \geq \frac{s_0 + s_1}{2n}(1-4\delta)$, we have
        \begin{align*}
            &\Pr\pa{ X_k^{(i)} = 1 \mid X_k^{(i)} \neq 0} - \Pr\pa{ X_k^{(i)} = -1 \mid X_k^{(i)} \neq 0}\\
            &= \frac{\Pr\pa{ X_k^{(i)} = 1} - \Pr\pa{ X_k^{(i)} = -1}}{  \Pr\pa{ X_k^{(i)}\neq 0 } }\\
            &= \frac{ \frac{s_1-s_0}{n} (1-4\delta) }{ 2 \delta + (1-4\delta) \pa{ \frac{s_0+s_1}{n} } } &\pa{ \text{by \Cref{eq:Xneq0_ssf,eq:X=1-X=-1_ssf}} }\\
            &\geq \frac{ \frac{s_1-s_0}{n} (1-4\delta) }{ 4 \delta } = \frac{s}{n} \cdot \frac{1-4\delta}{4\delta}.
        \end{align*}
        Instead, if $\delta < \frac{s_0 + s_1}{2n}(1-4\delta)$, we have 
        \begin{align*}
            &\Pr\pa{ X_k^{(i)} = 1 \mid X_k^{(i)} \neq 0} - \Pr\pa{ X_k^{(i)} = -1 \mid X_k^{(i)} \neq 0}\\
            &= \frac{ \frac{s_1-s_0}{n} (1-4\delta) }{ 2 \delta + (1-4\delta) \pa{ \frac{s_0+s_1}{n} } } \\
            &\geq \frac{s_1 - s_0}{2(s_0+s_1)} = \frac{s}{2(s_0 + s_1)}.
        \end{align*}
        In both cases, since $\Pr\pa{ X_k^{(i)} = -1 \mid X_k^{(i)}\neq 0} = 1 - \Pr\pa{ X_k^{(i)} = 1 \mid X_k^{(i)} \neq 0}$,
        \begin{align*}
            \Pr\pa{ X_k^{(i)} = 1 \mid X_k^{(i)} \neq 0} - \Pr\pa{ X_k^{(i)} = -1 \mid X_k^{(i)} \neq 0} \geq A &\implies 2 \, \Pr\pa{ X_k^{(i)} = 1 \mid X_k^{(i)} \neq 0} - 1 \geq A \\
            &\implies \Pr\pa{ X_k^{(i)} = 1 \mid X_k^{(i)} \neq 0} \geq \frac{1}{2} + \frac{A}{2},
        \end{align*}
        and \Cref{eq:second_claim_OLD_ssf,eq:third_claim_OLD_ssf} follow when replacing~$A$ by~$\frac{s}{n} \cdot \frac{1-4\delta}{4\delta}$ and~$\frac{s}{2(s_0 + s_1)}$, respectively.
    \end{proof}
    Because of our assumption in \Cref{eq:assumption_m_SSF}, and since~$s \geq 1$, $m$ satisfies \Cref{eq:zero_th_claim}. Together with \Cref{claim:p_SSF}, this allows us to use \Cref{lemma:first_phase}. More precisely, we have
    \begin{align*}
        \Pr\pa{\Y^{(i)}_T = 1}
        &= \Pr\pa{ X^{(i)} >0 } + \frac{1}{2} \cdot \Pr\pa{ X^{(i)} = 0 } &\text{(by \Cref{eq:weak_opn_equiv})} \\
        & = \frac{1}{2} + \frac{ \Pr\pa{ X^{(i)} >0 } -  \Pr\pa{ X^{(i)} <0 }}{2} &\text{(since } \Pr(X^{(i)} = 0) = 1 - \Pr( X^{(i)} >0 ) -  \Pr( X^{(i)} <0 ))\\
        &\geq  \frac{1}{2} + 4 \sqrt{\frac{\log n}{n}}, &\text{(by \Cref{claim:p_SSF}, \Cref{eq:assumption_m_SSF} and \Cref{lemma:first_phase})}
    \end{align*}
    which concludes the proof of \Cref{lemma:correctness_weak}.
\end{proof}

\subsubsection{\texorpdfstring{Correctness of opinions in Algorithm \SSF}{Correctness of opinions in Algorithm SSF}}

In this section, we show that the small advantage established in \Cref{lemma:correctness_weak} implies that all agents have a correct opinion with high probability. First, we prove that in every round, the number of agents with a correct weak-opinion is sufficiently large with high probability.

\begin{lemma} \label{lemma:weak_opinion_strong_correctness}
    For every~$t \geq 2 \lceil m/h \rceil$,
    \begin{equation*}
        \Pr\pa{ \bigcap_{s = t+1}^{t+\lceil m/h \rceil} \left\{ \sum_{j \in I} \tilde{Y}_s^{(j)} > \frac{n}{2} + 2 \sqrt{n\log n} \right\} } \geq 1-\frac{1}{n^7}.
    \end{equation*}
\end{lemma}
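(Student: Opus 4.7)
The plan is to reduce the claim to a single-round concentration statement and then take a union bound over the window of $\lceil m/h \rceil$ rounds. The key enabler is \Cref{lemma:correctness_weak}: at any fixed round $s \geq 2 \lceil m/h \rceil$, the weak-opinions $\{\tilde{Y}_s^{(j)}\}_{j \in I}$ are mutually independent (part (i)) and each has $\Pr(\tilde{Y}_s^{(j)} = 1) \geq \tfrac{1}{2} + 4\sqrt{\log n /n}$ (part (ii)). In particular, even though the weak-opinion of a given agent is correlated with its own weak-opinions in earlier rounds (since both depend on overlapping observation windows), at a single time slice the variables are independent across agents, which is all we need.

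For each fixed $s$ in the range $[t+1, t + \lceil m/h \rceil]$, I would let $Z_s := \sum_{j \in I} \tilde{Y}_s^{(j)}$, so that $\mathbb{E}[Z_s] \geq n/2 + 4\sqrt{n \log n}$. Since $Z_s$ is a sum of $n$ independent Bernoulli random variables, Hoeffding's inequality (or the additive Chernoff bound already invoked elsewhere in \Cref{sec:SF_analysis}) gives
\begin{equation*}
    \Pr\!\pa{ Z_s \leq \frac{n}{2} + 2\sqrt{n \log n} }
    \leq \Pr\!\pa{ Z_s \leq \mathbb{E}[Z_s] - 2\sqrt{n \log n} }
    \leq \exp\!\pa{ -\frac{2 \cdot (2\sqrt{n \log n})^2}{n} } = \frac{1}{n^{8}}.
\end{equation*}

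Finally, I would apply a union bound over the $\lceil m/h \rceil$ values of $s$ in the window. Under the running-time regime of \Cref{thm:self} (and using \Cref{eq:assumption_m_SSF} to bound $m$), one has $\lceil m/h \rceil \leq n$ for $n$ large enough, so the total failure probability is at most $\lceil m/h \rceil \cdot n^{-8} \leq n^{-7}$, yielding the claim. The step that deserves some care is verifying that the independence granted by \Cref{lemma:correctness_weak}(i) truly holds simultaneously across all $i$ at a single round $s$: this is immediate because the underlying randomness $\mathcal{R}_{t_1(i):t_2(i)}^{(i)}$ is independent across agents $i$, even though the index bounds $t_1(i), t_2(i)$ may differ between agents due to their desynchronized update rounds. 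No routine calculation beyond Hoeffding and the union bound is required.
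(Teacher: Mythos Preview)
Your per-round argument is correct: at any fixed $s \geq 2\lceil m/h\rceil$, \Cref{lemma:correctness_weak} does give you independent Bernoullis with the stated bias, and Hoeffding yields $\Pr(Z_s \leq n/2 + 2\sqrt{n\log n}) \leq n^{-8}$. The gap is in the union bound. You claim $\lceil m/h \rceil \leq n$, but \Cref{eq:assumption_m_SSF} is a \emph{lower} bound on $m$, not an upper bound, so it cannot furnish an upper bound on $\lceil m/h\rceil$. Even taking $m$ to be exactly the expression in \Cref{eq:assumption_m_SSF} (or the explicit value in \Cref{sec:final_proof_SSF}), one has $m = \Theta\!\big(\tfrac{\delta n\log n}{(1-4\delta)^2}+n\big)$; with $h=1$ and constant $\delta$ this already gives $\lceil m/h\rceil = \Theta(n\log n) \gg n$, and as $\delta \to 1/4$ the window length is not even polynomially bounded in $n$. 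Hence the product $\lceil m/h\rceil\cdot n^{-8}$ need not be $\leq n^{-7}$, and there is no slack in the exponent $8$ to absorb the extra factor, since that $8$ is pinned down by the constant $4$ in \Cref{lemma:correctness_weak}(ii).

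The paper sidesteps this by observing that although the window contains $\lceil m/h\rceil$ rounds, the vector of weak-opinions $(\tilde{Y}_s^{(1)},\dots,\tilde{Y}_s^{(n)})$ changes at most $n$ times across the window, because each agent updates its weak-opinion exactly once there. After relabeling agents by update time, every value of $\sum_j \tilde{Y}_s^{(j)}$ in the window equals some $\mathcal{Z}_k := \sum_{j\le k} Z_j' + \sum_{j>k} Z_j$ for $k\in\{0,\dots,n\}$, where $Z_j,Z_j'$ are the weak-opinions of agent $j$ before and after its update. One then applies Hoeffding to each $\mathcal{Z}_k$ (independence across $j$ at the corresponding round still comes from \Cref{lemma:correctness_weak}(i)) and union-bounds over the $n{+}1$ values of $k$, which yields $(n{+}1)\cdot n^{-8} \leq n^{-7}$ regardless of how large $\lceil m/h\rceil$ is. This counting-configurations-rather-than-rounds step is the ingredient your argument is missing.
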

\begin{proof}
    Let~$t_1 \geq 2 \lceil m/h \rceil$, and $t_2 = t_1+\lceil m/h \rceil$.
    By construction of Algorithm \SSF, each non-source agent updates its weak-opinion exactly once between rounds~$t_1+1$ and~$t_2$ (included).
    For every~$j \in I$, let~$Z_j$ (resp. $Z_j')$ be the weak-opinion of Agent~$j$ before (resp. after) its update.
    
    Without loss of generality, we consider the case that the ordering of the updates respects the indices of the agents, i.e., $i < j$ if and only if Agent~$i$ updates before Agent~$j$ (or in the same round).
    With this assumption, for every~$s \in \{t_1+1,\ldots,t_2\}$, there exists~$k \in \{0,\ldots,n\}$ such that
    \begin{equation} \label{eq:Z_k-def}
        \sum_{j \in I} \tilde{Y}_s^{(j)}  = \sum_{j=1}^k Z_j' + \sum_{j=k+1}^n Z_j =: \mathcal{Z}_k.
    \end{equation}
    By \Cref{lemma:correctness_weak}.(ii), for every~$k \in \{0,\ldots,n\}$,
    \begin{equation*}
        \bbE \pa{ \mathcal{Z}_k } \geq \frac{n}{2} + 4 \sqrt{n \log n}.
    \end{equation*}
    By \Cref{lemma:correctness_weak}.(i), all~$\{\Y_s^{(j)}\}_{j \in I}$ are mutually independent, so we can apply \nameref{thm:additive_chernoff_bound} to obtain
    \begin{equation} \label{eq:Z_k-hoeffding_bound}
        \Pr \pa{ \mathcal{Z}_k \leq \frac{n}{2} + 2 \sqrt{n\log n }} \leq
        \exp\pa{ - 8 \log n }.
    \end{equation}
    Finally, we have
    \begin{align*}
        \Pr\pa{ \bigcup_{s = t+1}^{t+\lceil m/h \rceil} \left\{ \sum_{j \in I} \tilde{Y}_s^{(j)} \leq \frac{n}{2} + 2 \sqrt{n \log n} \right\} } &= \Pr\pa{ \bigcup_{k=0}^{n} \left\{ \mathcal{Z}_k \leq \frac{n}{2} + 2 \sqrt{n\log n} \right\} } &\text{(by \Cref{eq:Z_k-def})} \\
        &\leq \sum_{k=0}^n \Pr \pa{\mathcal{Z}_k \leq \frac{n}{2} + 2 \sqrt{n\log n}} & \text{(union bound)} \\
        &\leq n \, \exp\pa{ - 8 \log n }. &\text{(by \Cref{eq:Z_k-hoeffding_bound})}
    \end{align*}
    This concludes the proof of \Cref{lemma:weak_opinion_strong_correctness}.
\end{proof}

Next, we demonstrate that combining the previous result with the majority operation ensures, with very high probability, the correctness of all opinions.

\begin{lemma} \label{lemma:ssf_second_phase}
    Let~$t \geq 3 \lceil m/h \rceil$. For every~$i \in I$,
    \begin{equation*}
        \Pr \pa{Y_t^{(i)} = 1} \geq 1-\frac{1}{n^6}.
    \end{equation*}
\end{lemma}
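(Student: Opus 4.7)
The plan is to fix an agent $i$ and to let $t^* \leq t$ be the most recent update round of agent $i$, so that by construction of Algorithm~\SSF we have $Y_t^{(i)} = Y_{t^*}^{(i)}$. Under the assumption $t \geq 3\lceil m/h\rceil$, the round $t^*$ is late enough that the multi-set used at that update consists only of genuinely sampled messages (the adversarial initial content having been flushed out by $i$'s first two updates), and the window $W := [t^* - \lceil m/h\rceil + 1,\, t^*]$ from which these samples were collected falls in the regime where \Cref{lemma:weak_opinion_strong_correctness} can be applied. I would first make this bookkeeping on the update schedule precise.

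Next, I would apply \Cref{lemma:weak_opinion_strong_correctness} to $W$ to obtain the good event
\[
\mathcal{E} := \bigcap_{s \in W} \Big\{ \sum_{j \in I} \Y_s^{(j)} > \tfrac{n}{2} + 2\sqrt{n \log n} \Big\},
\]
with $\Pr(\mathcal{E}) \geq 1 - 1/n^7$. For each of the (at least) $m$ messages contributing to the update at round $t^*$, let $X_k \in \{0,1\}$ be the second bit of the received message, so that $Y_{t^*}^{(i)} = 1$ whenever $\sum_k X_k > m/2$ (ties being broken by agent $i$'s coin toss). The key observation is that, conditional on the weak-opinions of all agents $j \neq i$ throughout $W$ --- which by \Cref{lemma:correctness_weak}(i) depend only on randomness independent of $\R^{(i)}$ --- the $X_k$'s are mutually independent functions of disjoint per-round coordinates of $i$'s own sampling and noise randomness. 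A direct computation under $\delta$-uniform noise shows that, for the message received at round $s$,
\[
\Pr\big( X_k = 1 \,\big|\, Z_s \big) \;=\; 2\delta + \tfrac{Z_s}{n}\,(1-4\delta),
\]
where $Z_s$ counts the agents whose displayed second bit at round $s$ equals $1$. Under $\mathcal{E}$ we have $Z_s \geq n/2 + 2\sqrt{n\log n}$, so each $X_k$ is conditionally Bernoulli with parameter at least $\tfrac{1}{2} + 2(1-4\delta)\sqrt{\log n/n}$.

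Finally, I would apply Hoeffding's inequality to $\sum_k X_k$, conditionally on $\mathcal{E}$ and on the other agents' randomness, to get
\[
\Pr\Big( \sum_k X_k \leq \tfrac{m}{2} \,\Big|\, \mathcal{E} \Big) \;\leq\; \exp\!\big( -8\, m\, (1-4\delta)^2\, \log n / n \big),
\]
and combine this with $\Pr(\mathcal{E}^c) \leq 1/n^7$ via a union bound to conclude $\Pr(Y_t^{(i)} \neq 1) \leq 1/n^6$. The main obstacle I expect is checking that the Hoeffding exponent is at least $7\log n$ in both regimes of $\delta$ that appear in the bound \Cref{eq:assumption_m_SSF} on $m$: when $\delta$ is bounded away from $1/4$, the additive term $c_1 n$ yields an exponent of order $c_1 (1-4\delta)^2 \log n$, while as $\delta$ approaches $1/4$ the term $c_1 \delta n \log n/(1-4\delta)^2$ takes over and yields an exponent of order $c_1 \delta \log^2 n$. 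A short case split, for instance at $\delta = 1/\log n$, shows that for $c_1$ a sufficiently large absolute constant one of these two exponents always exceeds $7\log n$, completing the argument.
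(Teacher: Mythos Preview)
Your proposal is correct and follows essentially the same route as the paper: locate the last update window, invoke \Cref{lemma:weak_opinion_strong_correctness} for the good event on weak-opinions, establish conditional independence of the $X_k$'s via \Cref{lemma:correctness_weak}(i), and then apply Hoeffding plus a union bound. Your explicit case split on $\delta$ (at $\delta = 1/\log n$) to verify that the Hoeffding exponent exceeds $7\log n$ is in fact more careful than the paper's terse justification ``$m \geq n/(1-4\delta)$''; the only small point to tighten is that you should also condition on agent~$i$'s own weak-opinion throughout $W$ (since $i$ may sample itself), which the paper handles by noting that $\tilde{Y}_s^{(i)}$ for $s \in W$ is determined by a disjoint earlier window of $\mathcal{R}^{(i)}$ and is therefore independent of the sampling and noise in $W$.
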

\begin{proof}
    Let~$t \geq 3 \lceil m/h \rceil$ and $i \in I$.
    Let~$t_2 < t$ be the index of the last update round of Agent~$i$ before round~$t$, and let~$t_1 = t_2 - \lceil m/h \rceil$.
    By definition, $Y_t^{(i)}$, was computed based on the second bit of all messages observed by Agent~$i$ between rounds~$t_1+1$ and~$t_2$.
    Since the second bit of the message of a non-source agent always correspond to its weak opinion, $Y_t^{(i)}$ is a deterministic function of $\R_{t_1:t_2}^{(i)}$ and the set of all weak opinions between rounds~$t_1$ and~$t_2$, namely,
    \begin{equation*}
        \left\{ \tilde{Y}_s^{(j)}, \quad \text{for } j \in I, s \in \{t_1+1,\ldots,t_2\} \right\}.
    \end{equation*}
    Let~$A$ be the following event:
    \begin{equation*}
        A := \bigcap_{s = t_1+1}^{t_2} \left\{ \sum_{j \in I} \tilde{Y}_s^{(j)} > \frac{n}{2} + 2 \sqrt{n\log n} \right\},
    \end{equation*}
    where~$c_1$ is given by \Cref{lemma:weak_opinion_strong_correctness}.
    Note that:
    \begin{itemize}
        \item[(i)] Let~$s \in \{t_1+1,\ldots,t_2\}$ and~$t_0 = t_1 - \lceil m/h \rceil$.
        By \Cref{lemma:correctness_weak}.(i), $\tilde{Y}_{s}^{(i)}$ is a deterministic function of~$\R_{t_0:t_1}^{(i)}$.
        Therefore, since~$\{t_0+1,\ldots,t_1\} \cap \{t_1+1,\ldots,t_2\} = \emptyset$, $\R_{t_1:t_2}^{(i)}$ is independent from $\tilde{Y}_s^{(i)}$.
        
        \item[(ii)] Let~$j \neq i$ and~$s \in \{t_1+1,\ldots,t_2\}$. By \Cref{lemma:correctness_weak}.(i), there are two rounds~$t_1',t_2'$ s.t. $\tilde{Y}_{s}^{(j)}$ is a deterministic function of~$\R_{t_1':t_2'}^{(j)}$. Therefore, since~$i \neq j$, $\R_{t_1:t_2}^{(i)}$ is independent from $\tilde{Y}_s^{(j)}$.
    \end{itemize}
    Together, (i) and (ii) imply that~$\R_{t_1:t_2}^{(i)}$ is independent from~$A$.
    Therefore, conditioning on~$A$, the probability that a given message received by Agent~$i$ between rounds~$t_1$ and~$t_2$ has a second bit equal to~$1$, is at least (after noise is applied)
    \begin{equation*}
        p := (1-2\delta) \pa{\frac{1}{2} + c_1 \sqrt{\frac{\log n}{n}}} + 2\delta \pa{\frac{1}{2} - 2 \sqrt{\frac{\log n}{n}}} = \frac{1}{2} + 2 \sqrt{\frac{\log n}{n}} (1-4\delta).
    \end{equation*}
    Denoting by~$B_{m,p}$ a binomial random variable with parameters~$m$ and~$p$, this implies that
    \begin{align*}
        \Pr \pa{Y_t^{(i)} = 1 \mid A} &\geq \Pr \pa{ B_{m,p} > \frac{m}{2}} \\
        &=  \Pr \pa{ B_{m,p} > \bbE(B_{m,p}) - m \cdot 2 \sqrt{\frac{\log n}{n}} \,(1-4\delta)} \\
        &\geq 1-\exp\pa{-\frac{2}{n} \pa{m \cdot 2 \sqrt{\frac{\log n}{n}} \, (1-4\delta)}^2} & \text{(by \nameref{thm:additive_chernoff_bound})}\\
        &\geq 1-\exp\pa{-8 \, \log n }. & \text{($m\geq \tfrac{n}{1-4\delta}$ by assumption in \Cref{eq:assumption_m_SSF})}
    \end{align*}
    Finally, by \Cref{lemma:weak_opinion_strong_correctness} and since~$t_1 \geq 2 \lceil m/h \rceil$,
    \begin{align*}
        \Pr \pa{Y_t^{(i)} = 1} = \Pr(A) \cdot \Pr \pa{Y_t^{(i)} = 1 \mid A} &\geq \pa{1-\frac{1}{n^7}} \cdot \pa{1-\frac{1}{n^8}} \\
        &\geq 1-\frac{1}{n^6}
    \end{align*}
    for $n$ large enough, which concludes the proof of \Cref{lemma:ssf_second_phase}.
\end{proof}

Finally, we can check that a consensus on the correct opinion happens for polynomially many rounds with high probability.
\begin{lemma} \label{lemma:end_of_SSF}
    For every~$t_0 \geq 3\lceil m/h\rceil$, and any polynomial~$P=\mathcal{O}(n^3)$, all opinions are correct between rounds~$t_0$ and~$t_0+P(n)$ w.h.p., i.e., for~$n$ large enough:
    \begin{equation*}
        \Pr \pa{ \bigcap_{t = t_0}^{t_0+P(n)} \bigcap_{i \in I} Y_t^{(i)} = 1 } \geq 1 - \frac{1}{n^2}.
    \end{equation*}
\end{lemma}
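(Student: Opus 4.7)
The plan is to apply a straightforward union bound based on \Cref{lemma:ssf_second_phase}. That lemma already establishes that, for every single round $t \geq 3\lceil m/h\rceil$ and every single agent $i \in I$, the opinion $Y_t^{(i)}$ is correct with probability at least $1 - 1/n^6$. Since the target window $\{t_0, \ldots, t_0 + P(n)\}$ starts no earlier than round $3\lceil m/h\rceil$, \Cref{lemma:ssf_second_phase} applies uniformly throughout.

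The key step is then to bound
\begin{equation*}
    \Pr\pa{ \bigcup_{t=t_0}^{t_0+P(n)} \bigcup_{i \in I} \left\{ Y_t^{(i)} = 0 \right\} } \leq (P(n)+1) \cdot n \cdot \frac{1}{n^6}
\end{equation*}
by the union bound. Since $P(n) = \mathcal{O}(n^3)$, the right-hand side is $\mathcal{O}(n^4)/n^6 = \mathcal{O}(1/n^2)$, which for $n$ large enough is at most $1/n^2$. Taking the complement yields the desired lower bound on the probability that all opinions are correct throughout the interval.

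The argument is routine; the only subtle aspect is that we are not claiming independence of the events $\{Y_t^{(i)} = 1\}$ across agents or rounds, and we do not need to -- the union bound requires no such independence, and the polynomial length of the window together with the $1/n^6$ tail from \Cref{lemma:ssf_second_phase} leaves enough slack for the final $1/n^2$ bound. There is no genuine obstacle here; the heavy lifting has already been done in proving \Cref{lemma:ssf_second_phase}, which in turn relied on the self-stabilizing property that after $2\lceil m/h\rceil$ rounds the adversarial initial messages have been flushed from memory.
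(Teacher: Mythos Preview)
Your proposal is correct and matches the paper's own proof essentially verbatim: the paper also invokes \Cref{lemma:ssf_second_phase} and a union bound over all $n$ agents and all $P(n)+1$ rounds, noting that $1 - n \cdot P(n) \cdot \frac{1}{n^6} \geq 1 - \frac{1}{n^2}$ for $n$ large enough.
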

\begin{proof}
    The statement follows directly from \Cref{lemma:ssf_second_phase} and a union bound over all agents and all rounds between~$t_0$ and~$t_0+P(n)$, given that for~$n$ large enough,
    \begin{equation*}
        1 - n \cdot P(n) \cdot \frac{1}{n^6} \geq 1 - \frac{1}{n^2}. 
    \end{equation*}
\end{proof}

\subsubsection{\texorpdfstring{Proof of \Cref{thm:self}}{Proof of Theorem 5}} \label{sec:final_proof_SSF}

    
Let~$N$ be an arbitrary~$\delta$-upper bounded noise matrix of size~$d=4$, and let
\begin{equation*}
    m = 2916 \, c_1 \cdot \frac {\delta n\log n}{ (1-4\delta)^2}+n,
\end{equation*}
where~$c$ is the constant in \Cref{eq:assumption_m_SSF}.
Note that no attempt was made to optimize the additional constant factor of~$2916$; our focus is primarily on the asymptotic behavior of the dynamics.
By \Cref{thm:reduction}, there exists a matrix~$P$ s.t. the simulation of Algorithm \SSF with artificial noise~$P$ under~$N$ is equivalent to Algorithm \SSF under a $\delta'$-uniform noise, where 
\begin{equation*}
    \delta' = \pa{4 + \frac{1}{54} \cdot \frac{1 - 4 \, \delta}{\delta}}^{-1}.
\end{equation*}
Writing~$A = {\delta'}^{-1}$, we have
\begin{equation*}
    \frac{(1-4\delta')^2}{\delta'} = A \cdot \pa{1-\frac{4}{A}}^2 = \frac{(A-4)^2}{A} = \frac{ \left[ \tfrac{1}{54}\tfrac{1-4\delta}{\delta} \right]^2}{4 + \tfrac{1}{54}\tfrac{1-4\delta}{\delta}} = \frac{1}{54^2 \, \delta} \cdot \frac{(1-4\delta)^2}{4\delta + \tfrac{1}{54} (1-4\delta)} \geq \frac{1}{2916} \cdot \frac{(1-4\delta)^2}{\delta},
\end{equation*}
where in the last inequality we have used the fact that~$\delta \in [0,1/4)$ and so~$4\delta + \tfrac{1}{54}(1-4\delta) \leq 1$.
Hence,
\begin{equation*}
    m \geq c_1 \cdot \frac {\delta' n\log n}{ (1-4\delta')^2}+n.
\end{equation*}
Therefore, $m$ satisfies \Cref{eq:assumption_m_SSF} with~$\delta'$, and \Cref{thm:self} follows from our analysis under uniform noise (see \Cref{lemma:end_of_SSF}).

\appendix
\onecolumn


\begin{center}
    \huge Appendix
\end{center}

\section{Pseudo-codes}


\begin{algorithm}[htbp]
    Communication alphabet~$\Sigma := \{ 0,1 \}$ \;
    %
    \alginput{Sample size~$h$, Parameter~$m$, phase duration $T := \lceil m/h \rceil$, sub-phase duration $w = \frac{100}{(1-2\delta)^2}$}
    %
    $\counter_0$, $\counter_1 \leftarrow 0$ \;
    \BlankLine
    
    \textbf{Phase 0:}
    \For{ $t=1,\ldots, T$ }{
        \lIf{Agent~$i$ is a source}{Display preference {\bf else} Display 0}
        S $\leftarrow$ observe the messages of $h$ random agents \;
        $\counter_1$ $\leftarrow$ $\counter_1$ + Number of $1$-messages in $S$ \;
    }
    \BlankLine
    \textbf{Phase 1:}
    \For{ $t=T+1,\ldots, 2T$ }{
        \lIf{Agent~$i$ is a source}{Display preference {\bf else} Display 1}
        S $\leftarrow$ observe the messages of $h$ random agents \;
        $\counter_0$ $\leftarrow$ $\counter_0$ + Number of $0$-messages in $S$ \;
    }
    \BlankLine
    $\Y^{(i)} \leftarrow \mathds{1}\{ \counter_1 > \counter_0 \}$ (breaking ties randomly) \tcp*{Weak opinion}
    $Y^{(i)} \leftarrow \Y^{(i)}$ \tcp*{Opinion}
    $M \leftarrow \emptyset$ \tcp*{Memory}
    $\ell \leftarrow 0$ \tcp*{Index of the sub-phase}
    \BlankLine
    \textbf{Majority Boosting phase:}
    \While{ $\ell \leq 10 \, \log n + 1$ }{
        Display opinion $Y^{(i)}$ \;
        S $\leftarrow$ observe the messages of $h$ random agents \;
        Add all messages of~$S$ into~$M$ \;
        \If{$\ell \leq 10 \, \log n$ and $|M| \geq w$ {\bf or} $\ell = 10 \, \log n+1$ and $|M| \geq m$ }{
            \tcp{All sub-phases (except the final one) last $\lceil w/h \rceil$ rounds}
            $Y^{(i)}\leftarrow$ majority opinion in~$M$ (breaking ties randomly) \;
            $M \leftarrow \emptyset$ \;
            $\ell \leftarrow \ell +1$ \;
        } 
    }
    \caption{Source Filter (\SF) for Agent~$i$}
    \label{alg:SF}
\end{algorithm}


\begin{algorithm}[H]
    Communication alphabet~$\Sigma := \{ 0,1 \}^2$ \;
    %
    \alginput{Sample size~$h$, Parameter~$m$}
    %
    S $\leftarrow$ observe the messages of $h$ random agents \;
    $M_{t+1}^{(i)} \leftarrow M_t^{(i)} + S$ \;

    \If{$|M_{t+1}^{(i)}| \geq m$} {
        $\tilde{Y}_{t+1}^{(i)} \leftarrow \underset{x \in \{0,1\}}{\argmax} ~ \{$ Number of occurrences of $(1,x)$ in $ M_{t+1}^{(i)} \}$ \quad (Breaking ties randomly) \; 
        $Y_{t+1}^{(i)} \leftarrow \underset{x \in \{0,1\}}{\argmax} ~ \{$ Number of occurrences of $(\ast, x)$ in $ M_{t+1}^{(i)} \}$ \quad (Breaking ties randomly) \; 
        $M_{t+1}^{(i)} \leftarrow \emptyset$ \;
    } \Else{
        $\tilde{Y}_{t+1}^{(i)} \leftarrow \tilde{Y}_{t}^{(i)}$ \;
        $Y_{t+1}^{(i)} \leftarrow Y_{t}^{(i)}$ \;
    }
    \BlankLine
    \lIf{Agent~$i$ is a source}{
        Display~$(1,\text{preference})$ {\bf else} Display~$\pa{0,\tilde{Y}_t^{(i+1)}}$
    }

    \hrulefill
    
    {\em Variables are indexed by~$i$ and $t$ for analysis purposes, but agents do not need to know the round number, nor do they need identifiers.}
    \BlankLine
    \caption{Self-stabilizing Source Filter (\SSF) for Agent~$i$ in round~$t$}
    \label{alg:SSF}
\end{algorithm}

\section{Probabilistic Tools} \label{sec:probabilistic_tools}

\begin{theorem} [Multiplicative Chernoff's bound] \label{thm:multiplicative_chernoff_bound}
    Let~$X_1,\ldots,X_n$ be i.i.d. random variables taking values in $\{0,1\}$, let~$X = \sum_{i=1}^n X_i$ and~$\mu = \bbE(X) = n \Pr(X_1 = 1)$. Then it holds for all~$\delta \in (0,1)$ that
    \begin{equation*}
        \Pr \pa{ X \leq (1 - \delta) \mu } \leq \exp\pa{-\frac{\delta^2 \, \mu}{2}}.
    \end{equation*}
\end{theorem}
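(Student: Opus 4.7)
The plan is to use the standard Chernoff-bound template based on the exponential moment method applied to the negative deviation. I would first write, for any $t > 0$,
\begin{equation*}
    \Pr(X \leq (1-\delta)\mu) = \Pr\!\left(e^{-tX} \geq e^{-t(1-\delta)\mu}\right) \leq e^{t(1-\delta)\mu} \cdot \bbE\!\left[e^{-tX}\right],
\end{equation*}
by Markov's inequality, and then exploit independence to factor $\bbE[e^{-tX}] = \prod_{i=1}^n \bbE[e^{-tX_i}]$.

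Next I would bound each factor. Writing $p = \Pr(X_i=1)$, one has $\bbE[e^{-tX_i}] = 1 - p + pe^{-t} = 1 + p(e^{-t}-1)$, which by the inequality $1+x \leq e^x$ is at most $\exp(p(e^{-t}-1))$. Multiplying and using $\mu = np$ gives $\bbE[e^{-tX}] \leq \exp(\mu(e^{-t}-1))$, hence
\begin{equation*}
    \Pr(X \leq (1-\delta)\mu) \leq \exp\!\left(\mu(e^{-t}-1) + t(1-\delta)\mu\right).
\end{equation*}
I would then optimize over $t > 0$: differentiating the exponent shows the minimum is attained at $e^{-t} = 1-\delta$, i.e.\ $t = -\ln(1-\delta) > 0$ for $\delta \in (0,1)$. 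Substituting yields the classical bound
\begin{equation*}
    \Pr(X \leq (1-\delta)\mu) \leq \left(\frac{e^{-\delta}}{(1-\delta)^{1-\delta}}\right)^{\!\mu}.
\end{equation*}

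The last step is to simplify this into the stated form by a short analytic estimate. I would verify the inequality $(1-\delta)\ln(1-\delta) \geq -\delta + \delta^2/2$ for $\delta \in (0,1)$ (e.g.\ by Taylor expansion of $\ln(1-\delta)$ and a sign-of-derivative argument on the difference), which rearranges to $-\delta - (1-\delta)\ln(1-\delta) \leq -\delta^2/2$. Taking exponentials of $\mu$ times this inequality gives exactly $\exp(-\delta^2 \mu / 2)$, matching the claimed bound. The only conceptually delicate step is this final analytic inequality; everything else is routine application of Markov and independence. No obstacle is expected beyond carefully checking that the optimizer $t = -\ln(1-\delta)$ is positive (which uses $\delta \in (0,1)$) so that Markov applied to $e^{-tX}$ is legitimate.
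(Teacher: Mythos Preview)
Your proof is correct and is the standard exponential-moment derivation of the multiplicative Chernoff bound. Note, however, that the paper does not actually prove this statement: it is listed in the appendix on probabilistic tools without proof, treated as a well-known fact. So there is no ``paper's own proof'' to compare against; your argument simply supplies what the paper omits.
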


The proof of the following result can be found in, e.g., \cite{dubhashi1996balls}.
\begin{theorem} [Chernoff-Hoeffding's bound] \label{thm:additive_chernoff_bound}
    Let $X_1,\dots,X_n$ be independent random variables such that $a_i\leq X_i\leq b_i$ almost surely, let~$X = \sum_{i=1}^n X_i$ and~$\mu = \bbE(X)$.
    Then, for all $\delta>0$,
    \begin{equation*}
        \Pr\pa{X\leq \mu - \delta}, \Pr\pa{X\geq \mu + \delta} \leq \exp\pa{-\frac{2\delta^2}{\sum_{i=1}^n (b_i-a_i)^2}}
    \end{equation*}
    In particular, if~$X_1,\ldots,X_n$ take values in $\{0,1\}$, then
    \begin{equation*}
        \Pr \pa{ X \leq \mu - \delta } , \Pr \pa{ X \geq \mu + \delta } \leq \exp \pa{ -\frac{ 2 \delta^2}{n} }.
    \end{equation*}
\end{theorem}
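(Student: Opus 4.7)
The plan is to establish the bound via the classical Chernoff–Hoeffding exponential moment method, which proceeds in three stages: an exponential Markov inequality, a moment generating function bound for each bounded summand (Hoeffding's lemma), and an optimization over the Chernoff parameter.

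First, I would apply the Chernoff trick. For any $t > 0$,
\begin{equation*}
\Pr(X \geq \mu + \delta) = \Pr(e^{t(X-\mu)} \geq e^{t\delta}) \leq e^{-t\delta}\, \bbE[e^{t(X-\mu)}],
\end{equation*}
and by independence $\bbE[e^{t(X-\mu)}] = \prod_{i=1}^n \bbE[e^{t(X_i - \bbE X_i)}]$. This reduces the task to bounding the moment generating function of each centered variable $Z_i := X_i - \bbE X_i$, which satisfies $\bbE Z_i = 0$ and $Z_i \in [a_i - \bbE X_i, b_i - \bbE X_i]$, an interval of width exactly $b_i - a_i$.

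The main technical obstacle is Hoeffding's lemma, which I would prove next: if $Z$ is a centered random variable with $Z \in [a,b]$ almost surely, then $\bbE[e^{tZ}] \leq \exp(t^2(b-a)^2/8)$. The standard route uses convexity of $x \mapsto e^{tx}$ on $[a,b]$, which gives the pointwise bound
\begin{equation*}
e^{tx} \leq \frac{b-x}{b-a}\, e^{ta} + \frac{x-a}{b-a}\, e^{tb}.
\end{equation*}
Taking expectations and using $\bbE Z = 0$ yields $\bbE[e^{tZ}] \leq \frac{b}{b-a} e^{ta} - \frac{a}{b-a} e^{tb}$. Writing this upper bound as $e^{\varphi(t)}$, one verifies directly that $\varphi(0) = \varphi'(0) = 0$, and that $\varphi''(s) \leq (b-a)^2/4$ uniformly in $s$ (this last inequality reduces, after a substitution, to $u(1-u) \leq 1/4$). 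Taylor's theorem with remainder then gives $\varphi(t) \leq t^2 (b-a)^2/8$, which is Hoeffding's lemma. This curvature estimate for $\varphi''$ is the one computation I expect to require some care; everything else is bookkeeping.

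Combining the two previous steps,
\begin{equation*}
\Pr(X \geq \mu + \delta) \leq \exp\!\left(-t\delta + \frac{t^2}{8} \sum_{i=1}^n (b_i - a_i)^2\right),
\end{equation*}
and this is a quadratic in $t$ minimized at $t^\star = 4\delta / \sum_i (b_i - a_i)^2$, producing exactly the stated bound $\exp(-2\delta^2 / \sum_i (b_i-a_i)^2)$. The lower-tail bound $\Pr(X \leq \mu - \delta)$ follows by applying the identical argument to the variables $-X_i$, which lie in intervals $[-b_i, -a_i]$ of the same width. Finally, the ``in particular'' clause is immediate: if each $X_i \in \{0,1\}$ then $b_i - a_i = 1$, so $\sum_i (b_i - a_i)^2 = n$ and the denominator in the exponent reduces to $n$.
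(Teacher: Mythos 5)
Your proof is correct: the exponential-moment (Chernoff) reduction, Hoeffding's lemma via convexity and the curvature bound $\varphi''\leq (b-a)^2/4$, the optimization at $t^\star = 4\delta/\sum_i(b_i-a_i)^2$, and the symmetric argument for the lower tail are all standard and carried out accurately. The paper does not prove this statement at all --- it is stated as a known tool with a citation to the literature --- and your argument is precisely the classical proof one finds in that reference, so there is nothing to compare beyond noting that you have supplied the standard derivation the paper takes for granted.
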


\bibliographystyle{unsrt}
\bibliography{references}
\end{document}